\documentclass[11pt]{amsart}
\usepackage{mathrsfs}
\usepackage{amssymb}
\usepackage{amsmath}
\usepackage{amscd}
\newtheorem{theorem}{Theorem}[section]
\newtheorem{prop}[theorem]{Proposition}
\newtheorem{defn}[theorem]{Definition}
\newtheorem{lemma}[theorem]{Lemma}
\newtheorem{coro}[theorem]{Corollary}
\newtheorem{prop-def}{Proposition-Definition}[section]

\newtheorem{remark}[theorem]{Remark}

\newtheorem{exam}[theorem]{Example}

\textheight 23cm \textwidth 16cm \topmargin -0.8cm
\begin{document}
\setlength{\oddsidemargin}{0cm} \setlength{\evensidemargin}{0cm}

\title{Pre-alternative algebras and pre-alternative bialgebras}

\author{Xiang Ni}

\address{Chern Institute of Mathematics \& LPMC, Nankai
University, Tianjin 300071, P.R.
China}\email{xiangn$_-$math@yahoo.cn}

\author{Chengming Bai}

\address{Chern Institute of Mathematics \& LPMC, Nankai University,
Tianjin 300071, P.R. China} \email{baicm@nankai.edu.cn}

\def\shorttitle{Pre-alternative algebras and pre-alternative bialgebras}

\begin{abstract}

We introduce a notion of pre-alternative algebra which may be seen
as an alternative algebra whose product can be decomposed into two
pieces which are compatible in a certain way. It is also the
``alternative" analogue of a dendriform dialgebra or a pre-Lie
algebra. The left and right multiplication operators of a
pre-alternative algebra give a bimodule structure of the associated
alternative algebra. There exists a (coboundary) bialgebra theory
for pre-alternative algebras, namely, pre-alternative bialgebras,
which exhibits all the familiar properties of the famous Lie
bialgebra theory. In particular, a pre-alternative bialgebra is
equivalent to a phase space of an alternative algebra and our study
leads to what we called $PA$-equations in a pre-alternative algebra,
which are analogues of the classical Yang-Baxter equation.
\end{abstract}

\subjclass[2000]{17D05, 17A30, 16W30 }

\keywords{Alternative algebra, pre-alternative algebra,
pre-alternative bialgebra, classical Yang-Baxter equation}

\maketitle

\tableofcontents \setcounter{section}{0}

\baselineskip=18pt

\section{Introduction}
\setcounter{equation}{0}
\renewcommand{\theequation}
{1.\arabic{equation}}

 A {\it dendriform dialgebra} is a vector
space $D$ together with two bilinear products denoted by
$\prec,\succ: D\otimes D\rightarrow D$ such that (for any $x,y,z\in
D$)
\begin{equation}
(x\prec y)\prec z=x\prec(y\circ z),\;(x\succ y)\prec z=x\succ(y\prec
z),\; (x\circ y)\succ z=x\succ(y\succ z),
\end{equation}
where $x\circ y=x\prec y+x\succ y$. The notion of dendriform
dialgebra was introduced by J.-L. Loday  in 1995 as
 the (Koszul) dual of the associative dialgebra, which is related to periodicity phenomenons
 in algebraic K-theory (\cite{L1}). It was further studied in connection with
 several areas in mathematics and physics, including operads (\cite{L3}), homology (\cite{Fra1}, \cite{Fra2}),
 Hopf algebras (\cite{Cha}, \cite{Ho}, \cite{Ron}), Lie and Leibniz algebras (\cite{Fra2}),
 combinatorics (\cite{AS1}, \cite{AS2}), arithmetic (\cite{L2}) and
 quantum field theory (\cite{EG}). We recommend \cite{L1} as a beautiful introduction and motivation
 of this subject.

 Moreover, for any dendriform dialgebra $(D,\prec,\succ)$, the bilinear product $\circ$ defines
 an associative algebra.
 Thus, a dendriform dialgebra may be seen as an associative algebra whose
 multiplication
 can be decomposed into two coherent operations.

Furthermore, we re-examine the identity (1.1) in the following way.
 Let $A$ be a vector space together with
two products denoted by $\prec,\succ: A\otimes A\rightarrow A$. The
{\it right associator (r-associator), middle associator
(m-associator)} and {\it left associator (l-associator)} are defined
respectively by
\begin{equation}
(x,y,z)_{r}=(x\prec y)\prec z-x\prec(y\circ z);
\end{equation}
\begin{equation}
(x,y,z)_{m}=(x\succ y)\prec z-x\succ(y\prec z);
\end{equation}
\begin{equation}
(x,y,z)_{l}=(x\circ y)\succ z-x\succ(y\succ z),\;\forall x,y,z\in A,
\end{equation}
where $x\circ y=x\prec y+x\succ y $. So $(D,\prec,\succ)$ is a
dendriform dialgebra if and only if all the above three
``associators" are zero.

 On the other hand, {\it alternative algebras} are a class of very important nonassociative
algebras (\cite{KS}, [S1-2]), which are closely related to {\it Lie
algebras} (\cite{S2}), {\it Jordan algebras} (\cite{J}) and {\it
Malcev algebras} (\cite{KS}), and so on. Motivated by the relations
between associative algebras and alternative algebras (see the
details in section 2), it is natural to consider certain algebra
structure on an alternative algebra as an analogue of an dendriform
dialgebra on an associative algebra. So we introduce a notion of
pre-alternative algebra, which is one of the main objects in this
paper. Just as an alternative algebra is a generalization of an
associative algebra which weakens the condition of associativity, a
pre-alternative algebra is a generalization of a dendriform
dialgebra which weakens the conditions of $l$-associativity,
$m$-associativity and $r$-associativity.

We would like to point out that there has already been a ``Lie
algebraic" version for relations between associative algebras and
dendriform dialgebras. That is, a class of nonassociative algebras,
namely pre-Lie algebras (or under other names like left-symmetric
algebras, Vinberg algebras and so on, see a survey article \cite{Bu}
and the references therein) play a similar role of dendriform
dialgebras. Therefore, in this sense, pre-alternative algebras are
just ``alternative" analogues of pre-Lie algebras or dendriform
dialgebras.

Furthermore, Goncharov constructed a bialgebra theory for
alternative algebras in \cite{G}, namely, alternative D-bialgebras.
In this paper, we show that there exists a (coboundary) bialgebra
theory for pre-alternative algebras, namely, pre-alternative
bialgebras, which exhibits all the familiar properties of the famous
Lie bialgebra theory of Drinfeld (\cite{D}). As well as an
alternative D-bialgebra is equivalent to an ``alternative analogue"
of Manin triple (\cite{G}, \cite{CP}), a pre-alternative bialgebra
is equivalent to a phase space of an alternative algebra (\cite{K1},
\cite{Bai1}). In particular there also exists a ``Drinfeld double"
construction for a pre-alternative bialgebra which is previously
unexpected. Moreover, There is also a clear analogue between
alternative D-bialgebras and pre-alternative bialgebras. On the
other hand, we would like to emphasis that the representation theory
 of alternative algebras and pre-alternative algebras  play
 an essential role in establishing the bialgebra theories.
We also would like to point out that both  alternative D-bialgebras
and pre-alternative
 bialgebras can be fit into the  general framework of
 {\it  generalized bialgebras} as introduced by Loday in \cite{L4}.
 So it would be interesting to
 find the relations with Loday's question, that is, to find {\it good triples of
 operads} (\cite{L4}).

The paper is organized as follows. In section 2, we study bimodules
of alternative algebras and introduce various methods to construct
pre-alternative algebras. In section 3, we generalize the notion of
phase space in mathematical physics (\cite{K1}) to the realm of
alternative algebras, and show that pre-alternative algebras are the
natural underlying structures.  In section $4$, we define and study
bimodules and matched pairs of pre-alternative algebras. In section
$5$, we introduce the notion of pre-alternative bialgebra which is
equivalent to a phase spaces of an alternative algebra. In section
$6$, we show that there is a reasonable coboundary (pre-alternative)
bialgebra theory and what we study leads to what we call
$PA$-equations. In section $7$, the properties of $PA$-equations are
discussed. In Appendix, we prove the main results in \cite{G} by a
little different approach and we point out that there is a
Drinfeld's double construction for an alternative $D$-bialgebra,
which was not given in \cite{G}.

Throughout this paper, all the algebras are finite-dimensional over
a fixed base field $\textbf{k}$ whose characteristic is not $2$. We
give some notations as follow.

(1) Let $V$ be a vector space.  Let $\mathfrak{B}:V\otimes
V\rightarrow\mathbb F$ be a symmetric or skew-symmetric bilinear
form on a vector space $V$. If $W$ is a subspace of $V$, then we
define
\begin{equation}
W^{\perp}=\{x\in V|\mathfrak{B}(x,y)=0,\forall y\in
W\}.\end{equation} If $W\subset W^{\perp}$ ($W=W^{\perp}$
respectively), then $W$ is called {\it isotropic} ({\it Lagrangian}
respectively).

(2) Let $(A,\diamond)$ be a vector space with a binary operation
$\diamond: A\otimes A\rightarrow A$. Let $\mathfrak{l}_\diamond(x)$
and $\mathfrak{r}_\diamond(x)$ denote the left and right
multiplication operator respectively, that is,
$\mathfrak{l}_\diamond(x)y=\mathfrak{r}_\diamond(y)x=x\diamond y$
for any $x,y\in A$. We also simply denote them by $\mathfrak{l}(x)$
and $\mathfrak{r}(x)$ respectively without confusion. Moreover, let
$\mathfrak{l}_\diamond, \mathfrak{r}_\diamond:A\rightarrow gl(A)$ be
two linear maps with $x\rightarrow \mathfrak{l}_\diamond(x)$ and
$x\rightarrow \mathfrak{r}_\diamond(x)$ respectively.

(3) Let $V$ be a vector space and $r=\sum_i{a_i\otimes b_i}\in
V\otimes V$. Set
\begin{equation}
r_{12}=\sum_ia_i\otimes b_i\otimes 1,\quad r_{13}=\sum_{i}a_i\otimes
1\otimes b_i,\quad r_{23}=\sum_i1\otimes a_i\otimes b_i,
\end{equation}
where $1$ is a scale. If in addition, there exists a binary
operation $\diamond: V\otimes V\rightarrow V$ on $V$, then the
operation between two $r$s is in an obvious way. For example,
\begin{equation}
r_{12}\diamond r_{13}=\sum_{i,j}a_i\diamond a_j\otimes b_i\otimes
b_j,\; r_{13}\diamond r_{23}=\sum_{i,j}a_i\otimes a_j\otimes
b_i\diamond b_j,\;r_{23}\diamond r_{12}=\sum_{i,j}a_j\otimes
a_i\diamond  b_j\otimes b_i.\end{equation} and so on.

(4) Let $V$ be a vector space. Let $\sigma: V\otimes V \rightarrow
V\otimes V$ be the {\it flip} defined as
\begin{equation}\sigma(x \otimes y) = y\otimes x,\quad
\forall x, y\in V.
\end{equation}
For any $r\in V\otimes V$, $r$ is called {\it symmetric} ({\it
skew-symmetric} respectively) if $r=\tau (r)$ ($r=-\tau(r)$
respectively). On the other hand, any $r\in V\otimes V$ can be
identified as a linear map $T_r: V^*\rightarrow V$ in the following
way:
\begin{equation}
\langle u^*\otimes v^*,\ r\rangle=\langle u^*,T_r(v^*)
\rangle,\quad \forall u^*, v^*\in V^*,
\end{equation}
where $\langle, \rangle$ is the canonical paring between $V$ and
$V^*$. $r\in V\otimes V$ is called {\it nondegenerate} if the above
induced linear map $T_r$ is invertible. Moreover, any invertible
linear map $T:V^*\rightarrow V$ can induce a nondegenerate bilinear
form $\mathfrak{B}( , )$ on $V$ by
\begin{equation}
\mathfrak{B}(u,v)=\langle T^{-1}u, v\rangle,\;\;\forall u,v\in V.
\end{equation}
Furthermore, $T$ is called {\it symmetric} ({\it skew-symmetric}
respectively) if the induced bilinear form ${\mathfrak{B}}$ is
symmetric (skew-symmetric respectively). Obviously, the symmetry or
skew-symmetry of both $T$ and its corresponding $r\in V\otimes V$
coincides.

(5) Let $V_1,V_2$ be two vector spaces and $T:V_1\rightarrow V_2$ be
a linear map. Denote the dual (linear) map by $T^*:V_2^*\rightarrow
V_1^*$ defined by
\begin{equation}
\langle  v_1,T^*(v_2^*)\rangle   =\langle  T(v_1),v_2^*\rangle
,\;\;\forall v_1\in V_1, v_2^*\in V_2^*.
\end{equation}
On the other hand,  $T$ can be identified as an element $r_T\in
V_2\otimes V_1^*$ by
\begin{equation}
\langle r_T, v_2^*\otimes v_1\rangle =\langle T(v_1),
v_2^*\rangle,\;\;\forall v_1\in V_1, v_2^*\in V_2^* .\end{equation}
Note that equation (1.9) is exactly the case that $V_1=V_2^*$.
Moreover, in the above sense, any linear map $T:V_1\rightarrow V_2$
is obviously  an element in $(V_2\oplus V_1^*)\otimes (V_2\oplus
V_1^*)$.

(6) Let $A$ be an algebra and $V$ be a vector space. For any linear
map $\rho: A\rightarrow gl(V)$, define a linear map $\rho^* : A
\rightarrow gl(V^*)$ by
\begin{equation}
\langle \rho^*(x)v^*, u\rangle = \langle v^*, \rho(x)u\rangle,\
\forall x \in A, u\in V, v^*\in V^*.
\end{equation}
Note that in this case, $\rho^*$ is different from the one given by
equation (1.11) which regards $gl(V)$ as a vector space, too.

(7) Let $V_1$ and $V_2$ be two vector spaces, we denote the elements
of $V_1\oplus V_2$ by $u+v$ or $(u,v)$ for $u\in V_1, v\in V_1$.

(8) Let $V$ be a vector space, we sometimes use 1 to denote the
identity transformation of $V$.

\section{Representation theory of alternative algebras and pre-alternative algebras}

\setcounter{equation}{0}
\renewcommand{\theequation}
{2.\arabic{equation}}

\begin{defn} {\rm  An {\it alternative algebra} $(A,\circ)$ is a vector space $A$ equipped with a
bilinear operation $(x,y)\rightarrow x\circ y$ satisfying
\begin{equation}
(x,x,y)=(y,x,x)=0,\;\;\forall x,y,z\in A,
\end{equation}
where $(x,y,z)=(x\circ y)\circ z-x\circ(y\circ z)$ is the {\it
associator}.} \end{defn}

\begin{remark}{\rm If the characteristic of the field is not 2, then an alternative
algebra $(A,\circ)$ also satisfies the stronger axioms:
\begin{equation}
(x_1,x_2,y)+(x_2,x_1,y)=0,\;\;(y,x_1,x_2)+(y,x_2,x_1)=0,\;\;\forall
x_1,x_2, y\in A.
\end{equation}
}
\end{remark}

\begin{defn}{\rm (\cite{S1})\quad Let $(A,\circ)$ be an alternative algebra and $V$ be a vector
space. Let $L,R: A\rightarrow gl(V)$ be two linear maps. $V$ (or the
pair $(L,R)$, or $(V,L,R)$) is called a {\it representation} or a
{\it bimodule} of $A$ if (for any $x,y\in A$)
\begin{equation}
L(x^2)=L(x)L(x),\;\; R(x^2)=R(x)R(x),
\end{equation}
\begin{equation}
R(y)L(x)-L(x)R(y)=R(x\circ y)-R(y)R(x),\; L(y\circ
x)-L(y)L(x)=L(y)R(x)-R(x)L(y).
\end{equation}}
\end{defn}

According to \cite{S3}, $(L,R)$ is a bimodule of an alternative
algebra $(A,\circ)$ if and only if the direct sum $A\oplus V$ of
vector spaces is turned into an alternative algebra (the {\it
semidirect sum}) by defining multiplication in $A\oplus V$ by
\begin{equation}
(x_1+v_1)\ast(x_2+v_2)=x_1\circ x_2+(L(x_1)v_2+R(x_2)v_1),
\;\;\forall x_1,x_2\in A,v_1, v_2\in V. \end{equation}
 We denote it by $A\ltimes_{L,R}V$ or simply $A\ltimes V$.

\begin{prop}
 If $(V,L,R)$ is a bimodule of an
alternative algebra $(A,\circ)$, then $(V^*,R^*,L^*)$ is a bimodule
of $(A,\circ)$.
\end{prop}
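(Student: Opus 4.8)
The plan is to verify directly that the pair $(R^*, L^*)$, where the stars denote the dual maps in the sense of equation (1.14), satisfies the four defining identities (2.3)--(2.4) of a bimodule. Concretely, I would unwind each identity by pairing it against an arbitrary $u \in V$ and $v^* \in V^*$ and push everything back to the original bimodule $(V,L,R)$ using $\langle R^*(x)v^*, u\rangle = \langle v^*, R(x)u\rangle$ and $\langle L^*(x)v^*, u\rangle = \langle v^*, L(x)u\rangle$.

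For the two "square" identities in (2.3): the claim that $(R^*,L^*)$ is a bimodule requires $R^*(x^2) = R^*(x)R^*(x)$ and $L^*(x^2) = L^*(x)L^*(x)$. Pairing the first against $u$ and $v^*$, the left side becomes $\langle v^*, R(x^2)u\rangle$ and the right side becomes $\langle R^*(x)v^*, R^*(x)u\rangle$... but one must be careful: $\langle R^*(x)R^*(x)v^*, u\rangle = \langle R^*(x)v^*, R(x)u\rangle = \langle v^*, R(x)R(x)u\rangle$, which equals $\langle v^*, R(x^2)u\rangle$ by the first identity in (2.3) for $(L,R)$. The $L^*$ case is symmetric. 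So these two are immediate once the pairing is set up; note the order of composition is preserved under dualization because $(ST)^* = T^* S^*$, and here both factors are the same operator, so no reversal issue arises.

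The mildly delicate part is the pair (2.4). For $(R^*, L^*)$ to be a bimodule, reading (2.4) with $L \mapsto R^*$ and $R \mapsto L^*$, I need
\begin{equation}
L^*(y)R^*(x) - R^*(x)L^*(y) = L^*(x\circ y) - L^*(y)L^*(x), \quad R^*(y\circ x) - R^*(y)R^*(x) = R^*(y)L^*(x) - L^*(x)R^*(y).
\end{equation}
Pairing the first of these against $u, v^*$ and using that dualization reverses composition ($(ST)^* = T^*S^*$), each term $L^*(y)R^*(x)$ contributes $\langle v^*, R(x)L(y)u\rangle$ and so on; after collecting, the required identity becomes exactly the second identity in (2.4) for $(L,R)$ — namely $R(y)L(x) - L(x)R(y) = R(x\circ y) - R(y)R(x)$ — read with the roles of the arguments relabelled and the whole thing transported through the nondegenerate pairing. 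Similarly the second displayed identity reduces to the first identity in (2.4) for $(L,R)$. The bookkeeping of which original identity matches which dual identity, and getting the composition orders right, is the only place to be attentive; there is no real obstacle, just careful transcription.

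Alternatively, one could invoke the semidirect-sum description stated after Definition 2.4: one would need to exhibit an alternative-algebra structure on $A \oplus V^*$ with the multiplication built from $R^*, L^*$ and relate it to $A \ltimes_{L,R} V$, but this seems to require essentially the same computation dressed differently, so I would favor the direct verification. I expect the entire proof to be a half-page of routine dual-pairing manipulations, with the matching of (2.4)-type identities being the step most prone to sign or index slips.
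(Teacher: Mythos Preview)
Your approach is the same as the paper's---direct verification by pairing---but there is a small slip in the bookkeeping you flagged as delicate. When you dualize your first displayed identity and pair, the relation you actually need on $(V,L,R)$ is
\[
R(x)L(y) - L(y)R(x) \;=\; L(x\circ y) - L(x)L(y),
\]
and this is \emph{not} literally either half of (2.4). The paper handles this by first polarizing $L(x^2)=L(x)^2$ from (2.3) to get $L(x\circ y)-L(x)L(y) = -L(y\circ x)+L(y)L(x)$, and then invoking the second half of (2.4) to rewrite the right side as $R(x)L(y)-L(y)R(x)$. The same maneuver (polarize $R(x^2)=R(x)^2$, then use the first half of (2.4)) is needed for your second displayed identity. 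So (2.3) is not only used for the ``square'' checks; it feeds into the mixed ones as well. Once you insert this one-line derivation, your proof is complete and matches the paper's.
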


\begin{proof}
By equations (2.3) and (2.4), we have
$$L(x\circ y)-L(x)L(y)=-L(y\circ x)+L(y)L(x)=R(x)L(y)-L(y)R(x),\;\;\forall x,y\in A.$$
So for any $u^*\in A^*,v\in V$, we have

{\small \begin{eqnarray*} \langle
(L^*(y)R^*(x)-R^*(x)L^*(y))u^*,v\rangle&=&\langle
u^*,(R(x)L(y)-L(y)R(x))v\rangle=\langle u^*,(L(x\circ
y)-L(x)L(y))v\rangle\\
&=&\langle(L^*(x\circ y)-L^*(y)L^*(x))u^*,v\rangle.
\end{eqnarray*}}
Hence $L^*(y)R^*(x)-R^*(x)L^*(y)=L^*(x\circ y)-L^*(y)L^*(x)$.
Similarly, $(R^*,L^*)$ also satisfies the other axioms defining a
bimodule of $(A,\circ)$.
\end{proof}

\begin{defn} {\rm A {\it pre-alternative algebra} $(A,\prec,\succ)$ is a vector space
$A$ with two bilinear products denoted by $\prec,\succ: A\otimes
A\rightarrow A$ satisfying
\begin{equation}
(x,y,z)_{m}+(y,x,z)_r=0,(x,y,z)_{m}+(x,z,y)_{l}=0,
(y,x,x)_{r}=(x,x,y)_{l}=0,\forall x,y,z\in A,\end{equation} where
$(x,y,z)_r,(x,y,z)_m,(x,y,z)_l$ are defined by equations (1.2)-(1.4)
respectively and $x\circ y=x\succ y+x\prec y$.}
 \end{defn}

\begin{remark}
{\rm (1) If the characteristic of the field is not 2, then a
pre-alternative algebra $(A,\prec$, $\succ)$ satisfies the strong
axioms (for any $x,y,z\in A$):
\begin{equation}
(x,y,z)_{m}+(y,x,z)_r=0,\;\;
(x,y,z)_{m}+(x,z,y)_{l}=0,\end{equation}
\begin{equation}
(x,y,z)_l+(y,x,z)_l=0,\;\;(x,y,z)_r+(x,z,y)_r=0.\end{equation}

(2) It would be interesting to describe free pre-alternative
algebras (cf. \cite{L1}). }
\end{remark}

\begin{prop} Let $(A,\prec,\succ)$ be a pre-alternative algebra.
Then the product
\begin{equation}
x\circ y=x\succ y+x\prec y,\;\;\forall x,y\in A,
\end{equation}
defines an alternative algebra, which is called the {\it associated}
alternative algebra of $A$ and denoted by $As(A)$. $(A,\prec,\succ)$
is called a {\it compatible} pre-alternative algebra structure on
the alternative algebra $As(A)$.
\end{prop}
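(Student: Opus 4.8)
The plan is to verify that the associator $(x,y,z) = (x\circ y)\circ z - x\circ(y\circ z)$ of the product $\circ$ can be expressed entirely in terms of the three ``associators'' $(x,y,z)_r, (x,y,z)_m, (x,y,z)_l$ introduced in equations (1.2)--(1.4), and then use the defining axioms (2.6) of a pre-alternative algebra to conclude that $(x,x,y) = (y,x,x) = 0$, which is precisely the alternative law (2.1).

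First I would expand $(x,y,z)$ directly. Writing $x\circ y = x\succ y + x\prec y$ and distributing, one gets
\begin{equation}
(x,y,z) = (x\succ y)\succ z + (x\succ y)\prec z + (x\prec y)\succ z + (x\prec y)\prec z - x\succ(y\circ z) - x\prec(y\circ z).
\end{equation}
Now regroup these six terms to match the definitions: the combination $(x\succ y)\succ z + (x\prec y)\succ z - x\succ(y\succ z) = (x\circ y)\succ z - x\succ(y\succ z) = (x,y,z)_l$; the term $(x\prec y)\prec z - x\prec(y\circ z) = (x,y,z)_r$; and the term $(x\succ y)\prec z - x\succ(y\prec z) = (x,y,z)_m$. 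Collecting the pieces carefully, using $x\succ(y\circ z) = x\succ(y\succ z) + x\succ(y\prec z)$, one obtains the identity
\begin{equation}
(x,y,z) = (x,y,z)_l + (x,y,z)_m + (x,y,z)_r, \qquad \forall x,y,z\in A.
\end{equation}
This is the key bookkeeping step; it is routine but must be done with care to make sure every one of the six expanded terms is accounted for exactly once.

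Second, I would specialize. Setting $y = x$ in the identity above gives $(x,x,z) = (x,x,z)_l + (x,x,z)_m + (x,x,z)_r$. By the third axiom in (2.6) we have $(x,x,z)_l = 0$, and by the first axiom $(x,x,z)_m + (x,x,z)_r = -(x,x,z)_r + (x,x,z)_r$? No --- more precisely, put $y = x$ in $(x,y,z)_m + (y,x,z)_r = 0$ to get $(x,x,z)_m + (x,x,z)_r = 0$. Hence $(x,x,z) = 0$. Similarly, to handle $(z,x,x) = 0$, I would set $z = x$ in the identity: $(z,x,x) = (z,x,x)_l + (z,x,x)_m + (z,x,x)_r$. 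Here $(z,x,x)_r = 0$ by the third axiom in (2.6), and setting $z = x$ (renaming) in the second axiom $(x,y,z)_m + (x,z,y)_l = 0$ with the appropriate substitution gives $(z,x,x)_m + (z,x,x)_l = 0$. Therefore $(z,x,x) = 0$ as well. Thus $(A,\circ)$ satisfies (2.1) and is an alternative algebra.

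The main obstacle, such as it is, lies entirely in the first step: one must be scrupulous in the expansion and regrouping so that the clean identity $(x,y,z) = (x,y,z)_l + (x,y,z)_m + (x,y,z)_r$ emerges, since a single misplaced term would break the argument. Once that identity is in hand, the remainder is an immediate application of the three defining relations in (2.6) under the substitutions $y\mapsto x$ and $z\mapsto x$. The final sentence of the proposition is merely terminology (the definitions of $As(A)$ and of a compatible structure) and requires no further argument.
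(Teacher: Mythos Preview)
Your proof is correct and follows essentially the same approach as the paper: expand the associator of $\circ$ into the three partial associators $(x,y,z)_l + (x,y,z)_m + (x,y,z)_r$ and then apply the axioms (2.6) under the specializations $y=x$ and $z=x$. The only cosmetic difference is that the paper specializes to $(x,x,y)$ before expanding, whereas you first establish the general identity $(x,y,z)=(x,y,z)_l+(x,y,z)_m+(x,y,z)_r$ and then specialize; the content is identical.
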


\begin{proof}
In fact, for any $x,y\in A$, we have

{\small\begin{eqnarray*}
(x,x,y)&=&(x\circ x)\circ y-x\circ(x\circ y)\\
&=&(x\circ x)\succ y+(x\succ x)\prec y+(x\prec x)\prec
y-x\succ(x\succ y)-x\succ(x\prec y)-x\prec(x\circ y)\\
&=&(x,x,y)_l+(x,x,y)_m+(x,x,y)_r=0.
\end{eqnarray*}}
Similarly, we  show that $(y,x,x)=0$.
\end{proof}

\begin{remark}
{\rm Thus, a pre-alternative algebra can be seen as an alternative
algebra whose product can be decomposed into two pieces which are
compatible in a certain way. On the other hand, it is obvious that
an associative algebra is an alternative algebra and a dendriform
dialgebra is a pre-alternative algebra.}

\end{remark}

If $(A,\circ)$ is an alternative algebra, then
$(A,\mathfrak{l}_{\circ},\mathfrak{r}_{\circ})$ is a bimodule of
$A$. Moreover,

\begin{prop}Let $(A,\prec,\succ)$ be a pre-alternative algebra.
Then $(A,\mathfrak{l}_{\succ},\mathfrak{r}_{\prec})$ is a bimodule
of the associated alternative algebra $(As(A),\circ)$.
\end{prop}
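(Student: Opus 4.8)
The plan is a direct verification of the four bimodule axioms (2.3)--(2.4) for the pair $(L,R)=(\mathfrak{l}_{\succ},\mathfrak{r}_{\prec})$ over the alternative algebra $As(A)$. The key observation is that, since $L(x)z=x\succ z$ and $R(x)z=z\prec x$, evaluating each of the four axioms on an arbitrary element $z\in A$ turns it --- after expanding $x^2=x\circ x$ and $x\circ y=x\succ y+x\prec y$ --- into an identity among the three associators $(\cdot,\cdot,\cdot)_l,(\cdot,\cdot,\cdot)_m,(\cdot,\cdot,\cdot)_r$ of (1.2)--(1.4), which is then (up to sign and a relabeling of the three variables) one of the defining relations (2.6) of a pre-alternative algebra. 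So the proof is essentially a translation between two notations.

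Concretely, I would proceed as follows. For the two ``square'' identities (2.3): compute $\big(L(x^2)-L(x)L(x)\big)z=(x\circ x)\succ z-x\succ(x\succ z)=(x,x,z)_l$ and $\big(R(x^2)-R(x)R(x)\big)z=z\prec(x\circ x)-(z\prec x)\prec x=-(z,x,x)_r$, both of which vanish by the third pair of relations in (2.6). For the first mixed identity in (2.4): $\big(R(y)L(x)-L(x)R(y)\big)z=(x\succ z)\prec y-x\succ(z\prec y)=(x,z,y)_m$, while $\big(R(x\circ y)-R(y)R(x)\big)z=z\prec(x\circ y)-(z\prec x)\prec y=-(z,x,y)_r$, so this identity is exactly $(x,z,y)_m+(z,x,y)_r=0$, i.e. the first relation in (2.6) with $y$ and $z$ interchanged. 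For the second mixed identity in (2.4): $\big(L(y\circ x)-L(y)L(x)\big)z=(y\circ x)\succ z-y\succ(x\succ z)=(y,x,z)_l$ and $\big(L(y)R(x)-R(x)L(y)\big)z=y\succ(z\prec x)-(y\succ z)\prec x=-(y,z,x)_m$, so it reduces to $(y,z,x)_m+(y,x,z)_l=0$, which is the second relation in (2.6) after the cyclic relabeling $x\to y\to z\to x$. This exhausts all four axioms.

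There is no genuine obstacle here: the proposition is in effect a restatement of the pre-alternative axioms in the language of bimodules, and each of the four checks is a one-line computation once the dictionary is set. The only point requiring care is the bookkeeping --- matching the bimodule axioms, in which the module element $z$ always occupies a fixed (last) slot, against the relations of (2.6), which look symmetric in all three variables, and keeping the signs straight when a term is moved across an equality. Once the correspondence of variables is fixed, the verification is immediate, and it uses only the defining identities (2.6), so no characteristic hypothesis beyond the standing one is needed for this statement.
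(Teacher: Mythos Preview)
Your proposal is correct and follows essentially the same approach as the paper's proof: a direct verification of the bimodule axioms (2.3)--(2.4) by unfolding the definitions and matching them against the pre-alternative identities (2.6). The paper in fact writes out only the first mixed identity (your item 3) and dismisses the rest with ``similarly,'' so your version is actually more detailed than the original.
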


\begin{proof}
In fact, for any $x,y,z\in A$, we have
\begin{eqnarray*}
(\mathfrak{r}_{\prec}(y)\mathfrak{l}_{\succ}(x)-\mathfrak{l}_{\succ}(x)\mathfrak{r}_{\prec}(y))z&=&(x\succ
z)\prec y-x\succ(z\prec y)=z\prec(x\circ y)-(z\prec x)\prec
y\\&=&(\mathfrak{r}_{\prec}(x\circ
y)-\mathfrak{r}_{\prec}(y)\mathfrak{r}_{\prec}(x))z.
\end{eqnarray*}
Similarly, $(\mathfrak{l}_{\succ},\mathfrak{r}_{\prec})$ satisfies
the other axioms defining a bimodule of $(As(A),\circ)$.
\end{proof}

Let $(A,\prec,\succ)$ be a pre-alternative algebra. Then by
Propositions 2.4 and  2.9, both
$(A^*,\mathfrak{r}_{\circ}^*,\mathfrak{l}_{\circ}^*)$ and
$(A^*,\mathfrak{r}_{\prec}^*,\mathfrak{l}_{\succ}^*)$ are bimodules
of the associated alternative algebra $As(A)$.

The following terminology is motivated by the notion of $\mathcal
O$-operator as a generalization of (the operator form of) the
classical Yang-Baxter equation in \cite{K2} (also see \cite{Bai2}).

\begin{defn}{\rm Let $(V,L,R)$ be a bimodule of an alternative algebra
$(A,\circ)$. A linear map $Al: V\rightarrow A$ is called an {\it
Al-operator} associated to $(V,L,R)$ if
\begin{equation}
Al(u)\circ Al(v)=Al(L(Al(u))v+R(Al(v))u),\;\forall u,v\in V.
\end{equation}}
\end{defn}

\begin{prop} Let $Al : V \rightarrow A $ be an $Al$-operator of an
alternative algebra  $(A,\circ)$ associated to a bimodule $(V,L,R)$.
Then there exists a pre-alternative algebra structure on V given by
\begin{equation}
u\prec v=R(Al(v))u,\;\;  u\succ v=L(Al(u))v,\quad \forall u,v\in V.
\end{equation}
Therefore $V$ is an alternative algebra as the associated
alternative algebra of this pre-alternative algebra and $T$ is a
homomorphism of alternative algebras. Furthermore,
$Al(V)=\{Al(v)|v\in V\}\subset A$ is an alternative subalgebra of
$(A,\circ)$ and there is an induced pre-alternative algebra
structure on $T(V)$ given by
\begin{equation}
Al(u)\prec Al(v) = Al(u\prec v),\;\; Al(u)\succ Al(v) = Al(u\succ
v),\;\; \forall u,v\in V.
\end{equation}
Moreover, its associated alternative algebra structure is just the
alternative subalgebra structure of $(A,\circ)$ and $Al$ is a
homomorphism of pre-alternative algebras.
\end{prop}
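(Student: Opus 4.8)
The plan is to handle the four assertions in turn: verifying the pre-alternative axioms on $V$ carries essentially all of the computational content, establishing that $Al$ is an alternative-algebra homomorphism is then immediate, and descending the structure to $Al(V)$ (i.e. the well-definedness of (2.14)) is the only genuinely delicate point.

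First I would record the identity $Al(u\circ v)=Al(u)\circ Al(v)$ for all $u,v\in V$, which is just a restatement of the $Al$-operator equation (2.11) once one writes $u\circ v=u\succ v+u\prec v=L(Al(u))v+R(Al(v))u$. Using this together with the defining formulas (2.12), each of the three associators of $(V,\prec,\succ)$ rewrites purely in terms of the operators $L(Al(x))$ and $R(Al(x))$ applied to the third argument. A direct computation gives $(x,y,z)_m=\big(R(Al(z))L(Al(x))-L(Al(x))R(Al(z))\big)y$, and the two bimodule relations in (2.4) (equivalently the identities displayed in the proof of Proposition 2.4, applied with the arguments replaced by images under $Al$) turn $(y,x,z)_r$ and $(x,z,y)_l$ into exactly the negative of this same commutator acting on $y$; likewise $(y,x,x)_r$ and $(x,x,y)_l$ collapse to $0$ using only $R(a^2)=R(a)R(a)$, $L(a^2)=L(a)L(a)$ from (2.3) together with $Al(x\circ x)=Al(x)\circ Al(x)$. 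Hence all the identities of Definition 2.5 hold, so $(V,\prec,\succ)$ is a pre-alternative algebra; this step is routine bookkeeping rather than a real obstacle.

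Next, by Proposition 2.8 the associated alternative algebra of this structure has product $u\circ v=L(Al(u))v+R(Al(v))u$, and then $Al(u\circ v)=Al(u)\circ Al(v)$ says precisely that $Al$ is a homomorphism of alternative algebras; consequently $Al(V)$, being the image of an algebra homomorphism, is an alternative subalgebra of $(A,\circ)$.

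Finally I would descend the pre-alternative structure along the surjection $Al\colon V\to Al(V)$, i.e. check that (2.14) is well defined. Since $Al$ is linear, this reduces to showing that $\ker Al$ is closed under $\prec$ and $\succ$ on either side. If $Al(u)=0$ then $u\succ v=L(0)v=0$ and $v\prec u=R(0)v=0$ immediately; for the two remaining products one applies the $Al$-operator equation to $Al(u)\circ Al(v)=0$ and to $Al(v)\circ Al(u)=0$, obtaining $Al(R(Al(v))u)=0$ and $Al(L(Al(v))u)=0$, i.e. $u\prec v$ and $v\succ u$ lie in $\ker Al$. Hence (2.14) defines a pre-alternative structure on $Al(V)$ making $Al$ a homomorphism of pre-alternative algebras by construction, and its associated alternative product is $Al(u)\circ Al(v)=Al(u\circ v)=Al(u)\circ Al(v)$, which is the subalgebra product of $(A,\circ)$. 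The one thing that requires care is precisely this ideal property; the rest follows formally from the first two steps.
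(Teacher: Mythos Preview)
Your proposal is correct and follows essentially the same approach as the paper: one verifies the pre-alternative identities on $V$ directly from the bimodule axioms (2.3)--(2.4) together with the $Al$-operator relation, and the paper, like you, checks one of the mixed identities $(x,y,z)_m+(y,x,z)_r=0$ as a representative computation before declaring the rest similar. Your treatment is in fact more careful than the paper's on the last point: the paper simply asserts that ``the remaining parts of the conclusion are obvious,'' whereas you actually verify that $\ker Al$ is a two-sided ideal for both $\prec$ and $\succ$ (using that $u\succ v$ and $v\prec u$ vanish outright when $Al(u)=0$, while $u\prec v$ and $v\succ u$ land in $\ker Al$ via the $Al$-operator identity), which is exactly what is needed for the descent (2.14) to be well defined.
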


\begin{proof}
We only prove one identity for $(V,\prec,\succ)$ being a
pre-alternative algebra as an example, the proof of the others is
similar. In fact, for any $u,v,w\in V$,
$$(u\succ v)\prec w+(v\prec u)\prec w=R(Al(w))L(Al(u))v+R(Al(w))R(Al(u))v,$$
$$u\succ(v\prec w)+v\prec(u\circ v)=L(Al(u))R(Al(w))v+R(Al(u\circ
w))v.$$ By equations (2.4), (2.10) and (2.11), we show that
$$(u,v,w)_m+(v,u,w)_r=(u\succ v)\prec w+(v\prec u)\prec w-u\succ(v\prec w)-v\prec(u\circ v)
=0.$$ The remaining parts of the conclusion are obvious.
\end{proof}

\begin{defn}{\rm (\cite{S1})\quad Let $(A,\circ)$ be an alternative
algebra and $(V,L,R)$ be a bimodule. A {\it $1$-cocycle} of $A$ into
$V$ is a linear map $D:A\rightarrow V$ satisfying
\begin{equation}
D(x\circ y)=L(x)D(y)+R(y)D(x),\;\; \forall x,y\in A.
\end{equation}}
\end{defn}

\begin{prop} Let $(A,\circ)$ be an alternative
algebra. Then the following conditions are equivalent.

(1) There exists a compatible pre-alternative algebra structure
$(A,\prec,\succ)$ on $(A,\circ)$.

(2) There exists an invertible $Al$-operator.

(3) There exists a bijective $1$-cocycle.

\end{prop}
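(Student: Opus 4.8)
The plan is to prove the chain of equivalences $(1)\Leftrightarrow(2)\Leftrightarrow(3)$ in a cyclic or pairwise fashion, using the machinery already assembled. The key observation is that an invertible $Al$-operator $Al\colon V\to A$ and a bijective $1$-cocycle $D\colon A\to V$ are essentially inverse to one another, while a compatible pre-alternative structure is exactly what such a map produces.

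\medskip
\noindent\emph{Proof outline.}

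\noindent$(1)\Rightarrow(2)$: Suppose $(A,\prec,\succ)$ is a compatible pre-alternative algebra structure on $(A,\circ)$. By Proposition 2.9, $(A,\mathfrak{l}_\succ,\mathfrak{r}_\prec)$ is a bimodule of $(As(A),\circ)=(A,\circ)$. I claim the identity map $\mathrm{id}\colon A\to A$ is an $Al$-operator associated to $(A,\mathfrak{l}_\succ,\mathfrak{r}_\prec)$: indeed, for $u,v\in A$,
\[
\mathrm{id}(u)\circ\mathrm{id}(v)=u\circ v=u\succ v+u\prec v=\mathfrak{l}_\succ(u)v+\mathfrak{r}_\prec(v)u=\mathrm{id}\bigl(\mathfrak{l}_\succ(\mathrm{id}(u))v+\mathfrak{r}_\prec(\mathrm{id}(v))u\bigr),
\]
which is exactly equation (2.10) (with $(L,R)=(\mathfrak{l}_\succ,\mathfrak{r}_\prec)$). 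The identity map is invertible, so $(2)$ holds.

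\noindent$(2)\Rightarrow(1)$: Let $Al\colon V\to A$ be an invertible $Al$-operator associated to a bimodule $(V,L,R)$. By Proposition 2.12, the formulas $u\prec v=R(Al(v))u$, $u\succ v=L(Al(u))v$ make $V$ into a pre-alternative algebra whose associated alternative algebra maps to $(A,\circ)$ via $Al$, which is a homomorphism. Since $Al$ is bijective, it is an isomorphism of alternative algebras, so we may transport the pre-alternative structure along $Al$ to $A$: define $x\prec' y=Al(Al^{-1}(x)\prec Al^{-1}(y))$ and similarly for $\succ'$. By the last part of Proposition 2.12 (equation (2.12) and the sentence following it), this transported structure has associated alternative algebra equal to $(A,\circ)$, i.e. it is a compatible pre-alternative structure; explicitly one computes $x\succ' y+x\prec' y=Al(L(x)Al^{-1}(y))+Al(R(y)Al^{-1}(x))=Al(Al^{-1}(x))\circ Al(Al^{-1}(y))=x\circ y$ using (2.10). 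This gives $(1)$.

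\medskip
\noindent$(1)\Leftrightarrow(3)$: Given a bijective $1$-cocycle $D\colon A\to V$ into a bimodule $(V,L,R)$, set $Al=D^{-1}\colon V\to A$. I want to show $Al$ is an $Al$-operator. Writing the cocycle condition (2.13) as $D(x\circ y)=L(x)D(y)+R(y)D(x)$ and substituting $x=Al(u)$, $y=Al(v)$ gives $D(Al(u)\circ Al(v))=L(Al(u))v+R(Al(v))u$, hence $Al(u)\circ Al(v)=Al(L(Al(u))v+R(Al(v))u)$, which is (2.10). Conversely, an invertible $Al$-operator yields a bijective $1$-cocycle by the same computation read backwards. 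Combining with $(1)\Leftrightarrow(2)$ already proved closes the loop; alternatively one can give $(1)\Leftrightarrow(3)$ directly by noting that a compatible pre-alternative structure $(A,\prec,\succ)$ makes $\mathrm{id}\colon A\to A$ into a $1$-cocycle for the bimodule $(A^*$-dual consideration aside$)$ $(A,\mathfrak{l}_\succ,\mathfrak{r}_\prec)$, since $D(x\circ y)=x\circ y=x\succ y+x\prec y=\mathfrak{l}_\succ(x)y+\mathfrak{r}_\prec(y)x$, and conversely a bijective $1$-cocycle $D$ into $(V,L,R)$ transports the structure of Proposition 2.12 (applied to $Al=D^{-1}$) back to $A$.

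\medskip
\noindent The routine verifications are the bimodule-axiom checks hidden inside Propositions 2.9 and 2.12, which are already granted. I expect the only genuine subtlety to be bookkeeping: making sure that when a pre-alternative structure is transported along the isomorphism $Al$ (or $D^{-1}$), the resulting products on $A$ really are $x\prec y=\mathfrak{r}_\prec$-type expressions, so that the associated alternative product is recovered on the nose as $\circ$ rather than merely up to isomorphism. This is handled by the final assertion of Proposition 2.12, which states precisely that the associated alternative algebra of the induced structure on $Al(V)$ coincides with the subalgebra structure of $(A,\circ)$; since here $Al(V)=A$, there is nothing further to check.
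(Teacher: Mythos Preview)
Your proof is correct and follows essentially the same approach as the paper: the paper proves the cycle $(3)\Rightarrow(2)\Rightarrow(1)\Rightarrow(3)$ using exactly the same constructions (namely $D^{-1}$ is an $Al$-operator, an invertible $Al$-operator transports the pre-alternative structure of Proposition~2.11 to $A$ via the formulas $x\prec y=Al(R(y)Al^{-1}(x))$, $x\succ y=Al(L(x)Al^{-1}(y))$, and ${\rm id}$ is a bijective $1$-cocycle into $(A,\mathfrak{l}_\succ,\mathfrak{r}_\prec)$). You prove the pairwise equivalences with a bit more redundancy, but the ideas and computations are identical.
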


\begin{proof} (3) $\Longrightarrow$ (2)\quad If $D$ is a bijective
$1$-cocycle of $(A,\circ)$ into a bimodule $(V,L,R)$, then $D^{-1}$
is an $Al$-operator associated to $(V,L,R)$.

(2)$\Longrightarrow$ (1)\quad  If $Al:V\rightarrow A$ is an
invertible $Al$-operator associated to a bimodule $(V, L,R)$, then
there is a compatible pre-alternative algebra structure on $A$ given
by
\begin{equation}
x\prec y=Al(R(y)Al^{-1}(x)),\;\;  x\succ y=Al(L(x)Al^{-1}(y)),\;\;
\forall x,y\in A.
\end{equation}

(1) $\Longrightarrow$ (3)\quad If $(A,\prec,\succ)$ is a compatible
pre-alternative algebra structure on $(A,\circ)$, then it is obvious
that the identity map ${\rm id}$ is a bijective $1$-cocycle of $A$
into the bimodule $(A,\mathfrak{l}_{\succ}, \mathfrak{r}_{\prec})$.
\end{proof}

\begin{exam}{\rm  Let $(A,\circ)$ be an alternative algebra
graded by positive integers, that is, $A=\oplus_{i\in{\rm N}}A_i$,
$A_i\circ A_j\subset A_{i+j}$. Then there is a bijective $1$-cocycle
associated to the bimodule
$(A,\mathfrak{l}_{\circ},\mathfrak{r}_{\circ})$ defined by
$D(x_i)=ix_i$ for $x_i\in A_i$. Therefore there exists a compatible
pre-alternative algebra structure on $(A,\circ)$ given by
$$x_i\succ
x_j={j\over{i+j}}x_i\circ x_j,\;\;x_i\prec x_j={i\over{ i+j}
}x_i\circ x_j,\;\;\forall x_i\in A_i,x_j\in A_j.$$}
\end{exam}

\begin{defn} {\rm Let $(A,\circ)$ be an arbitrary algebra which is not necessarily associative
 and
$\omega$ be a skew symmetric bilinear form on $A$. The bilinear form
$\omega$ is called {\it closed} if $\omega$ satisfies
\begin{equation}
\omega(a\circ b,c)+\omega(b\circ c,a)+\omega(c\circ a,b)=0,\;\;
\forall a,b,c\in A. \end{equation} If in addition, $\omega$ is
nondegenerate, then $\omega$ is called {\it symplectic}. In
particular, an alternative algebra $A$ equipped with a symplectic
form is called a {\it symplectic alternative algebra}.}
\end{defn}

\begin{prop}
 Let ($A,\circ,\omega$) be an
alternative algebra with a symplectic form $\omega$. Then there
exists a compatible pre-alternative algebra structure
``$\prec,\succ$" on $A$ given by
\begin{equation}\omega(x\prec y,z)=\omega(x,y\circ z),\;\;
 \omega(x\succ y,z)=\omega(y,z\circ x),\;\;\forall x, y,z\in A. \end{equation}\end{prop}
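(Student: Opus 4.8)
The plan is to first make sense of the two formulas, then reduce everything to identities about the ordinary associator of $(A,\circ)$. Since $\omega$ is nondegenerate, $\prec$ and $\succ$ are well defined: for fixed $x,y\in A$ the assignments $z\mapsto\omega(x,y\circ z)$ and $z\mapsto\omega(y,z\circ x)$ are linear functionals on $A$, so there are unique elements, which we name $x\prec y$ and $x\succ y$, with $\omega(x\prec y,z)=\omega(x,y\circ z)$ and $\omega(x\succ y,z)=\omega(y,z\circ x)$ for all $z$; bilinearity in $x,y$ is inherited from the right-hand sides. First I would check that $x\circ y=x\succ y+x\prec y$: pairing the right side with an arbitrary $z\in A$ gives $\omega(y,z\circ x)+\omega(x,y\circ z)=-\omega(z\circ x,y)-\omega(y\circ z,x)$ by skew-symmetry of $\omega$, and by the closedness identity (2.15) this equals $\omega(x\circ y,z)$; nondegeneracy then forces $x\circ y=x\succ y+x\prec y$, so that $(A,\circ)$ will be the associated alternative algebra of $(A,\prec,\succ)$ once the latter is shown to be pre-alternative.

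Next I would verify the identities (2.6). By nondegeneracy it suffices to check each after pairing with $\omega(\cdot\,,w)$ for an arbitrary $w\in A$, and the idea is that every such pairing collapses to an expression in the associator $(a,b,c)=(a\circ b)\circ c-a\circ(b\circ c)$ of $(A,\circ)$. Using the defining relations together with $x\circ y=x\succ y+x\prec y$ repeatedly, one obtains
\begin{align*}
\omega((x,y,z)_r,w)&=-\omega(x,(y,z,w)),\\
\omega((x,y,z)_m,w)&=\omega(y,(z,w,x)),\\
\omega((x,y,z)_l,w)&=-\omega(z,(w,x,y)).
\end{align*}
For instance, $\omega((x\succ y)\prec z,w)=\omega(x\succ y,z\circ w)=\omega(y,(z\circ w)\circ x)$ and $\omega(x\succ(y\prec z),w)=\omega(y\prec z,w\circ x)=\omega(y,z\circ(w\circ x))$, and subtracting gives $\omega((x,y,z)_m,w)=\omega(y,(z,w,x))$; the $r$- and $l$-cases are analogous.

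Finally I would invoke the standard fact that the associator of an alternative algebra is an alternating trilinear map: from $(x,x,y)=(y,x,x)=0$, linearization (cf. Remark 2.2) gives skew-symmetry in the first two and in the last two slots, hence invariance under cyclic permutations and a sign change under any transposition. Granting this, $\omega((x,y,z)_m+(y,x,z)_r,w)=\omega(y,(z,w,x)-(x,z,w))=0$ since $(x,z,w)$ is a cyclic rotation of $(z,w,x)$; likewise $\omega((x,y,z)_m+(x,z,y)_l,w)=\omega(y,(z,w,x)-(w,x,z))=0$; and $\omega((y,x,x)_r,w)=-\omega(y,(x,x,w))=0$, $\omega((x,x,y)_l,w)=-\omega(y,(w,x,x))=0$ directly from $(x,x,w)=(w,x,x)=0$. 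As $w$ is arbitrary and $\omega$ is nondegenerate, all identities of (2.6) hold, so $(A,\prec,\succ)$ is a compatible pre-alternative algebra structure on $(A,\circ)$. I expect the only real difficulty to be the bookkeeping in the three displayed reductions — keeping the order of the arguments straight through the repeated substitutions and uses of skew-symmetry; once those are established, the conclusion is immediate from the alternating property of the associator.
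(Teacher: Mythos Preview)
Your argument is correct, and it takes a genuinely different route from the paper's own proof. The paper proceeds structurally: it defines $T:A\to A^*$ by $\langle T(x),y\rangle=\omega(x,y)$, observes that the closedness of $\omega$ makes $T$ a bijective $1$-cocycle of $(A,\circ)$ into the bimodule $(A^*,\mathfrak{r}_\circ^*,\mathfrak{l}_\circ^*)$, and then invokes Proposition~2.13 (the equivalence of compatible pre-alternative structures with bijective $1$-cocycles/invertible $Al$-operators) to produce $\prec,\succ$; the formulas (2.16) are read off afterward. Your approach is instead a direct verification: you define $\prec,\succ$ via (2.16) and nondegeneracy, check $\succ+\prec=\circ$ from closedness, and then reduce each pre-alternative identity, paired against an arbitrary $w$, to the alternating property of the associator of $(A,\circ)$. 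The paper's method is more conceptual and situates the result within the framework it has been building (so the proposition becomes an instance of a general construction), while yours is elementary and self-contained, requiring nothing beyond the definitions and the fact that the associator is alternating. Both are clean; yours has the advantage of making transparent exactly which feature of alternativity is used in each of the four identities.
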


\begin{proof} Define a linear map $T:
A\rightarrow A^*$ by $\langle T(x), y\rangle=\omega(x, y), \forall
x, y\in A$. Then $T$ is invertible and $T$ is a 1-cocycle of $A$
into the bimodule $(A^*,\mathfrak{r}_\circ^*,
\mathfrak{l}_\circ^*)$. So by Proposition 2.13, there is a
compatible pre-alternative algebra structure ``$\prec,\succ$" on $A$
given by
$$x\prec y=T^{-1}(\frak{l}_{\circ}^*(y)T(x)),\quad x\succ
y=T^{-1}(\frak{r}_{\circ}^*(x)T(y)),\;\;\forall x,y\in A.$$ Thus,
for any $x,y\in A$,
$$\omega(x\prec y,z)=\langle T(x\prec
y),z\rangle=\langle\frak{l}_{\circ}^*(y)T(x),z\rangle=\omega(x,y\circ
z),$$
$$\omega(x\succ y,z)=\langle T(x\succ y),z\rangle=\langle
\frak{r}_{\circ}^*(x)T(y),z\rangle=\omega(y,z\circ x).$$ Therefore
the conclusion holds.
\end{proof}

\section{Alternative D-bialgebras and an alternative analog of the
classical Yang-Baxter equation}

\setcounter{equation}{0}
\renewcommand{\theequation}
{3.\arabic{equation}}

\begin{defn} {\rm (\cite{G},\cite{Z})\quad  Let $M$ be an arbitrary variety of
$\textbf{k}$-algebras and $(A,\circ)$ be an algebra in $M$ with
comultiplication $\bigtriangleup$. Then $(A,\circ,\bigtriangleup)$
is called an {\it M-bialgebra} in the sense of Drinfeld if $D(A)$
belongs to $M$, where $D(A)=A\oplus A^*$ is equipped with the
multiplication
\begin{equation}
(a+f)\star(b+g)=(a\circ b+f\cdot b+a\cdot g)+(f\ast g+f\bullet
b+a\bullet g),\;\;\forall a,b\in A, f,g\in A^*,\end{equation} where
$f\cdot a=\sum_a{a_{(1)}\langle f,a_{(2)}\rangle },a\cdot
f=\sum_a{\langle f,a_{(1)}\rangle a_{(2)}},\langle f\bullet
a,b\rangle =\langle f,a\circ b\rangle , \langle a\bullet f,b\rangle
=\langle f,b\circ a\rangle$,
$\bigtriangleup(a)=\sum_a{a_{(1)}\otimes a_{(2)}}$, and the
multiplication $\ast$ on $A^*$ is induced by $\bigtriangleup$. In
this case, $D(A)=A\oplus A^*$ is called the {\it Drinfeld's double}
for $A$. In particular, when $M$ is a variety of alternative
algebras, $(A,\circ,\bigtriangleup)$ is called an {\it alternative
D-bialgebra}.}
\end{defn}

\begin{remark}{\rm
According to \cite{G}, an alternative $D$-bialgebra
$(A,\circ,\Delta)$ is equivalent to an ``alternative analogue" of
Manin triple (\cite{CP}): there is an alternative algebra structure
on the direct sum $A\oplus A^*$ of the underlying vector spaces of
$A$ and $A^*$ such that both $A$ and $A^*$ are subalgebras and the
symmetric bilinear form on $A\oplus A^*$ given by
\begin{equation}{\mathcal B}(x+a^*, y+b^*)=\langle a^*, y\rangle
+\langle x, b^*\rangle,\;\;\forall x,y\in A, a^*,b^*\in
A^*,\end{equation} is invariant. Recall a bilinear form $\mathcal B$
on an alternative algebra $(A,\circ)$ is called {\it invariant} if
\begin{equation}
\mathcal B(x\circ y,z)=\mathcal B(x, y\circ z),\quad \forall
x,y,z\in A.
\end{equation}
On the other hand, it is easy to show that an alternative
$D$-bialgebra $(A,\circ,\Delta)$ is equivalent to the fact that
$(A,A^*,\mathfrak{r}_{\circ}^*,\mathfrak{l}_{\circ}^*,\mathfrak{r}_{\ast}^*,\mathfrak{l}_{\ast}^*)$
is a matched pair of alternative algebras in the sense of
Proposition 4.7.}
\end{remark}

\begin{defn}
 {\rm Let $(A,\circ)$ be an alternative algebra and
$r=\sum_i{a_i\otimes b_i}\in A\otimes A$. Then the pair $(A,r)$ is
called a {\it coboundary alternative D-bialgebra} if
$(A,\circ,\bigtriangleup_r)$, where
\begin{equation}
\bigtriangleup_r(x)=\sum_i{a_i\circ x\otimes b_i}-\sum_i{a_i\otimes
x\circ b_i},\quad \forall x\in A,\end{equation} is an alternative
$D$-bialgebra.}
\end{defn}

\begin{theorem}
{\rm(\cite{G})}\quad  Let $(A,\circ)$ be an alternative algebra and
$r\in A\otimes A$. Assume that $r$ is skew-symmetric and
\begin{equation}
C_A(r)=r_{23}\circ r_{12}-r_{12}\circ r_{13}-r_{13}\circ r_{23}=0.
\end{equation}
Then $(A,\circ,\bigtriangleup_r)$ is an alternative $D$-bialgebra.
\end{theorem}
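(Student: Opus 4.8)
The plan is to verify directly that $(A,\circ,\triangle_r)$ satisfies the definition of an alternative $D$-bialgebra, i.e. that the Drinfeld double $D(A)=A\oplus A^*$ with the multiplication $\star$ of (3.1) is an alternative algebra, where the comultiplication $\triangle_r$ is given by (3.4). Since by Remark 3.2 being an alternative $D$-bialgebra is equivalent to $(A,A^*,\mathfrak{r}_\circ^*,\mathfrak{l}_\circ^*,\mathfrak{r}_\ast^*,\mathfrak{l}_\ast^*)$ being a matched pair of alternative algebras, I would reformulate the claim as a matched-pair identity and then reduce everything to a condition on $r$. First I would record the ``operator form'' of $\triangle_r$: identifying $r$ with the linear map $T_r:A^*\to A$ via (1.9), the map $\triangle_r$ corresponds to the two bilinear maps on $A^*$ obtained from the associated coproduct, and the multiplication $\ast$ on $A^*$ dualizes to the bracket-like operations $\mathfrak{l}_\ast,\mathfrak{r}_\ast$ expressed through $T_r$ and $\mathfrak{l}_\circ,\mathfrak{r}_\circ$. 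This is the standard translation between the tensor and operator pictures and is routine given the notational conventions (1)--(8) of the introduction.

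Next I would substitute these expressions into the two alternative-algebra axioms (2.1) (equivalently the linearized axioms (2.2)) for the product $\star$ on $A\oplus A^*$. Because $A$ and $A^*$ are already alternative subalgebras (the former by hypothesis, the latter because $\ast$ is induced by a coassociative-type $\triangle_r$ satisfying the co-alternative conditions), the only new relations come from the ``mixed'' associators involving one or two elements of $A^*$. Collecting the terms of each tensor-degree type, I expect each mixed axiom to reduce, after using skew-symmetry of $r$ and the alternative identities in $A$, to the vanishing of a single expression built from $r_{12},r_{13},r_{23}$ under $\circ$ — precisely $C_A(r)$ as defined in (3.6), possibly after applying the flip $\sigma$ and the skew-symmetry $r=-\sigma(r)$ to bring all contributions into a common form. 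The skew-symmetry hypothesis is what makes the several a priori distinct obstruction terms coincide (up to sign) with $C_A(r)$.

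The main obstacle will be the bookkeeping in this last step: there are several mixed associators to check (of type $A,A^*,A^*$ and $A^*,A,A^*$ and so on, plus their linearized partners), and in each the raw output is a sum of many terms in $A\otimes A$ or $A\otimes A\otimes A$ that must be massaged — using the alternativity of $\circ$, the definition (3.4) of $\triangle_r$, and the skew-symmetry of $r$ — into a multiple of $C_A(r)$. The key technical lemma I would isolate first is an identity expressing, for skew-symmetric $r$, the quantity $\sum a_i\circ(\text{stuff})$ appearing in the coproduct of a product $\triangle_r(x\circ y)$ in terms of $\triangle_r(x),\triangle_r(y)$ and the components of $C_A(r)$; once that ``$1$-cocycle up to $C_A(r)$'' identity is in hand, the remaining axioms follow by applying it twice and cancelling. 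I would also double-check the degenerate cases (e.g. when two of the three arguments coincide) separately, since the linearized axioms (2.2) are only equivalent to (2.1) in characteristic $\neq 2$, which is assumed throughout.
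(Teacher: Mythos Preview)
Your overall strategy---reduce to the matched-pair formulation and then check axioms on $A\oplus A^*$---is the same as the paper's (see the Appendix, Theorems A1, A2, Proposition A9, Theorem A10, which parallel the pre-alternative argument in Theorems 7.3, 7.7 and Proposition 7.6). However, you have the two halves of the argument inverted, and this creates a genuine gap.

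You write that $A^*$ is ``already'' an alternative subalgebra ``because $\ast$ is induced by a coassociative-type $\triangle_r$ satisfying the co-alternative conditions,'' and that therefore only the mixed associators need work, each reducing to $C_A(r)$. But the co-alternative conditions on $\triangle_r$ are not automatic: they are precisely what encodes the alternative Yang--Baxter equation. In the paper's treatment (Lemma A6 and Proposition A9, mirroring Lemma 7.4 and Proposition 7.6), the requirement that $(A^*,\ast)$ be alternative is rewritten as the vanishing of certain expressions in $r_{12},r_{13},r_{23}$, and it is \emph{here} that the hypothesis $C_A(r)=0$ is consumed. Conversely, the ``mixed'' compatibility conditions (equations (A.1)--(A.4), the analogue of (6.1)--(6.8)) turn out to hold \emph{automatically} for any skew-symmetric $r$, using only alternativity of $A$ and $r=-\sigma(r)$; no Yang--Baxter equation is needed there. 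This is exactly the content of the analogue of Theorem 7.3.

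So the correction is: do not assume $A^*$ is alternative. First show, as in Theorem 7.3, that for skew-symmetric $r$ the coboundary $\triangle_r$ satisfies the matched-pair compatibility equations identically. Then, separately, show that $\triangle_r^*$ defines an alternative product on $A^*$ if and only if certain combinations of $A_1=C_A(r)$ and $A_2$ vanish (Proposition A9), and use skew-symmetry to identify $A_1$ with $A_2$ (Proposition A8). Your ``$1$-cocycle up to $C_A(r)$'' lemma is aimed at the wrong target: the cocycle-type identities hold on the nose, and the obstruction $C_A(r)$ lives entirely in the coalgebra part.
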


\begin{defn} {\rm Let $(A,\circ)$ be an alternative algebra and
$r\in A\otimes A$. Equation (3.5) is called an {\it ``alternative
analog" of the classical Yang-Baxter equation} in \cite{G}. We also
call it the {\it alternative Yang-Baxter equation} in $(A,\circ)$.}
\end{defn}

\begin{prop} Let $(A,\circ)$ be an alternative algebra and
$r\in A\otimes A$. Then $r$ is a skew-symmetric solution of the
alternative Yang-Baxter equation in $(A,\circ)$ if and only if $T_r$
is an $Al$-operator associated to the bimodule
$(A^*,\mathfrak{r}_{\circ}^*,\mathfrak{l}_{\circ}^*)$, that is,
$T_r$ satisfies following equation \begin{equation}T_r(a^*)\circ
T_r(b^*)=T_r(\mathfrak{r}_{\circ}^*(T_r(a^*))b^*+
\mathfrak{l}_{\circ}^*(T_r(b^*))a^*),\;\; \forall a^*,b^*\in
A^*.\end{equation} So there is a pre-alternative algebra structure
on $A^*$ given by \begin{equation} a^*\prec
b^*=\mathfrak{l}_{\circ}^*(T_r(b^*))a^*,\;\; a^*\succ
b^*=\mathfrak{r}_{\circ}^*(T_r(a^*))b^*, \;\; \forall a^*,b^*\in
A^*.\end{equation} Moreover, its associated alternative algebra
structure is exactly the alternative algebra structure on $A^*$ as a
subalgebra of $D(A)=A\oplus A^*$ which is induced from the
comultiplication defined by equation (3.4). We denote this
alternative algebra structure on $A^*$ by $A^*(r)$.
\end{prop}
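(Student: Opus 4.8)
The plan is to derive the equivalence in the first sentence by dualizing $C_A(r)=0$, and then to obtain the remaining assertions as direct applications of Propositions 2.4 and 2.11 together with Definition 3.1.

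First I would rewrite the tensor identity $C_A(r)=0$ as an identity for the operator $T_r$. Writing $r=\sum_i a_i\otimes b_i$ with $\sigma(r)=-r$, I pair each of the three summands $r_{23}\circ r_{12}$, $r_{12}\circ r_{13}$, $r_{13}\circ r_{23}$ with an arbitrary $a^*\otimes b^*\otimes c^*\in A^*\otimes A^*\otimes A^*$, expanding the products via (1.8) and recognizing the resulting partial contractions through (1.9): $\sum_i\langle u^*,b_i\rangle a_i=T_r(u^*)$, and hence, by skew-symmetry, $\sum_i\langle u^*,a_i\rangle b_i=-T_r(u^*)$. Using the definitions (1.14) of $\mathfrak{l}_\circ^*,\mathfrak{r}_\circ^*$ and the skew-symmetry of $r$ to move $T_r$ across the canonical pairing, the three terms combine so that $\langle a^*\otimes b^*\otimes c^*,C_A(r)\rangle$ equals, up to an overall sign, the cyclic sum
$$\langle a^*,T_r(b^*)\circ T_r(c^*)\rangle+\langle b^*,T_r(c^*)\circ T_r(a^*)\rangle+\langle c^*,T_r(a^*)\circ T_r(b^*)\rangle.$$
On the other hand, pairing the target identity (3.7) with an arbitrary $c^*$ and again using the skew-symmetry of $r$ to transfer $T_r$ shows that (3.7) is equivalent to the vanishing of precisely this same cyclic sum for all $a^*,b^*,c^*$. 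Therefore $C_A(r)=0$ is equivalent to (3.7). Comparing (3.7) with the defining identity (2.10) of an $Al$-operator for $L=\mathfrak{r}_\circ^*$, $R=\mathfrak{l}_\circ^*$ --- which is a bimodule of $(A,\circ)$ by Proposition 2.4 --- this is exactly the statement that $T_r$ is an $Al$-operator associated to $(A^*,\mathfrak{r}_\circ^*,\mathfrak{l}_\circ^*)$.

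The pre-alternative algebra structure on $A^*$ then follows at once from Proposition 2.11 applied to $Al=T_r$: the formulas $u\prec v=R(Al(v))u$ and $u\succ v=L(Al(u))v$ become exactly (3.8). It remains to identify the associated alternative algebra of this pre-alternative structure with $A^*(r)$, i.e. with the subalgebra $A^*\subset D(A)$ arising from $\bigtriangleup_r$. For this I would, for every $x\in A$, expand $\langle a^*\ast b^*,x\rangle=\langle a^*\otimes b^*,\bigtriangleup_r(x)\rangle$ using (3.4): via (1.9), (1.14) and the skew-symmetry of $r$, the first summand $\sum_i\langle a^*,a_i\circ x\rangle\langle b^*,b_i\rangle$ equals $\langle a^*\prec b^*,x\rangle$ and the second summand $-\sum_i\langle a^*,a_i\rangle\langle b^*,x\circ b_i\rangle$ equals $\langle a^*\succ b^*,x\rangle$. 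Hence $a^*\ast b^*=a^*\prec b^*+a^*\succ b^*$; since in $D(A)$ the $\star$-product of two elements of $A^*$ reduces to $\ast$ (the $\bullet$-terms in (3.1) vanish), this is exactly the associated alternative algebra structure, as claimed.

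The step I expect to be the main obstacle is the first one: it is entirely a bookkeeping matter, but one must be careful about which of the three tensor slots each of $a^*,b^*,c^*$ contracts against, and must apply the skew-symmetry relation $\sum_i\langle u^*,a_i\rangle b_i=-T_r(u^*)$ consistently, since a single sign error would destroy the matching of the two cyclic sums. The remaining two parts are routine substitutions into results already established.
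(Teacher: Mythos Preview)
Your argument is correct. The overall strategy---pair $C_A(r)$ against arbitrary elements of $(A^*)^{\otimes 3}$ and recognise the result as the $Al$-operator identity---is the same as the paper's, but the execution differs. The paper fixes a basis $\{e_i\}$, writes $r=\sum_{i,j}a_{ij}e_i\otimes e_j$ and $e_i\circ e_j=\sum_k c_{ij}^k e_k$, expands $C_A(r)=0$ as a system of scalar equations in the $a_{ij}$ and $c_{ij}^k$, and then identifies each such equation with the coefficient of $e_i$ in $-T_r(e_k^*)\circ T_r(e_t^*)+T_r(\mathfrak{r}_\circ^*(T_r(e_k^*))e_t^*+\mathfrak{l}_\circ^*(T_r(e_t^*))e_k^*)$; there is no explicit cyclic sum. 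Your route is coordinate-free: you observe that pairing $C_A(r)$ with $a^*\otimes b^*\otimes c^*$ yields (up to a global sign) the cyclic expression $\sum_{\mathrm{cyc}}\langle a^*,T_r(b^*)\circ T_r(c^*)\rangle$, and that pairing (3.7) with $c^*$ and using $\langle c^*,T_r(u^*)\rangle=-\langle u^*,T_r(c^*)\rangle$ yields the same thing. This is slightly more conceptual and makes the role of skew-symmetry transparent; the paper's version is more mechanical but requires no extra insight. Your verification of the last part (that $a^*\ast b^*=a^*\prec b^*+a^*\succ b^*$ via $\bigtriangleup_r$) is exactly what the paper leaves as ``easy to get'', so there is nothing to compare there.
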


\begin{proof}
 Let $\{e_i,...,e_n\}$ be a basis of $A$ and
$\{e_i^*,...,e_n^*\}$ be its dual basis. Suppose that $e_i\circ
e_j=\sum\limits_kc_{ij}^ke_k$ and $r=\sum_{i,j}{a_{ij}e_i\otimes
e_j}$. Hence $a_{ij}=-a_{ji}$ and
$T_r(e_l^*)=\sum\limits_ka_{kl}e_k$. Then $r$ is a solution of the
alternative Yang-Baxter equation in $(A,\circ)$ if and only if (for
any $i,k,t$)
$$\sum_{js}{a_{st}a_{ij}c_{sj}^k-a_{jk}a_{st}c_{js}^i-a_{ij}a_{ks}c_{js}^t}=0.$$
The left hand side of the above equation is precisely the
coefficient of $e_i$ in $$-T_r(e_k^*)\circ
T_r(e_t^*)+T_r(\mathfrak{r}_{\circ}^*(T_r(e_k^*))e_t^*+\mathfrak{l}_{\circ}^*(T_r(e_t^*))e_k^*).$$
Thus the first half part of the conclusion holds.  It is easy to get
the other results.\end{proof}

\begin{coro} Let $(A,\circ)$ be an alternative algebra
and $r\in A\otimes A$. Assume $r$ is skew-symmetric and there exists
a nondegenerate symmetric invariant bilinear form $\mathcal B$ on
$(A,\circ)$. Define a linear map $\varphi:A\rightarrow A^*$ by
$\langle \varphi(x),y\rangle =\mathcal B(x,y)$ for any $x,y\in A$.
Then $r$ is a solution of  the alternative Yang-Baxter equation in
$(A,\circ)$
 if
and only if $\tilde{T}_r=T_r\varphi: A\rightarrow A$ is an
$Al$-operator associated to the bimodule
$(A,\mathfrak{l}_{\circ},\mathfrak{r}_{\circ})$, that is,
$\tilde{T}_r$ satisfies the following equation
\begin{equation}
\tilde{T}_r(x)\circ\tilde{T}_r(y)=\tilde{T}_r(\tilde{T}_r(x)\circ
y+x\circ\tilde{T}_r(y)),\;\;\forall x,y\in A.
\end{equation}
Hence there is a pre-alternative algebra structure on $A$ given by
\begin{equation}x\prec y=x\circ \tilde{T}_r(y),\;\; x\succ y=\tilde{T}_r(x)\circ y,\;\;\forall x,y\in A.
\end{equation}
\end{coro}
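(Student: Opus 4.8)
The plan is to reduce Corollary 3.10 to Proposition 3.8 by transporting everything through the isomorphism $\varphi\colon A\to A^*$ induced by the nondegenerate symmetric invariant bilinear form $\mathcal B$. First I would record the two facts that make $\varphi$ the right intertwiner. Since $\mathcal B$ is symmetric and invariant, one checks directly that $\varphi$ sends the bimodule $(A,\mathfrak l_\circ,\mathfrak r_\circ)$ to the bimodule $(A^*,\mathfrak r_\circ^*,\mathfrak l_\circ^*)$ of $(A,\circ)$; concretely, for $x,y\in A$ and any $z\in A$,
\[
\langle \varphi(x\circ y),z\rangle=\mathcal B(x\circ y,z)=\mathcal B(y,z\circ x)=\langle \varphi(y),z\circ x\rangle=\langle \mathfrak r_\circ^*(x)\varphi(y),z\rangle,
\]
so $\varphi\circ\mathfrak l_\circ(x)=\mathfrak r_\circ^*(x)\circ\varphi$, and symmetrically $\varphi\circ\mathfrak r_\circ(x)=\mathfrak l_\circ^*(x)\circ\varphi$. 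Second, since $\mathcal B$ is symmetric, $\varphi$ is a self-adjoint isomorphism, i.e.\ $\varphi^*=\varphi$ under the canonical identification $A^{**}=A$; this is what guarantees that the skew-symmetry of $r\in A\otimes A$ is equivalent to the skew-symmetry of $\tilde T_r=T_r\varphi\colon A\to A$ (the bilinear form $\mathcal B_{\tilde T_r}$ on $A$ equals, up to the identification, the form attached to $r$), so the hypothesis ``$r$ skew-symmetric'' matches the standing skew-symmetry assumption needed to invoke $Al$-operators.

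Next I would do the actual substitution. By Proposition 3.8, $r$ is a skew-symmetric solution of the alternative Yang-Baxter equation in $(A,\circ)$ if and only if $T_r$ is an $Al$-operator associated to $(A^*,\mathfrak r_\circ^*,\mathfrak l_\circ^*)$, i.e.
\[
T_r(a^*)\circ T_r(b^*)=T_r\bigl(\mathfrak r_\circ^*(T_r(a^*))b^*+\mathfrak l_\circ^*(T_r(b^*))a^*\bigr),\qquad\forall a^*,b^*\in A^*.
\]
Putting $a^*=\varphi(x)$, $b^*=\varphi(y)$ and using the two intertwining identities from the previous paragraph to rewrite $\mathfrak r_\circ^*(T_r\varphi(x))\varphi(y)=\mathfrak r_\circ^*(\tilde T_r(x))\varphi(y)=\varphi(\mathfrak l_\circ(\tilde T_r(x))y)=\varphi(\tilde T_r(x)\circ y)$ and likewise $\mathfrak l_\circ^*(T_r\varphi(y))\varphi(x)=\varphi(x\circ\tilde T_r(y))$, the displayed identity becomes
\[
\tilde T_r(x)\circ\tilde T_r(y)=\tilde T_r\bigl(\tilde T_r(x)\circ y+x\circ\tilde T_r(y)\bigr),\qquad\forall x,y\in A,
\]
after applying $\varphi^{-1}$ on the outside (and using $T_r\varphi=\tilde T_r$ on the outermost $T_r$). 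Since $\varphi$ is invertible, the two conditions are equivalent, which is exactly equation (3.10). The pre-alternative structure (3.11) then comes from Proposition 2.12 applied to the $Al$-operator $\tilde T_r$ associated to $(A,\mathfrak l_\circ,\mathfrak r_\circ)$: $x\prec y=\mathfrak r_\circ(\tilde T_r(y))x=x\circ\tilde T_r(y)$ and $x\succ y=\mathfrak l_\circ(\tilde T_r(x))y=\tilde T_r(x)\circ y$.

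I expect the only real subtlety to be the bookkeeping of duals: one must be careful that $\mathfrak r_\circ^*$ and $\mathfrak l_\circ^*$ here are the operators of (1.15) acting on $A^*$ (the ones appearing in Proposition 2.4 and 3.8), not the transpose maps of (1.11), and one must verify that the invariance identity $\mathcal B(x\circ y,z)=\mathcal B(x,y\circ z)$ together with symmetry really yields both intertwining relations rather than just one — this uses $\mathcal B(x\circ y,z)=\mathcal B(z,x\circ y)=\mathcal B(z\circ x,y)$ etc. Everything else is a direct computation, and the skew-symmetry compatibility is immediate once $\varphi^*=\varphi$ is noted. No genuinely hard step arises; the statement is essentially Proposition 3.8 conjugated by a metric identification, in the same spirit as the classical passage between the tensor and operator forms of the classical Yang-Baxter equation on a quadratic Lie algebra.
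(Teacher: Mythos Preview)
Your proposal is correct and follows essentially the same route as the paper: derive the intertwining relations $\varphi\,\mathfrak l_\circ(x)=\mathfrak r_\circ^*(x)\,\varphi$ and $\varphi\,\mathfrak r_\circ(x)=\mathfrak l_\circ^*(x)\,\varphi$ from the symmetry and invariance of $\mathcal B$, then substitute $a^*=\varphi(x)$, $b^*=\varphi(y)$ into the $Al$-operator characterization of the alternative Yang--Baxter equation (Proposition~3.6 in the paper's numbering, your ``3.8''). The only cosmetic remarks are that your reference numbers are shifted relative to the paper, and that no outer $\varphi^{-1}$ is actually needed since $T_r\varphi=\tilde T_r$ already converts the right-hand side directly.
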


\begin{proof} For any $x,y\in A$, we have
$$\langle \varphi(\mathfrak{l}_\circ(x) y), z\rangle={\mathcal B}(x\circ y, z)={\mathcal B}(z,x\circ
y)={\mathcal B}(y, z\circ x)=\langle
\mathfrak{r}_\circ^*(x)\varphi(y),\ z\rangle, \;\;\forall x, y, z\in
A.$$ Hence $\varphi(\mathfrak{l}_\circ(x) y)=
\mathfrak{r}_\circ^*(x)\varphi(y)$ for any $x, y\in A$. Similarly,
$\varphi(\mathfrak{r}_\circ(x) y)=
\mathfrak{l}_\circ^*(x)\varphi(y)$ for any $x, y\in A$. Let
$a^*=\varphi(x), b^*=\varphi(y)$, then by Proposition 3.6, $r$ is a
solution of the alternative Yang-Baxter equation in $(A,\circ)$ if
and only if
$$
T_r\varphi(x)\circ T_r\varphi(y)=T_r(a^*)\circ
T_r(b^*)=T_r(\mathfrak{r}_{\circ}^*(T_r(a^*))b^*+
\mathfrak{l}_{\circ}^*(T_r(b^*))a^*)=T_r\varphi(T_r\varphi(x)\circ
y+x\circ T_r\varphi(y)).
$$
Therefore the conclusion holds.\end{proof}

\begin{remark}{\rm Equation (3.8) is exactly the {\it Rota-Baxter
relation of weight zero} for an alternative algebra (cf. \cite{Bax},
\cite{R}).}
\end{remark}

\begin{prop}
Let $(A,\circ)$ be an alternative algebra. Let $(V,L,R)$ be a
bimodule of $A$ and $(V^*,R^*,L^*)$ be its dual bimodule. Let
$T:V\rightarrow A$ be a linear map which can be identified as an
element in $A\ltimes_{R^*,L^*}V^*\otimes A\ltimes_{R^*,L^*}V^*$.
Then $T$ is an $Al$-operator of $A$ associated to $(V,L,R)$ if and
only if $r=T-\tau T$ is  a skew-symmetric solution of the
alternative Yang-Baxter equation in $A\ltimes_{R^*,L^*}V^*$.
\end{prop}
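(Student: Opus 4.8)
The plan is to reduce this to the special case already covered, namely Proposition 3.6 applied to the alternative algebra $\hat A = A\ltimes_{R^*,L^*}V^*$. First I would set up the identification carefully. Regard $T:V\to A$ as an element $r_T\in A\otimes V^*$ via equation (1.13), and hence as an element of $\hat A\otimes\hat A$, where $\hat A = A\oplus V^*$. Then $r = T-\tau T = r_T - \tau(r_T)$ is by construction skew-symmetric in $\hat A\otimes\hat A$. The associated operator $T_r:\hat A^*\to\hat A$ acts on $\hat A^* = A^*\oplus V$: on $V$ it acts (essentially) as $T$ (landing in $A$), and on $A^*$ it acts as $-T^*$ (landing in $V^*$), up to the sign conventions in (1.9)--(1.13). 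By Proposition 3.6 (applied to $\hat A$ in place of $A$), $r$ is a skew-symmetric solution of the alternative Yang-Baxter equation in $\hat A$ if and only if $T_r$ is an $Al$-operator associated to the bimodule $(\hat A^*,\mathfrak r_{\hat\circ}^*,\mathfrak l_{\hat\circ}^*)$, i.e. satisfies equation (3.7) with $\hat\circ$ the multiplication of $\hat A$.

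The core of the argument is then to unwind equation (3.7) for $T_r$ on $\hat A^* = A^*\oplus V$ into a system of component equations and to check that, of these, the only nontrivial one — the component landing in $A$, fed by inputs in $V$ — is exactly the $Al$-operator equation (2.10) for $T$ relative to $(V,L,R)$, while all the remaining components hold automatically. Concretely, I would evaluate $T_r(a^*+u)\,\hat\circ\,T_r(b^*+v)$ for $a^*,b^*\in A^*$ and $u,v\in V$, using the semidirect product formula (2.5) for $\hat\circ$ and the explicit description of the dual bimodule $(V^*,R^*,L^*)$ and its own dual (which, by Proposition 2.4 applied twice, brings one back to $(V,L,R)$ up to the canonical identification $V^{**}\cong V$). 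The $A$-valued part of the left side will be $T(u)\circ T(v)$ (the $V^*$-valued cross terms do not contribute to $A$), and the $A$-valued part of the right side, after expanding $\mathfrak r_{\hat\circ}^*$ and $\mathfrak l_{\hat\circ}^*$, will be $T(L(T(u))v + R(T(v))u)$. Equating these gives precisely (2.10). The purely-$V^*$-valued components of (3.7) will reduce to identities that follow from the bimodule axioms (2.3)--(2.4) for $(V,L,R)$ together with the definition (1.14) of the dual maps, so they impose no condition.

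The main obstacle is bookkeeping: keeping the four sign/duality conventions of Section 1 (equations (1.9), (1.11), (1.13), (1.14)) consistent when one simultaneously (i) views $T$ as $r_T\in\hat A\otimes\hat A$, (ii) forms $T_r$ via (1.9), and (iii) must recognize the second dual of $(V,L,R)$ inside the computation. I expect that, as in the analogous associative/pre-Lie statements (cf. \cite{Bai2}), the signs conspire correctly: the skew-symmetrization $T-\tau T$ is exactly what is needed so that $T_r$ restricted to the two summands of $\hat A^*$ recovers $T$ and $-T^*$ with the right signs, and the extra terms in $C_{\hat A}(r)$ coming from $-\tau T$ either vanish or pair up. A clean way to organize the verification — and the one I would actually carry out — is to choose a basis $\{e_i\}$ of $A$ and $\{\varepsilon_\alpha\}$ of $V$, write $T(\varepsilon_\alpha)=\sum_i T_{\alpha}^{\,i}e_i$, expand both $C_{\hat A}(r)=0$ and equation (2.10) in coordinates exactly as in the proof of Proposition 3.6, and observe that the two coordinate systems coincide term by term. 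Once the $A$-component is matched, the remaining components are forced, completing the equivalence.
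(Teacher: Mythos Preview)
Your overall strategy via Proposition 3.6 is sound and is a genuinely different route from the paper's proof: the paper does not pass through $T_r$ at all but computes $r_{12}\circ r_{13}$, $r_{13}\circ r_{23}$, $r_{23}\circ r_{12}$ directly in a basis of $A\oplus V^*$, and observes that $C_{\hat A}(r)$ collapses to the single expression
\[
\sum_{i,k}\Bigl(T(v_i)\circ T(v_k)-T\bigl(L(T(v_i))v_k+R(T(v_k))v_i\bigr)\Bigr)\otimes v_i^*\otimes v_k^*
\]
plus two cyclic shifts of the same expression into the other tensor slots. Your approach buys a conceptual reduction to an already-proved lemma; the paper's approach avoids having to analyze the bimodule $(\hat A^*,\mathfrak r_{\hat\circ}^*,\mathfrak l_{\hat\circ}^*)$ on the decomposition $A^*\oplus V$.

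One point in your write-up is not quite right and would need correction in a full proof: the $V^*$-valued components of the $Al$-operator equation for $T_r$ (inputs $(a^*,v)$ or $(u,b^*)$) are \emph{not} consequences of the bimodule axioms alone. If you carry out the computation you will find, for instance with $\alpha^*=a^*\in A^*$ and $\beta^*=v\in V$, that both sides land in $V^*$ and, after pairing with an arbitrary $u\in V$, the equation reads $\langle a^*,\,T(v)\circ T(u)-T(L(T(v))u)-T(R(T(u))v)\rangle=0$, which is again the $Al$-operator identity (2.10), just tested against $a^*$. So the mixed components impose the \emph{same} condition as the $(V,V)$ component, not a vacuous one; only the $(A^*,A^*)$ component is genuinely automatic (both sides are zero). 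The equivalence you want still goes through, but for the right reason. Incidentally, your fallback ``clean way'' of expanding $C_{\hat A}(r)$ directly in a basis is exactly what the paper does.
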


\begin{proof} Let $\{e_1,\cdots,e_n\}$ be a basis of $A$. Let
$\{v_1,\cdots, v_m\}$ be a basis of $V$ and $\{ v_1^*,\cdots,
v_m^*\}$ be its dual basis. Set
$T(v_i)=\sum\limits_{k=1}^na_{ik}e_k, i=1,\cdots, m$. Then
$$T=\sum_{i=1}^m T(v_i)\otimes v_i^*=\sum_{i=1}^m\sum_{k=1}^n
a_{ik}e_k\otimes v_i^*\in A\otimes V^*\subset
(A\ltimes_{R^*,L^*}V^*)\otimes (A\ltimes_{R^*,L^*}V^*).$$ Therefore
we have {\small{\begin{eqnarray*} r_{12}\circ r_{13}
&=&\sum_{i,k=1}^m\{T(v_i)\circ T(v_k)\otimes v_i^*\otimes
v_k^*-R^*(T(v_i))v_k^*\otimes v_i\otimes T(v_k)-L^*(T(v_k))v_i^*\otimes T(v_i)\otimes v_k^* \};\\
r_{23}\circ r_{12} &=&\sum_{i,j=1}^m\{T(v_k)\otimes R^*(T(v_i))v_k^*\otimes v_i^*-
v_k^*\otimes T(v_i)\circ T(v_k)\otimes v_i^*+v_k^*\otimes L^*(T(v_k))v_i^*\otimes T(v_i)\};\\
r_{13}\circ r_{23} &=&\sum_{i,k=1}^m\{v_i^*\otimes v_k^*\otimes
T(v_i)\circ T(v_k)-T(v_i)\otimes v_k^*\otimes
L^*(T(v_k))v_i^*-v_i^*\otimes T(v_k)\otimes R^*(T(v_i))v_k^*\}.
\end{eqnarray*}}}
By the definition of the dual bimodule, we know
$$L^*(T(v_k))v_i^*=\sum_{j=1}^m\langle v_i^*, L(T(v_k))v_j\rangle v_j^*,\;\;
  R^*(T(v_k))v_i^*=\sum_{j=1}^m\langle v_i^*, R(T(v_k))v_j\rangle v_j^*.$$
Then
\begin{eqnarray*}
&&\sum_{i,k=1}^m T(v_i)\otimes v_k^*\otimes L^*(T(v_k))v_i^*
=\sum_{i,k=1}^m \sum_{j=1}^m\langle v_j^*, L(T(v_k))v_i\rangle
T(v_j)\otimes v_k^*\otimes v_i^*\\&&=\sum_{i,k=1}^m T(\langle v_j^*,
L(T(v_k))v_i\rangle v_j)\otimes v_k^*\otimes v_i^*=\sum_{i,k=1}^m
T(L(T(v_k))v_i)\otimes v_k^*\otimes v_i^*.\end{eqnarray*} Hence, we
get\begin{eqnarray*} &&r_{12}\circ r_{13}+r_{13}\circ
r_{23}-r_{23}\circ r_{12}\\&&=\sum_{i,k=1}^m\{\{ T(v_i)\circ
T(v_k)-T(L(T(v_i))v_k)-T(R(T(v_k))v_i)\}\otimes v_i^*\otimes
v_k^*\\&&+v_k^*\otimes\{T(v_i)\circ
T(v_k)-T(L(T(v_i))v_k)-T(R(T(v_k)))v_i)\}\otimes
v_i^*\\&&+v_i^*\otimes v_k^*\otimes\{T(v_i)\circ
T(v_k)-T(L(T(v_i))v_k)-T(R(T(v_k))v_i)\}\} .\end{eqnarray*} So $r$
is a solution of the alternative Yang-Baxter equation in
$A\ltimes_{R^*,L^*}V^*$ if and only if
 $T$ is an $Al$-operator of $A$ associated to
 $(V,L,R)$.\end{proof}

\begin{prop} {\rm (\cite{G})}\quad Let $(A, \circ)$ be an alternative algebra
and $r\in A\otimes A$. Suppose that $r$ is skew-symmetric and
nondegenerate. Then $r$ is a solution of the alternative Yang-Baxter
equation in $A$ if and only if the bilinear form $\omega$ induced by
$r$ through equation (1.10) is a symplectic form.
\end{prop}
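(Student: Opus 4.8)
The plan is to reduce everything to Proposition 3.6 and then translate the operator identity for $T_r$ into a bilinear identity for $\omega$. Since $r$ is skew-symmetric and nondegenerate, the form $\omega=\mathfrak{B}$ of equation (1.10) is automatically skew-symmetric (notation item (4)) and nondegenerate, so that $\omega$ being symplectic is equivalent to $\omega$ being closed in the sense of equation (1.18). Hence the whole content to be proved is the equivalence $C_A(r)=0 \Longleftrightarrow \omega$ is closed.

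First I would recall from Proposition 3.6 that, for skew-symmetric $r$, one has $C_A(r)=0$ if and only if $T_r$ is an $Al$-operator associated to $(A^*,\mathfrak{r}_\circ^*,\mathfrak{l}_\circ^*)$, i.e.
\[
T_r(a^*)\circ T_r(b^*)=T_r\bigl(\mathfrak{r}_\circ^*(T_r(a^*))b^*+\mathfrak{l}_\circ^*(T_r(b^*))a^*\bigr),\qquad \forall\, a^*,b^*\in A^*.
\]
Because $T_r$ is invertible, I would substitute $x=T_r(a^*)$, $y=T_r(b^*)$ (so $a^*=T_r^{-1}(x)$, $b^*=T_r^{-1}(y)$), apply $T_r^{-1}$ to both sides, and pair the resulting identity $T_r^{-1}(x\circ y)=\mathfrak{r}_\circ^*(x)T_r^{-1}(y)+\mathfrak{l}_\circ^*(y)T_r^{-1}(x)$ against an arbitrary $z\in A$. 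Using equation (1.10) on the left, and the definition (1.14) of the dual action on the right so that $\langle\mathfrak{r}_\circ^*(x)T_r^{-1}(y),z\rangle=\langle T_r^{-1}(y),z\circ x\rangle=\omega(y,z\circ x)$ and $\langle\mathfrak{l}_\circ^*(y)T_r^{-1}(x),z\rangle=\langle T_r^{-1}(x),y\circ z\rangle=\omega(x,y\circ z)$, this shows that $T_r$ being an $Al$-operator is equivalent to
\[
\omega(x\circ y,z)=\omega(y,z\circ x)+\omega(x,y\circ z),\qquad\forall\, x,y,z\in A.
\]
Finally I would apply the skew-symmetry of $\omega$ to the two terms on the right, which turns this into $\omega(x\circ y,z)+\omega(y\circ z,x)+\omega(z\circ x,y)=0$, i.e. exactly the closedness condition (1.18); combined with nondegeneracy, this is the symplectic condition.

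Since every step above is reversible, the two implications will come out simultaneously, and there is essentially no genuine obstacle here. The only points requiring care are: keeping straight which of $\mathfrak{l}_\circ^*,\mathfrak{r}_\circ^*$ produces $z\circ x$ and which produces $y\circ z$ when unwinding (1.14), and the single clean use of skew-symmetry that converts the "Leibniz-type" identity for $\omega$ into the cyclic identity (1.18). (Alternatively one could bypass Proposition 3.6 and argue directly by pairing $C_A(r)\in A^{\otimes 3}$ against three elements of $A^*$ and rewriting via $T_r$, but routing through the $Al$-operator formulation is shorter.)
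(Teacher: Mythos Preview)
Your argument is correct. Note, however, that the paper does not supply its own proof of this proposition: it is simply cited from \cite{G}. There is therefore nothing in the paper to compare your argument against directly. Two minor slips in cross-referencing: the dual-action identity you invoke is equation~(1.13), not (1.14), and the closedness condition is equation~(2.17) in Definition~2.15, not ``(1.18)''; neither affects the mathematics.

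It may be worth remarking that the paper does prove the exact analogue of this statement in the pre-alternative setting (Proposition~8.6), and there it takes the route you mention parenthetically at the end: it writes $r=\sum_i a_i\otimes b_i$, expresses $x=T_r(u^*)$, $y=T_r(v^*)$, $z=T_r(w^*)$, and pairs each term of the tensor equation directly against $u^*\otimes v^*\otimes w^*$ to identify it with a value of the bilinear form, then appeals to the operator characterization (Proposition~8.3, the counterpart of Proposition~3.6) only at the last step. Your approach inverts the order --- first passing to the $Al$-operator identity via Proposition~3.6, then translating --- which is slightly cleaner since it avoids writing out $r$ in components and makes the reversibility of each step transparent.
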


\begin{coro} Let $(A,\prec,\succ)$ be a pre-alternative algebra. Let $\{e_1,\cdots, e_n\}$
be a basis of $A$ and $\{e_1^*,\cdots, e_n^*\}$ be the dual basis.
Then
\begin{equation}
r=\sum_i{(e_i\otimes e_i^*-e_i^*\otimes e_i)},\end{equation} is a
nondegenerate solution of the alternative Yang-Baxter equation in
$As(A)\ltimes_{\mathfrak{r}^*_{\prec},\mathfrak{l}^*_{\succ}}A^*$.
Moreover, the symplectic form $\omega_p$ induced by $r$ through
equation (1.10) is given by
\begin{equation}\omega_p(x+a^*,y+b^*)=\langle a^*,y\rangle -\langle
x,b^*\rangle ,\;\; \forall x,y\in A,a^*,b^*\in A^*.\end{equation}
\end{coro}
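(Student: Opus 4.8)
The plan is to recognize $r$ as $T-\tau T$ for the identity operator $T=\mathrm{id}_A$ and then invoke Proposition 3.12.

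\emph{Step 1.} By Proposition 2.9, $(A,\mathfrak{l}_{\succ},\mathfrak{r}_{\prec})$ is a bimodule of the associated alternative algebra $As(A)$, and by Proposition 2.4 its dual bimodule is $(A^*,\mathfrak{r}_{\prec}^*,\mathfrak{l}_{\succ}^*)$. The first thing to check is that $\mathrm{id}_A\colon A\to A$ is an $Al$-operator of $As(A)$ associated to $(A,\mathfrak{l}_{\succ},\mathfrak{r}_{\prec})$: indeed, the defining equation (2.10) with $Al=\mathrm{id}$, $L=\mathfrak{l}_{\succ}$, $R=\mathfrak{r}_{\prec}$ reads $x\circ y=x\succ y+x\prec y$, which is exactly the definition (2.9) of $As(A)$. (Equivalently, this is the implication $(1)\Rightarrow(3)\Rightarrow(2)$ of Proposition 2.13 applied to $D=\mathrm{id}$.)

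\emph{Step 2.} Under the identification (1.12)--(1.13) of a linear map $T\colon V\to A$ with an element of $(A\ltimes_{R^*,L^*}V^*)^{\otimes 2}$, taking $V=A$ and $(L,R)=(\mathfrak{l}_{\succ},\mathfrak{r}_{\prec})$, the operator $T=\mathrm{id}_A$ corresponds to $r_T=\sum_i e_i\otimes e_i^*\in A\otimes A^*$. Hence $r=r_T-\tau(r_T)=\sum_i(e_i\otimes e_i^*-e_i^*\otimes e_i)$ is precisely (3.12), and it is skew-symmetric. By Proposition 3.12 (with the role of $A$ played by $As(A)$ and $(V,L,R)=(A,\mathfrak{l}_{\succ},\mathfrak{r}_{\prec})$), $r$ is then a skew-symmetric solution of the alternative Yang-Baxter equation in $As(A)\ltimes_{\mathfrak{r}_{\prec}^*,\mathfrak{l}_{\succ}^*}A^*$.

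\emph{Step 3.} For nondegeneracy and the explicit form (3.13), I would compute $T_r$ directly. Identifying $(A\oplus A^*)^*$ with $A^*\oplus A$ through the pairing $\langle c^*+z,\,y+b^*\rangle=\langle c^*,y\rangle+\langle b^*,z\rangle$, equation (1.9) gives (up to the sign conventions) $T_r(c^*+z)=z-c^*$, which is visibly invertible, so $r$ is nondegenerate; then (1.10) with $T_r^{-1}(x+a^*)=x-a^*$ yields $\omega_p(x+a^*,y+b^*)=\langle a^*,y\rangle-\langle x,b^*\rangle$. Alternatively, once nondegeneracy is established, Proposition 3.14 already guarantees that the induced bilinear form $\omega_p$ is symplectic, and only its shape (3.13) needs the short computation just indicated. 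The only delicate point is bookkeeping: lining up the bimodule $(A,\mathfrak{l}_{\succ},\mathfrak{r}_{\prec})$ and its dual against the subscripts $\mathfrak{r}_{\prec}^*,\mathfrak{l}_{\succ}^*$ in the statement, and tracking the pairing/sign conventions so that $\omega_p$ comes out with the signs displayed in (3.13); beyond that, the corollary is a direct specialization of Proposition 3.12 with no essential new computation.
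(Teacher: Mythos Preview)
Your approach is correct and coincides with the paper's: the paper's one-line proof simply says ``It follows from the fact that $T=\mathrm{id}$ is an $Al$-operator of $As(A)$ associated to the bimodule $(A,\mathfrak{l}_{\succ},\mathfrak{r}_{\prec})$,'' which is exactly your Steps~1--2 (invoke Proposition~2.9 to get the bimodule, then apply the $Al$-operator criterion to conclude $r=T-\tau T$ solves the alternative Yang--Baxter equation). Your Step~3 spells out the nondegeneracy and the shape of $\omega_p$, which the paper leaves implicit.

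One correction on labeling: the result you invoke in Step~2 is Proposition~3.9 in the paper's numbering (the equivalence between $T$ being an $Al$-operator and $r=T-\tau T$ solving the alternative YBE in $A\ltimes_{R^*,L^*}V^*$), not Proposition~3.12; and the result you cite in Step~3 about nondegenerate solutions inducing symplectic forms is Proposition~3.10, not~3.14. With those labels fixed, your write-up matches the paper's argument with added detail.
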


\begin{proof} It follows from the fact that $T={\rm id}$ is an $Al$-operator of $As(A)$ associated to the bimodule
$(A,\mathfrak{l}_{\succ},\mathfrak{r}_{\prec})$.\end{proof}

\begin{prop}

Let ($A,\circ,\omega$) be an alternative algebra with a symplectic
form $\omega$. Suppose that there is a compatible pre-alternative
algebra structure ``$\prec,\succ$" on $A$ given by equation (2.16)
and a pre-alternative algebra structure ``$\prec_*, \succ_*$ on
$A^*$ given by equation (3.7), where the solution $r$ of the
alternative Yang-Baxter equation in $(A,\circ)$ is induced by
$\omega$ through equation (1.10). Let $a^*\ast b^*=a^* \prec_*
b^*+a^*\succ_* b^*$ for any $a^*,b^*\in A^*$. Then there is a
pre-alternative algebra structure ``$\prec_0,\succ_0$" on $A\oplus
A^*$ given by (for any $x,y\in A, a^*,b^*\in A^*$)
\begin{equation}
(x,a^*)\prec_0(y,b^*)=(x\prec
y+\mathfrak{l}_*^*(b^*)x,a^*\prec_*b^*+
\mathfrak{l}_{\circ}^*(y)a^*),
\end{equation}
\begin{equation}
(x,a^*)\succ_0(y,b^*)=(x\succ
y+\mathfrak{r}_*^*(a^*)y,a^*\succ_*b^*+
\mathfrak{r}_{\circ}^*(x)b^*).\end{equation} Moreover, its
associated alternative algebra is just the Drinfeld's double $D(A)$
for the coboundary alternative $D$-bialgebra $(A,\circ,\Delta_r)$

\end{prop}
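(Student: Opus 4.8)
The plan is to exhibit an explicit isomorphism between the associated alternative algebra of $(A\oplus A^*, \prec_0, \succ_0)$ and the Drinfeld double $D(A)$ of the coboundary alternative $D$-bialgebra $(A,\circ,\Delta_r)$, and then to deduce the pre-alternative algebra axioms for $(\prec_0,\succ_0)$ from known structures. First I would verify directly that $(A\oplus A^*, \prec_0, \succ_0)$ is a pre-alternative algebra. The cleanest route is to recognize this as an instance of Proposition 2.14 (or Proposition 3.15): by Corollary 3.16, $T={\rm id}$ is an $Al$-operator of $As(A)$ associated to $(A,\mathfrak{l}_\succ,\mathfrak{r}_\prec)$, and the corresponding skew-symmetric nondegenerate solution $r$ of the alternative Yang-Baxter equation in $As(A)\ltimes_{\mathfrak{r}^*_\prec,\mathfrak{l}^*_\succ}A^*$ is the one from equation (3.12). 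Here, however, the bimodule data comes from $(A,\circ)$ with its symplectic form, so I would instead combine Proposition 2.18 (which produces $\prec,\succ$ from $\omega$), Proposition 3.6 / Corollary 3.9 (which produces $\prec_*,\succ_*$ on $A^*$ from the solution $r$), and then show that the semidirect-type formulas (3.14)--(3.15) are precisely the pre-alternative structure attached to the $Al$-operator ${\rm id}: A\oplus A^* \to A\oplus A^*$ on the bimodule $(A\oplus A^*, L_{\oplus}, R_{\oplus})$ where $L_\oplus, R_\oplus$ are read off from the right-hand sides of (3.14)--(3.15).

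The key steps, in order, are: (i) identify the operators appearing in (3.14)--(3.15) — namely $x\mapsto (x\prec y + \mathfrak{l}^*_*(b^*)x, \ldots)$ on the left and the analogous ones on the right — as $L_\oplus(y,b^*)$ and $R_\oplus(x,a^*)$ respectively, and check that $(L_\oplus, R_\oplus)$ is a bimodule of the associated alternative algebra $As(A)\oplus As(A^*)$ with the cross terms; (ii) observe that this bimodule structure together with ${\rm id}$ satisfying the $Al$-operator identity is equivalent, via Proposition 2.12, to $(\prec_0,\succ_0)$ being a pre-alternative algebra; (iii) compute the associated alternative product $\circ_0 = \prec_0 + \succ_0$ explicitly and compare it termwise with the Drinfeld double multiplication $\star$ of Definition 3.3 specialized to $\Delta = \Delta_r$. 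For step (iii), I would use that under the identification induced by $\omega$, the actions $\mathfrak{l}^*_\circ, \mathfrak{r}^*_\circ$ on $A^*$ correspond to $\mathfrak{r}_\circ,\mathfrak{l}_\circ$ on $A$ (as in the proof of Corollary 3.9), and that the comultiplication $\Delta_r$ with $r$ from (1.10) is exactly the one making $\prec_*,\succ_*$ from (3.7) the dual pre-alternative structure — so the "$\bullet$" actions in (3.1) match $\mathfrak{r}^*_\circ, \mathfrak{l}^*_\circ$ and the "$\cdot$" actions match $\mathfrak{r}^*_*, \mathfrak{l}^*_*$.

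The main obstacle I expect is step (iii): carefully matching the four blocks of the double's multiplication $\star$ in (3.1) — the $A\circ A$ part, the $A^*\ast A^*$ part, and the two mixed parts involving $f\cdot b, a\cdot g, f\bullet b, a\bullet g$ — against the two blocks of $\prec_0+\succ_0$ in (3.14)--(3.15). This requires unwinding several layers of dualization: $\Delta_r$ is defined by (3.4) in terms of $\circ$, the induced product $\ast$ on $A^*$ is what Corollary 3.9 / Proposition 3.6 call the associated alternative structure $A^*(r)$, and the symplectic identification $\varphi: A\to A^*$ (from Proposition 2.18's proof) intertwines $\mathfrak{l}_\circ$ with $\mathfrak{r}^*_\circ$ and $\mathfrak{r}_\circ$ with $\mathfrak{l}^*_\circ$. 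Once these identifications are in place, the comparison is a finite bookkeeping check that each of the eight summands on one side equals a corresponding summand on the other. A secondary subtlety is confirming that the solution $r$ induced by $\omega$ via (1.10) genuinely coincides (up to the identification) with the canonical $r$ of (3.12), so that Corollary 3.16 applies; this follows from Proposition 3.14 together with uniqueness of the symplectic form attached to a nondegenerate skew-symmetric $r$. With all identifications established, both the pre-alternative axioms for $(\prec_0,\succ_0)$ and the identification of its associated alternative algebra with $D(A)$ follow, completing the proof.
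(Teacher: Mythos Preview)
Your approach differs substantially from the paper's, and while it is not wrong in outline, it misses the simplification that makes the argument go through cleanly.

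The paper's proof hinges on one concrete observation you never isolate: because $\omega$ is nondegenerate, $T_r$ is invertible, and one checks directly the four identities
\[
\mathfrak{l}_*^*(b^*)x = x\prec T_r(b^*),\quad
\mathfrak{l}_\circ^*(y)a^* = a^*\prec_* T_r^{-1}(y),\quad
\mathfrak{r}_*^*(a^*)y = T_r(a^*)\succ y,\quad
\mathfrak{r}_\circ^*(x)b^* = T_r^{-1}(x)\succ_* b^*.
\]
Substituting these into (3.12)--(3.13) rewrites $\prec_0,\succ_0$ on $A\oplus A^*$ so that the $A$-component involves only $\prec,\succ$ on $A$ (with arguments shifted by $T_r$) and the $A^*$-component only $\prec_*,\succ_*$ on $A^*$ (with arguments shifted by $T_r^{-1}$). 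Each pre-alternative identity on $A\oplus A^*$ then decouples into the corresponding identities on $A$ and on $A^*$ separately, which hold by hypothesis. The paper carries out one such identity $(x,y,z)_m+(y,x,z)_r=0$ explicitly and remarks the others are similar.

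Your route via $Al$-operators is circular in practice: your step (i), showing $(L_\oplus,R_\oplus)$ is a bimodule of $D(A)$, is \emph{equivalent} axiom-by-axiom to showing $(\prec_0,\succ_0)$ is pre-alternative (compare the four conditions (2.3)--(2.4) for $(\mathfrak{l}_{\succ_0},\mathfrak{r}_{\prec_0})$ with the four identities of Remark~2.6), so invoking Proposition~2.11 with $Al={\rm id}$ buys nothing. Without the $T_r$-identities above, step (i) is just as opaque as the direct check. Your opening paragraph also conflates $D(A)$ with the semidirect product $As(A)\ltimes_{\mathfrak{r}^*_\prec,\mathfrak{l}^*_\succ}A^*$ and invokes the canonical $r$ of Corollary~3.11, which lives in the latter, not in $A$; these are isomorphic a posteriori (Proposition~3.13) but that isomorphism is not available to you here. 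For step (iii), the match between $\circ_0=\prec_0+\succ_0$ and the double product (3.1) is immediate from Definition~3.1: one reads off $a^*\bullet y=\mathfrak{l}_\circ^*(y)a^*$, $x\bullet b^*=\mathfrak{r}_\circ^*(x)b^*$, $a^*\cdot y=\mathfrak{r}_*^*(a^*)y$, $x\cdot b^*=\mathfrak{l}_*^*(b^*)x$, with no further machinery needed.
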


\begin{proof} In fact, since $r$ is invertible, it is easy to show that (for
any $x,y\in A$ and $a^*,b^*\in A^*$)
$$\mathfrak{l}_*^*(b^*)x=x\prec T_r(b^*),
\mathfrak{l}_{\circ}^*(y)a^*=a\prec_*T_r^{-1}(y),
\mathfrak{r}_*^*(a^*)y=T_r(a^*)\succ y,
\mathfrak{r}_{\circ}^*(x)b^*=T_r^{-1}(x)\succ_*b^*.$$ So for any
$z\in A,c^*\in A^*$,

{\small \begin{eqnarray*} &
&((x,a^*)\succ_0(y,b^*))\prec_0(z,c^*)\\&=&((x+T_r(a^*))\succ
y,(T_r^{-1}(x)+a^*)\succ_*b^*)\prec_0(z,c^*)\\
&=&((x\succ y)\prec z+(x\succ y)\prec T_r(c^*)+(T_r(a^*)\succ
y)\prec z+(T_r(a^*)\succ y)\prec T_r(c^*),\\ & &(T_r^{-1}(x)\succ_*
b^*)\prec_*T_r^{-1}(z)+(T_r^{-1}(x)\succ_*b^*)\prec_*c^*+(a^*\succ
b^*)\prec_*T_r^{-1}(z)+(a^*\succ_*b^*)\prec_*c^*).
\end{eqnarray*}}
Similarly, {\small \begin{eqnarray*}
& &((y,b^*)\prec_0(x,a^*))\prec_0(z,c^*)\\
&=&((y\prec x)\prec z+(y\prec x)\prec T_r(c^*)+(y\prec
T_r(a^*))\prec z+(y\prec T_r(a^*))\prec T_r(c^*),\\
& &(b^*\prec_*T_r^{-1}(x))\prec_* T_r^{-1}(z)+(b^*\prec_*a^*)\prec_*
T_r^{-1}(z)+(b^*\prec_* T_r^{-1}(x))\prec_* c^*+(b^*\prec_*
a^*)\prec_*
c^*),\\
& &(x,a^*)\succ_0((y,b^*)\prec_0(z,c^*))\\
&=&(x\succ(y\prec T_r(c^*)+x\succ(y\prec z)+T_r(a^*)\succ(y\prec
z)+T_r(a^*)\succ(y\prec T_r(c^*)),\\
 &
&T_r^{-1}(x)\succ_*(b^*\prec_*
c^*)+T_r^{-1}(x)\succ_*(b^*\prec_*T_r^{-1}(z))+a^*\succ_*(b^*\prec_*c^*)+a^*\succ_*(b^*\prec_*T_r^{-1}(z)),\\
& &(y,b^*)\prec_*((x,a^*)\bullet(z,c^*))\\
&=&(y\prec(T_r(a^*)\circ T_r(c^*))+y\prec(T_r(a^*)\prec
z)+y\prec(x\succ T_r(c^*))+y\prec(x\circ z)+y\prec(x\prec
T_r(c^*))\\ & &+y\prec(T_r(a^*)\succ z), b^*\prec_*(a^*\ast
c^*)+b^*\prec_*(a^*\prec_*T_r^{-1}(z))+b^*\prec_*(T_r^{-1}(x)\succ_*c^*)+\\
& &b^*\prec_*(T_r^{-1}(x)\ast
T_r^{-1}(z))+b^*\prec_*(T_r^{-1}(x)\prec_*c^*)+b^*\prec_*(a^*\succ_*T_r^{-1}(z))),
\end{eqnarray*}}
where $\bullet=\prec_0+\succ_0$. Hence
\begin{eqnarray*}
&
&((x,a^*)\succ_0(y,b^*))\prec_0(z,c^*)+((y,b^*)\prec_0(x,a^*))\prec_0(z,c^*)\\
&=&
(x,a^*)\succ_0((y,b^*)\prec_0(z,c^*))+(y,b^*)\prec_*((x,a^*)\bullet(z,c^*)).
\end{eqnarray*}
By a similar study, we prove that  $(\prec_0,\succ_0)$ also
satisfies equation (2.8) and the second equation in equation (2.7).
\end{proof}

\begin{prop} Let $(A,\circ)$ be an alternative algebra.

(1) For any skew-symmetric solution $r$ of the alternative
Yang-Baxter equation in $(A,\circ)$, the Drinfeld's double $D(A)$
for the coboundary alternative $D$-bialgebra $(A,\circ,\Delta_r)$ is
isomorphic to
$A\ltimes_{\mathfrak{r}_{\circ}^*,\mathfrak{l}_{\circ}^*}A^*$ as
alternative algebras.

(2) The skew-symmetric
 solutions of the alternative Yang-Baxter equation in $(A,\circ)$ are in one to one
correspondence with the linear maps $T_r:A^*\rightarrow A$ whose
graphs
\begin{equation}
graph(T_r)=\{(T_r(a^*),a^*)\in
A\ltimes_{\mathfrak{r}_{\circ}^*,\mathfrak{l}_{\circ}^*}A^*|a^*\in
A^*\}\end{equation} are Lagrangian subalgebras of
$A\ltimes_{\mathfrak{r}_{\circ}^*,\mathfrak{l}_{\circ}^*}A^*$ with
respect to the bilinear form given by equation (3.2). Consequently
every alternative subalgebra  which is also a Lagrangian
$graph(T_r)$ of
$A\ltimes_{\mathfrak{r}_{\circ}^*,\mathfrak{l}_{\circ}^*}A^*$,
carries a pre-alternative algebra structure defined by (for any
$a^*,b^*\in A^*$)
\begin{equation}
(T_r(a^*),a^*)\prec(T_r(b^*),b^*)=
(T_r(\mathfrak{l}_{\circ}^*(T_r(b^*))a^*),\mathfrak{l}_{\circ}^*(T_r(b^*))a^*),\end{equation}
\begin{equation}
(T_r(a^*),a^*)\succ(T_r(b^*),b^*)=
(T_r(\mathfrak{r}_{\circ}^*(T_r(a^*))b^*),\mathfrak{r}_{\circ}^*(T_r(a^*))b^*).\end{equation}

\end{prop}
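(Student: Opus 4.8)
The plan is to produce an explicit linear isomorphism for part (1) and to identify $graph(T_r)$ through the $Al$-operator criterion of Proposition 3.6 for part (2); both parts ultimately reduce to that criterion. For (1), note first that $D(A)$ is an alternative algebra by Theorem 3.4 and $A\ltimes_{\mathfrak{r}_\circ^*,\mathfrak{l}_\circ^*}A^*$ is one since $(A^*,\mathfrak{r}_\circ^*,\mathfrak{l}_\circ^*)$ is a bimodule (Proposition 2.4), so it suffices to produce a multiplicative linear bijection. Define $\phi\colon D(A)\to A\ltimes_{\mathfrak{r}_\circ^*,\mathfrak{l}_\circ^*}A^*$ by $\phi(x+a^*)=(x+T_r(a^*),a^*)$ for $x\in A$, $a^*\in A^*$; this is a linear bijection with inverse $(y,a^*)\mapsto(y-T_r(a^*))+a^*$. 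By bilinearity it suffices to verify multiplicativity on the four types of products $A\times A$, $A^*\times A^*$, $A\times A^*$, $A^*\times A$. The first is immediate since $A$ is a subalgebra of both sides. On $A^*\times A^*$, the multiplication of $A^*\subset D(A)$ is by Proposition 3.6 the associated alternative algebra $A^*(r)$ of the pre-alternative structure (3.7), and the $Al$-operator identity $T_r(a^*)\circ T_r(b^*)=T_r(\mathfrak{r}_\circ^*(T_r(a^*))b^*+\mathfrak{l}_\circ^*(T_r(b^*))a^*)$ makes $\phi|_{A^*}$ a homomorphism onto $graph(T_r)$. For the mixed products one reads off from Definition 3.1 that $x\bullet a^*=\mathfrak{r}_\circ^*(x)a^*$ and $a^*\bullet x=\mathfrak{l}_\circ^*(x)a^*$, and then, expanding $x\cdot a^*$ and $a^*\cdot x$ via the coboundary formula (3.4) for $\Delta_r$ and using skew-symmetry of $r$ in the forms $\sum_i\langle f,b_i\rangle a_i=T_r(f)$ and $\sum_i\langle f,a_i\rangle b_i=-T_r(f)$, one obtains $x\cdot a^*=x\circ T_r(a^*)-T_r(\mathfrak{r}_\circ^*(x)a^*)$ and $a^*\cdot x=T_r(a^*)\circ x-T_r(\mathfrak{l}_\circ^*(x)a^*)$. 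Substituting these into $\phi(x\star a^*)$ and $\phi(a^*\star x)$, the $T_r$-terms cancel and one recovers exactly $\phi(x)\ast\phi(a^*)=(x\circ T_r(a^*),\mathfrak{r}_\circ^*(x)a^*)$ and $\phi(a^*)\ast\phi(x)=(T_r(a^*)\circ x,\mathfrak{l}_\circ^*(x)a^*)$, the products computed from (2.5) with $L=\mathfrak{r}_\circ^*$, $R=\mathfrak{l}_\circ^*$.

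For (2), a skew-symmetric $r\in A\otimes A$ is the same datum as a linear map $T_r\colon A^*\to A$ satisfying $\langle a^*,T_r(b^*)\rangle=-\langle b^*,T_r(a^*)\rangle$; for such a $T_r$ the subspace $graph(T_r)$ is automatically isotropic for the form (3.2) and has dimension $\dim A=\tfrac12\dim(A\oplus A^*)$, hence is Lagrangian. It therefore only remains to characterize when $graph(T_r)$ is a subalgebra. Using (2.5), $(T_r(a^*),a^*)\ast(T_r(b^*),b^*)=(T_r(a^*)\circ T_r(b^*),\ \mathfrak{r}_\circ^*(T_r(a^*))b^*+\mathfrak{l}_\circ^*(T_r(b^*))a^*)$, and this lies in $graph(T_r)$ for all $a^*,b^*$ precisely when $T_r$ satisfies the $Al$-operator identity (3.6), which by Proposition 3.6 holds exactly when $r$ is a skew-symmetric solution of the alternative Yang-Baxter equation; this gives the asserted bijection. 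Finally, when $graph(T_r)$ is a Lagrangian subalgebra, $T_r$ is an $Al$-operator associated to $(A^*,\mathfrak{r}_\circ^*,\mathfrak{l}_\circ^*)$, so Proposition 2.11 equips $A^*$ with the pre-alternative structure (3.7) whose associated alternative algebra is $A^*(r)$; the map $\iota\colon a^*\mapsto(T_r(a^*),a^*)$ is then an algebra isomorphism from $A^*(r)$ onto $graph(T_r)$ (immediate from the product formula above together with the $Al$-operator identity), and transporting (3.7) along $\iota$ yields precisely formulas (3.18)-(3.19), which therefore define a compatible pre-alternative structure on the subalgebra $graph(T_r)$.

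I expect the only real obstacle to be bookkeeping: carefully propagating the coboundary $\Delta_r$ through the definitions of $\cdot$ and $\bullet$ in (3.1) and tracking all signs via skew-symmetry of $r$ in the mixed-product step of part (1). Everything else is a direct translation through Propositions 2.11 and 3.6.
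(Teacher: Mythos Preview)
Your proof is correct and follows essentially the same route as the paper: the isomorphism $\phi$ is the paper's map $\lambda$, and both arguments hinge on the $Al$-operator criterion of Proposition 3.6 together with the mixed-product identities $x\cdot a^*=x\circ T_r(a^*)-T_r(\mathfrak{r}_\circ^*(x)a^*)$ and $a^*\cdot x=T_r(a^*)\circ x-T_r(\mathfrak{l}_\circ^*(x)a^*)$ (the paper packages these two as the single identity (3.17)). The only cosmetic differences are that you verify multiplicativity on the four product types separately while the paper checks a general element at once, and you deduce Lagrangian from isotropy plus half-dimensionality whereas the paper invokes the complementary isotropic $A$; neither affects the substance.
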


\begin{proof}
(1) Let $r$ be a skew-symmetric solution of  the alternative
Yang-Baxter equation in $(A,\circ)$. Let the product in $A^*(r)$ be
$\ast$, then by Proposition $3.6$ we know $a^*\ast
b^*=\mathfrak{r}_{\circ}^*(T_r(a^*))b^*+\mathfrak{l}_{\circ}^*(T_r(b^*)a^*)$,
for any $a^*,b^*\in A^*$. We claim that for any $x,y\in A,a^*,b^*\in
A^*$ we have
\begin{equation}
\mathfrak{r}_{\ast}^*(a^*)y+\mathfrak{l}_{\ast}^*(b^*)x=T_r(a^*)\circ
y+x\circ
T_r(b^*)-T_r(\mathfrak{r}_{\circ}^*(x)b^*+\mathfrak{l}_{\circ}^*(y)a^*).
\end{equation}
In fact, it follows from the following computations (for any $c^*\in
A^*$):
\begin{eqnarray*}
\langle\mathfrak{r}_*^*(a^*)y+\mathfrak{l}_*^*(b^*)x,c^*\rangle&=&\langle
y,c^*\ast
a^*\rangle+\langle x,b^*\ast c^*\rangle\\
&=&\langle
y,\mathfrak{r}_{\circ}^*(T_r(c^*))a^*+\mathfrak{l}_{\circ}^*(T_r(a^*))c^*\rangle+
\langle x,\mathfrak{r}_{\circ}^*(T_r(b^*))c^*+\mathfrak{l}_{\circ}^*(T_r(c^*))b^*\rangle\\
&=&\langle y\circ T_r(c^*),a^*\rangle+\langle T_r(a^*)\circ
y,c^*\rangle+\langle x\circ
T_r(b^*),c^*\rangle+\langle T_r(c^*)\circ x,b^*\rangle\\
&=&\langle T_r(a^*)\circ y+x\circ
T_r(b^*)-T_r(\mathfrak{r}_{\circ}^*(x)b^*+\mathfrak{l}_{\circ}^*(y)a^*),c^*\rangle,
\end{eqnarray*}
where we use the fact that $\langle T_r(a^*),b^*\rangle=-\langle
a^*,T_r(b^*)\rangle$, which follows from the fact that $r$ is
skew-symmetric. Define a linear map $\lambda:(D(A)=A\oplus
A^*,\bullet)\rightarrow
(A\ltimes_{\mathfrak{r}_{\circ}^*,\mathfrak{l}_{\circ}^*}A^*,\star)$
by $$\lambda((x,a^*))=(T_r(a^*)+x,a^*),\;\;\forall x\in A, a^*\in
A^*.$$ Then we have
\begin{eqnarray*}
\lambda((x,a^*))\star\lambda((y,b^*))
&=&((T_r(a^*)+x)\circ(T_r(b^*)+y),\mathfrak{r}^*_{\circ}(T_r(a^*)+x)b^*+
\mathfrak{l}_{\circ}^*(T_r(b^*)+x)a^*)\\
&=&(T_r(a^*\ast
b^*+\mathfrak{r}_{\circ}^*(x)b^*+\mathfrak{l}_{\circ}^*(y)a^*)+x\circ
y+\mathfrak{r}_{\ast}^*(a^*)y+\mathfrak{l}_*^*(b^*)x,a^*\ast
b^*+\\ & &\mathfrak{r}_{\circ}^*(x)b^*+\mathfrak{l}_{\circ}^*(y)a^*)\\
&=&\lambda((x,a^*)\bullet(y,b^*)),
\end{eqnarray*}
where equations (3.6) and (3.17) are used. Furthermore, it is easy
to see that $\lambda$ is bijective. Therefore $\lambda$ is an
isomorphism of alternative algebras.

(2) First, we have $\lambda(A^*(r))=graph(T_r)$. So $graph(T_r)$ is
a subalgebra of
$A\ltimes_{\mathfrak{r}_{\circ}^*,\mathfrak{l}_{\circ}^*}A^*$. Since
$r$ is skew-symmetric, $graph(T_r)$ is isotropic with respect to the
bilinear form defined by equation (3.2). Moreover, it has a
complementary isotropic algebra $\lambda(A)=A$. So it is a
Lagrangian subalgebra of
$A\ltimes_{\mathfrak{r}_{\circ}^*,\mathfrak{l}_{\circ}^*}A^*$.
Conversely, let $T:A^*\rightarrow A$ be a linear map whose
$graph(T)$ is a Lagrangian subalgebra of
$A\ltimes_{\mathfrak{r}_{\circ}^*,\mathfrak{l}_{\circ}^*}A^*$. So
$T$ is skew-symmetric, that is, $\langle T(a^*),b^*\rangle =-\langle
T(b^*),a^*\rangle $ for any $a^*,b^*\in A^*$. Since $graph(T)$ is a
subalgebra, we have that
\begin{eqnarray*}
(T(a^*),a^*)\star(T(b^*),b^*)&=&(T(a^*)\circ
T(b^*),\mathfrak{r}_{\circ}^*(T_r(a^*))b^*+\mathfrak{l}_{\circ}^*(T_r(b^*))a^*)\\
&=&(T_r(\mathfrak{r}_{\circ}^*(T_r(a^*))b^*+\mathfrak{l}_{\circ}^*(T_r(b^*))a^*),
\mathfrak{r}_{\circ}^*(T_r(a^*))b^*+\mathfrak{l}_{\circ}^*(T_r(b^*))a^*).\end{eqnarray*}
So $T(a^*)\circ
T(b^*)=T(\mathfrak{r}_{\circ}^*(T_r(a^*))b^*+\mathfrak{l}_{\circ}^*(T_r(b^*))a^*)$.
By Proposition 3.6, $T$ corresponds to certain skew-symmetric
solution of the alternative Yang-Baxter equation in $(A,\circ)$. The
last statement is obtained by transferring (by the isomorphism
$\lambda$) the pre-alternative algebra structure of $A^*(r)$ to
$graph(T_r)$.
\end{proof}

\section{Phase spaces of alternative algebras and matched pairs of
alternative algebras}

\setcounter{equation}{0}
\renewcommand{\theequation}
{4.\arabic{equation}}

\begin{defn}{\rm Let $(A,\circ,\omega)$ be a symplectic alternative algebra.
$A$ is called an {\it L-symplectic} alternative algebra if $A$ is a
direct sum of the underlying vector space of two Lagrangian
subalgebras $A^+$ and $A^{-}$. It is denoted by
$(A,\circ,A^+,A^-,\omega)$. Two $L$-symplectic alternative algebras
$(A_1,\circ,A_1^+,A_1^-,\omega_1)$ and
$(A_2,\circ,A_2^+,A_2^-,\omega_2)$ are {\it isomorphic} if there
exists an isomorphism  $\varphi:A_1\rightarrow A_2$ of alternative
algebras such that
\begin{equation}
\varphi(A_1^+)=A_2^+,\quad \varphi(A_1^-)=A_2^-,\quad
\omega_1(a,b)=\varphi^*\omega_2(a,b)=\omega_2(\varphi(a),\varphi(b)),\;\;
\forall a,b\in A_1.\end{equation} }\end{defn}

 It is easy to see that a symplectic
alternative algebra $(A,\circ,\omega)$ is an $L$-symplectic
alternative algebra if and only if $A$ is a direct sum of the
underlying vector space of two isotropic subalgebras.

\begin{prop} Let $(A,\circ,A^+,A^-,\omega)$
be an $L$-symplectic alternative algebra. Then there exists a
pre-alternative algebra structure on $A$ given by equation $(2.18)$
such that $A^+$ and $A^-$ are pre-alternative subalgebras. Moreover,
two $L$-symplectic alternative algebras
$(A_i,\circ,A_i^+,A_i^-,\omega_i)\;(i=1,2)$ are isomorphic if and
only if there exists an isomorphism of pre-alternative algebras
satisfying equation $(4.1)$ which the compatible pre-alternative
algebras are given by equation $(2.18)$.
\end{prop}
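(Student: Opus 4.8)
The plan is to realize this as a consequence of the dictionary between symplectic forms and compatible pre-alternative structures (Propositions 2.17 and 2.18) combined with Corollary 3.15 and Proposition 3.18. First I would recall that by Proposition 2.17 the symplectic form $\omega$ equips $A$ with a compatible pre-alternative algebra structure $(\prec,\succ)$ defined by equation (2.18) (that is, $\omega(x\prec y,z)=\omega(x,y\circ z)$ and $\omega(x\succ y,z)=\omega(y,z\circ x)$), so $As(A)=(A,\circ)$. The real content is the claim that the two Lagrangian subalgebras $A^+$ and $A^-$ are closed under both $\prec$ and $\succ$.

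To see this, I would fix $x,y\in A^+$ and compute $\omega(x\prec y,z)=\omega(x,y\circ z)$ for $z\in A^+$: since $y\circ z\in A^+$ (as $A^+$ is a subalgebra) and $A^+$ is isotropic, this vanishes. Hence $x\prec y\in (A^+)^\perp=A^+$ (using that $A^+$ is Lagrangian, not merely isotropic). The same argument with $\omega(x\succ y,z)=\omega(y,z\circ x)=0$ for $z\in A^+$ shows $x\succ y\in A^+$. Thus $A^+$ is a pre-alternative subalgebra, and symmetrically so is $A^-$. This is the step I expect to carry the weight of the proof — everything else is bookkeeping — and it is genuinely short; the only subtlety is the use of $A^+=(A^+)^\perp$ rather than mere isotropy.

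For the second assertion, suppose first that $\varphi:A_1\to A_2$ is an isomorphism of $L$-symplectic alternative algebras, so it satisfies (4.1). I would check directly that $\varphi$ intertwines the pre-alternative products: from $\omega_1(x\prec_1 y,z)=\omega_1(x,y\circ z)$ and $\omega_2(\varphi(x),\varphi(y))=\omega_1(x,y)$, together with $\varphi$ being an algebra homomorphism, one gets
\[
\omega_2(\varphi(x\prec_1 y),\varphi(z))=\omega_1(x\prec_1 y,z)=\omega_1(x,y\circ z)=\omega_2(\varphi(x),\varphi(y)\circ\varphi(z))=\omega_2(\varphi(x)\prec_2\varphi(y),\varphi(z)),
\]
and since $\omega_2$ is nondegenerate and $\varphi$ surjective, $\varphi(x\prec_1 y)=\varphi(x)\prec_2\varphi(y)$; likewise for $\succ$. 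Hence $\varphi$ is an isomorphism of pre-alternative algebras satisfying (4.1) with respect to the structures (2.18).

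Conversely, if $\varphi:A_1\to A_2$ is an isomorphism of the pre-alternative algebras (2.18) satisfying (4.1), then in particular it preserves $\omega$ and maps $A_1^\pm$ to $A_2^\pm$; and since $x\circ y=x\prec y+x\succ y$, it is automatically an isomorphism of the associated alternative algebras $As(A_i)=(A_i,\circ)$. Combined with the conditions on $\omega$ and on the Lagrangian subspaces that are built into (4.1), this is exactly the statement that $\varphi$ is an isomorphism of $L$-symplectic alternative algebras. So I would close the argument by simply observing that the two notions of isomorphism — as $L$-symplectic alternative algebras and as pre-alternative algebras with structure (2.18) satisfying (4.1) — are tautologically equivalent once one knows (2.18) recovers $\circ$.
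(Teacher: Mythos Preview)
Your proof is correct and follows the same approach as the paper: the first part is the identical Lagrangian computation $\omega(a\prec b,c)=\omega(a,b\circ c)=0$ for $a,b,c\in A^+$, and the paper simply declares the second assertion ``straightforward,'' which you have fully spelled out. (Minor note: your cross-references to Propositions~2.17--2.18, Corollary~3.15, and Proposition~3.18 do not match the paper's numbering and are in any case not actually invoked in your argument --- the only input you use is the defining relation~(2.18) together with nondegeneracy of $\omega$.)
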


\begin{proof} If $a,b,c\in A^+$, then $\omega(a\prec b,c)=\omega(a,b\circ
c)=0$. Since $A^+$ is a Lagrangian subalgebra of $A$, we have
$a\prec b\in A^+,\forall a,b\in A^+$. Similar arguments apply to
``$\succ$" and $A^-$. So the first conclusion holds. It is
straightforward to get the second conclusion.\end{proof}

\begin{defn} {\rm Let $(A,\circ)$ be an alternative algebra. If there is an
alternative algebra structure on the direct sum of the underlying
vector space of $A$ and  $A^*$ such that $A$ and $A^*$ are
alternative subalgebras and the natural skew-symmetric bilinear form
$\omega_p$ on $A\oplus A^*$ given by equation (3.11) is a symplectic
form, then it is called a {\it phase space} of the alternative
algebra $A$.}\end{defn}

\begin{remark}{\rm  The notion of
phase space is borrowed from mathematical physics (\cite{K1},
\cite{Bai1}).}
\end{remark}

\begin{prop}
 Every $L$-symplectic alternative algebra
$(A,\circ,A^+,A^-,\omega)$ is isomorphic to a phase space of $A^+$.
\end{prop}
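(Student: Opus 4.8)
The plan is to build the isomorphism directly from the symplectic form, just as in the classical fact that a symplectic algebra decomposed into two complementary Lagrangian subalgebras is a phase space. First I would set up the linear identification. Since $\omega$ is nondegenerate and $A^+$ is Lagrangian, $\dim A^+ = \dim A^- = \frac{1}{2}\dim A$, so $\dim A^- = \dim (A^+)^*$. Define $\phi : A^-\to (A^+)^*$ by $\langle\phi(a^-),x\rangle = \omega(a^-,x)$ for $a^-\in A^-$, $x\in A^+$. If $\phi(a^-)=0$ then $\omega(a^-,x)=0$ for all $x\in A^+$, while $\omega(a^-,b^-)=0$ for all $b^-\in A^-$ since $A^-$ is isotropic; as $A = A^+\oplus A^-$ and $\omega$ is nondegenerate this forces $a^-=0$, so $\phi$ is injective, hence an isomorphism by the dimension count. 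Put $\Phi := \mathrm{id}_{A^+}\oplus\phi : A = A^+\oplus A^-\to A^+\oplus (A^+)^*$, a linear isomorphism with $\Phi(A^+)=A^+$ and $\Phi(A^-)=(A^+)^*$.

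Next I would transport the algebra structure. I equip $B:=A^+\oplus (A^+)^*$ with the product determined by $\Phi(u)\bullet\Phi(v):=\Phi(u\circ v)$, so that $(B,\bullet)$ is an alternative algebra and $\Phi$ is an isomorphism of alternative algebras by construction. Because $A^+$ and $A^-$ are subalgebras of $(A,\circ)$ and $\Phi$ is the identity on $A^+$, the summand $A^+\subset B$ is a $\bullet$-subalgebra carrying exactly the original structure of the alternative algebra $A^+$, and $(A^+)^*=\Phi(A^-)\subset B$ is a $\bullet$-subalgebra. Thus $(B,\bullet)$ already has the algebraic shape required of a phase space of $A^+$.

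It then remains to match the bilinear forms, which is the only computation. I claim $\omega(u,v)=\omega_p(\Phi(u),\Phi(v))$ for all $u,v\in A$. Writing $u=x+a^-$ and $v=y+b^-$ with $x,y\in A^+$, $a^-,b^-\in A^-$, bilinearity together with the isotropy of $A^+$ and of $A^-$ yields $\omega(u,v)=\omega(a^-,y)+\omega(x,b^-)=\langle\phi(a^-),y\rangle-\langle\phi(b^-),x\rangle$, which by (3.11) is precisely $\omega_p(x+\phi(a^-),\,y+\phi(b^-))=\omega_p(\Phi(u),\Phi(v))$. Consequently $\omega_p$ on $(B,\bullet)$ is nondegenerate (since $\omega$ is) and closed (closedness is transported along the algebra isomorphism $\Phi$ from the closedness of $\omega$), hence symplectic; so $(B,\bullet)$ is a phase space of $A^+$ in the sense of Definition 4.3, and $\Phi$, satisfying $\Phi(A^+)=A^+$, $\Phi(A^-)=(A^+)^*$ and $\Phi^*\omega_p=\omega$, is the desired isomorphism of $L$-symplectic alternative algebras as in Definition 4.1.

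I do not expect a real obstacle here: everything is formal transport of structure plus one short bilinear-form identity. The single point that needs care is the normalization of $\phi$, i.e. fixing the sign in $\langle\phi(a^-),x\rangle=\omega(a^-,x)$ so that the transported form is literally the canonical $\omega_p$ of (3.11) rather than merely some symplectic form; with the above choice the signs work out, and the proposition is in essence a repackaging of the duality that $\omega$ sets up between the two complementary Lagrangian subalgebras $A^+$ and $A^-$.
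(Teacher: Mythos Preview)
Your proof is correct and follows exactly the same approach as the paper: identify $A^-$ with $(A^+)^*$ via the symplectic form, then transport the alternative algebra structure along the resulting linear isomorphism $\Phi=\mathrm{id}_{A^+}\oplus\phi$. The paper states this in two sentences without details, while you have written out the dimension count, the injectivity of $\phi$, and the verification that $\Phi^*\omega_p=\omega$ explicitly; nothing is missing or different in substance.
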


\begin{proof}
Since $A^{-}$ and $(A^+)^*$ are identified by the symplectic form,
we can transfer the alternative algebra structure on $A^{-}$ to
$(A^+)^*$. Hence the alternative algebra structure on $A^{+}\oplus
A^{-}$ can be transferred to $A^{+}\oplus(A^{+})^*$. Therefore the
conclusion follows.
\end{proof}

\begin{remark}{\rm By symmetry, every $L$-symplectic
alternative algebra $(A,\circ,A^+,A^-,\omega)$ is isomorphic to a
phase space of $A^-$.}
\end{remark}

\begin{prop} Let $(A,\circ)$ and $(B,\ast)$ be two alternative algebras.
Suppose that there are linear maps $L_A,R_A:A\rightarrow gl(B)$ and
$L_B,R_B:B\rightarrow gl(B)$ such that $(L_A,R_A)$ is a bimodule of
$A$ and $(L_B,R_B)$ is a bimodule of $B$ and they satisfy the
following conditions:
\begin{equation}
L_B(Ass_A(x)a)y+(Ass_B(a)x)\circ y=L_B(a)(x\circ
y)+R_B(R_A(y)a)x+x\circ(L_B(a)y),\end{equation}\begin{equation}
R_B(a)(x\circ y+y\circ
x)=R_B(L_A(y)a)x+x\circ(R_B(a)y)+R_B(L_A(x)a)y+y\circ(R_B(a)x),\end{equation}
\begin{equation}R_B(a)(x\circ
y)+L_B(L_A(x)a)y+(R_B(a)x)\circ
y=R_B(Ass_A(y)a)x+x\circ(Ass_B(a)y),\end{equation}
\begin{equation}L_B(a)(x\circ
y+y\circ x)=(L_B(a)x)\circ y+L_B(R_A(x)a)y+ (L_B(a)y)\circ
x+L_B(R_A(y)a)x,\end{equation}\begin{equation}L_A(Ass_B(a)x)b+(Ass_A(x)a)\ast
b=L_A(x)(a\ast
b)+R_A(R_B(b)x)a+a\ast(L_A(x)b),\end{equation}\begin{equation}R_A(x)(a\ast
b+b\ast
a)=R_A(L_B(b)x)a+a\ast(R_A(x)b)+R_A(L_B(a)x)b+b\ast(R_A(x)a),\end{equation}\begin{equation}R_A(x)(a\ast
b)+L_A(L_B(a)x)b+(R_A(x)a)\ast
b=R_A(Ass_B(b)x)a+a\ast(Ass_A(x)b),\end{equation}\begin{equation}L_A(x)(a\ast
b+b\ast a)=(L_A(x)a)\ast b+L_A(R_B(a)x)b+ (L_A(x)b)\ast
a+L_A(R_B(b)x)a,\end{equation} where $x,y\in A,a,b\in B$,
$Ass_i=L_i+R_i,i=A,B$. Then there is an alternative algebra
structure on the vector space $A\oplus B$ given by
\begin{equation}(x+a)\star(y+b)=(x\circ y+L_B(a)y+R_B(b)x)+(a\ast
b+L_A(x)b+R_A(y)a),\quad \forall x,y\in A,a,b\in B.\end{equation} We
denote this alternative algebra by $A\bowtie_{L_A,R_A}^{L_B,R_B}B$
or simply $A\bowtie B$. Moreover, $(A,B,L_A,R_A$, $L_B,R_B)$
satisfying the above conditions is called a $matched$ $pair$ of
alternative algebras. On the other hand, every alternative algebra
which is a direct sum of the underlying vector spaces of two
subalgebras can be obtained from the above way.\end{prop}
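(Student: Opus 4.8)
The plan is to treat this as a standard ``matched pair'' construction, using the built-in $A\leftrightarrow B$ symmetry of the hypotheses and Remark 2.2 to cut the verification down to a manageable finite check.

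First I would set $E:=A\oplus B$ with the bilinear product $\star$ of equation (4.10), observe that $\star$ restricts to $\circ$ on $A$ and to $\ast$ on $B$, and that the two ``cross terms'' are exactly the actions $L_A,R_A,L_B,R_B$. Since the characteristic of $\textbf{k}$ is not $2$, by Remark 2.2 --- and the fact that in this characteristic the linearized axioms (2.2) are equivalent to alternativity --- it suffices to prove, for the $\star$-associator,
\[
(u_1,u_2,u_3)+(u_2,u_1,u_3)=0,\qquad (u_1,u_2,u_3)+(u_1,u_3,u_2)=0,\qquad\forall u_1,u_2,u_3\in E.
\]
By multilinearity it is enough to check these when each $u_i$ lies in $A$ or in $B$. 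The purely-$A$ and purely-$B$ triples hold because $A$ and $B$ are alternative subalgebras. The genuinely mixed triples split into two families interchanged by swapping the roles of $A$ and $B$ (together with $\circ\leftrightarrow\ast$, $L_A\leftrightarrow L_B$, $R_A\leftrightarrow R_B$): ``two arguments in $A$, one in $B$'' and ``two in $B$, one in $A$''. As conditions (4.6)--(4.9) are precisely (4.2)--(4.5) under this swap, it suffices to treat the first family.

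For a triple with two entries $x,y\in A$ and one entry $a\in B$ (in any of the three positions), I would expand $(u_1\star u_2)\star u_3-u_1\star(u_2\star u_3)$ by (4.10), separate the $A$-component from the $B$-component, and simplify each using the bimodule axioms (2.3)--(2.4) for $(L_A,R_A)$ over $A$ and for $(L_B,R_B)$ over $B$ together with the alternativity of $\circ$ and $\ast$. One finds that for each of the (four essentially distinct) mixed linearized identities the $B$-component vanishes automatically, being a consequence of the linearizations of (2.3)--(2.4) for $(L_A,R_A)$ alone, while the $A$-component reduces to exactly one of (4.2)--(4.5); conversely each of (4.2)--(4.5) is forced, being the residual obstruction to one such identity. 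This shows that a matched pair $(A,B,L_A,R_A,L_B,R_B)$ produces the alternative algebra $A\bowtie B$. For the last sentence of the statement, suppose $(E,\diamond)$ is an alternative algebra, $A,B$ are subalgebras, and $E=A\oplus B$ as vector spaces; let $p_A,p_B$ be the projections. Define, for $x\in A$ and $a,b\in B$,
\[
L_B(a)x=p_A(a\diamond x),\quad R_A(x)a=p_B(a\diamond x),\quad R_B(b)x=p_A(x\diamond b),\quad L_A(x)b=p_B(x\diamond b).
\]
Because $A$ and $B$ are subalgebras, $\diamond$ coincides with the product $\star$ built from these data via (4.10); hence the alternativity of $(E,\diamond)$, fed through the computation above run in reverse, forces both that $(L_A,R_A)$ is a bimodule of $A$ and $(L_B,R_B)$ a bimodule of $B$, and that (4.2)--(4.9) hold, so $(A,B,L_A,R_A,L_B,R_B)$ is a matched pair with $A\bowtie B=E$.

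I expect the only real obstacle to be the bookkeeping: there are several mixed homogeneous triples, each yielding an $A$- and a $B$-projected identity, and the work lies in expanding every term by (4.10) and then matching the survivors against (4.2)--(4.9) with all signs, all $L$ versus $R$, and all $Ass_A=L_A+R_A$, $Ass_B=L_B+R_B$ terms correct. No conceptual difficulty beyond being systematic is anticipated.
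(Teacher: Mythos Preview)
Your proposal is correct and is precisely the ``straightforward'' verification the paper has in mind; the paper's own proof is the single word \emph{Straightforward}, so your sketch is simply an explicit unpacking of that, using the same reduction to homogeneous triples, the $A\leftrightarrow B$ symmetry, and the bimodule axioms to kill the ``other'' component.
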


\begin{proof} Straightforward.\end{proof}

\begin{prop}
 Let $(A,\prec_1,\succ_1)$ be a
pre-alternative algebra and $(As(A),\circ)$ be the associated
alternative algebra. Suppose that there exists a pre-alternative
algebra structure ``$\prec_2,\succ_2$" on its dual space $A^*$ and
$(As(A^*),\ast)$ is the associated alternative algebra. Then there
exists an $L$-symplectic alternative algebra structure on the vector
space $A\oplus A^*$ such that $(As(A),\circ)$ and $(As(A^*),\ast)$
are Lagrangian subalgebras associated to the symplectic form
$(3.11)$ if and only if
$(As(A),As(A^*),\mathfrak{r}_{\prec_1}^*,\mathfrak{l}_{\succ_1}^*,
\mathfrak{r}_{\prec_2}^*,\mathfrak{l}_{\succ_2}^*)$ is a matched
pair of alternative algebras. Furthermore, every $L$-symplectic
alternative algebra can be obtained from the above way.
\end{prop}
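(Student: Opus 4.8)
The plan is to realize the sought $L$-symplectic structure, when it exists, as the bowtie alternative algebra $As(A)\bowtie As(A^*)$ built by Proposition 4.7 from the matched pair $(As(A),As(A^*),\mathfrak{r}^*_{\prec_1},\mathfrak{l}^*_{\succ_1},\mathfrak{r}^*_{\prec_2},\mathfrak{l}^*_{\succ_2})$, and conversely to recognize any $L$-symplectic structure on $A\oplus A^*$ having $As(A)$ and $As(A^*)$ as Lagrangian subalgebras as such a bowtie product. The starting observation is that, by Propositions 2.9 and 2.4, $(A^*,\mathfrak{r}^*_{\prec_1},\mathfrak{l}^*_{\succ_1})$ is automatically a bimodule of $As(A)$ and $(A,\mathfrak{r}^*_{\prec_2},\mathfrak{l}^*_{\succ_2})$ is automatically a bimodule of $As(A^*)$, so the only content of ``being a matched pair'' is the list of cross relations of Proposition 4.7; the whole proposition then comes down to matching those cross relations with the closedness of the form $\omega_p$ of (3.11).

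For the ``if'' direction I would assume the matched pair condition and equip $A\oplus A^*$ with the alternative product supplied by Proposition 4.7,
\[
(x+a^*)\star(y+b^*)=\bigl(x\circ y+\mathfrak{r}^*_{\prec_2}(a^*)y+\mathfrak{l}^*_{\succ_2}(b^*)x\bigr)+\bigl(a^*\ast b^*+\mathfrak{r}^*_{\prec_1}(x)b^*+\mathfrak{l}^*_{\succ_1}(y)a^*\bigr),
\]
so that $A$ and $A^*$ are subalgebras of $(A\oplus A^*,\star)$ carrying exactly the products $\circ$ and $\ast$. The form $\omega_p$ is nondegenerate, and $A$, $A^*$ are complementary isotropic subspaces for it, hence Lagrangian subalgebras, so only the closedness of $\omega_p$ for $\star$ remains to be checked. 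The closedness identity $\omega_p(u\star v,w)+\omega_p(v\star w,u)+\omega_p(w\star u,v)=0$ is trilinear and cyclic, and it vanishes identically when $u,v,w$ all lie in $A$ or all lie in $A^*$ (each product then lands in the same Lagrangian subspace as the third argument); by cyclicity it therefore suffices to treat the two cases $\{u,v\in A,\ w\in A^*\}$ and $\{u,v\in A^*,\ w\in A\}$. Expanding $\star$ and using $\langle\mathfrak{r}^*_{\prec_1}(y)c^*,x\rangle=\langle c^*,x\prec_1 y\rangle$ and $\langle\mathfrak{l}^*_{\succ_1}(x)c^*,y\rangle=\langle c^*,x\succ_1 y\rangle$ together with their $\prec_2,\succ_2$ analogues on $A^*$, each of the two cases collapses --- after invoking $x\prec_1 y+x\succ_1 y=x\circ y$ and $a^*\prec_2 b^*+a^*\succ_2 b^*=a^*\ast b^*$ --- to the identity $0=0$. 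Hence $\omega_p$ is symplectic and $(A\oplus A^*,\star,A,A^*,\omega_p)$ is $L$-symplectic; moreover, just as in the derivation of (3.13) and (3.14) in Proposition 4.8, the compatible pre-alternative structure this $L$-symplectic algebra carries by Proposition 4.2 restricts to $(\prec_1,\succ_1)$ on $A$ and to $(\prec_2,\succ_2)$ on $A^*$.

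For the ``only if'' direction and the final assertion I would start with an $L$-symplectic alternative algebra structure on $A\oplus A^*$ in which $As(A)$ and $As(A^*)$ are Lagrangian subalgebras relative to $\omega_p$ and which induces (via Proposition 4.2) the pre-alternative structures $(\prec_1,\succ_1)$ on $A$ and $(\prec_2,\succ_2)$ on $A^*$ --- for the final assertion one simply takes the induced pair as the defining data, so this costs nothing. Since $A\oplus A^*$ is the direct sum of the underlying spaces of the two subalgebras $As(A)$ and $As(A^*)$, this alternative algebra equals $As(A)\bowtie As(A^*)$ for some matched pair $(As(A),As(A^*),L_A,R_A,L_B,R_B)$ by the last sentence of Proposition 4.7. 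Computing the Proposition 4.2 structure out of $\omega_p(x'\prec y',z')=\omega_p(x',y'\star z')$ and $\omega_p(x'\succ y',z')=\omega_p(y',z'\star x')$ shows that on $A$ it is characterized by $\langle x\prec y,c^*\rangle=\langle L_A(y)c^*,x\rangle$ and $\langle x\succ y,c^*\rangle=\langle R_A(x)c^*,y\rangle$, and symmetrically on $A^*$ in terms of $L_B$ and $R_B$; equating these restrictions to $(\prec_1,\succ_1)$ and $(\prec_2,\succ_2)$ forces $L_A=\mathfrak{r}^*_{\prec_1}$, $R_A=\mathfrak{l}^*_{\succ_1}$, $L_B=\mathfrak{r}^*_{\prec_2}$, $R_B=\mathfrak{l}^*_{\succ_2}$. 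Thus $(As(A),As(A^*),\mathfrak{r}^*_{\prec_1},\mathfrak{l}^*_{\succ_1},\mathfrak{r}^*_{\prec_2},\mathfrak{l}^*_{\succ_2})$ is a matched pair, and running the ``if'' construction on it returns the algebra we started from, which gives the final assertion.

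The hard part is keeping the three incarnations of the data aligned --- the bowtie product, the matched pair cross relations, and the closedness (together with the Proposition 4.2 structure) of $\omega_p$ --- and in particular getting the dualizations right: it must be the bimodule $(\mathfrak{r}^*_{\prec_1},\mathfrak{l}^*_{\succ_1})$ of $As(A)$ on $A^*$ that appears (not $(\mathfrak{l}^*_{\succ_1},\mathfrak{r}^*_{\prec_1})$, and not the $\circ$-bimodule $(\mathfrak{r}^*_\circ,\mathfrak{l}^*_\circ)$), which is exactly what singles out $\omega_p$, rather than some other skew-symmetric form, as the closed one. Once these identifications are pinned down, the trilinear reduction in the ``if'' direction is short and the ``only if'' direction is essentially a uniqueness statement for matched-pair data.
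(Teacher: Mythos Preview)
Your proposal is correct and follows essentially the same approach as the paper: in the ``if'' direction the paper merely declares the verification that $\omega_p$ is symplectic on $As(A)\bowtie As(A^*)$ to be ``straightforward'' (you spell out the trilinear reduction), and in the ``only if'' direction the paper performs exactly your computation---using the pre-alternative structure supplied by (2.16) on the $L$-symplectic algebra to read off the bimodule maps, e.g.\ $\langle R_A(x)a^*,y\rangle=-\omega_p(y,a^*\star x)=-\omega_p(x\succ_1 y,a^*)=\langle \mathfrak{l}^*_{\succ_1}(x)a^*,y\rangle$---to identify the matched pair as $(\mathfrak{r}^*_{\prec_1},\mathfrak{l}^*_{\succ_1},\mathfrak{r}^*_{\prec_2},\mathfrak{l}^*_{\succ_2})$.
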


\begin{proof}
If
$(As(A),As(A^*),\mathfrak{r}_{\prec_1}^*,\mathfrak{l}_{\succ_1}^*,
\mathfrak{r}_{\prec_2}^*,\mathfrak{l}_{\succ_2}^*)$ is a matched
pair of alternative algebras, then it is straightforward to show
that the bilinear form (3.11) is a symplectic form of the
alternative algebra
$As(A)\bowtie_{\mathfrak{r}_{\prec_1}^*,\mathfrak{l}_{\succ_1}^*}^{\mathfrak{r}_{\prec_2}^*,\mathfrak{l}_{\succ_2}^*}
As(A^*)$. Conversely, set
$$x\star a^*=L_{\circ}(x)a^*+R_{\ast}(a^*)x,\;\;a^*\star x=L_{\ast}(a^*)x+R_{\circ}(x)a^*,\;\;\forall\;
x\in A, a^*\in A^*,$$ where $\star$ is the alternative algebra
structure of
$As(A)\bowtie_{\mathfrak{r}_{\prec_1}^*,\mathfrak{l}_{\succ_1}^*}^{\mathfrak{r}_{\prec_2}^*,\mathfrak{l}_{\succ_2}^*}
As(A^*)$. Then $(A,A^*,L_{\circ},R_{\circ}$, $L_{\ast},R_{\ast})$ is
a matched pair of alternative algebras. Note that
\begin{eqnarray*}
\langle R_{\circ}(x) a^*, y\rangle&=& \langle a^*\star x,y\rangle
=-\omega_p(y,a^*\star x)=-\omega_p(x\succ_1 y,a^*)=\langle
\mathfrak{l}_{\succ_1}^*(x)a^*,y\rangle, \\
\langle L_{\ast}(a^*) x, b^*\rangle&=&\langle a^*\star x,b^*\rangle
=\omega_p(b^*, a^*\star x)=\omega_p(b^*\prec_2a^*, x)=\langle
\mathfrak{r}_{\prec_2}^*(a^*)x, b^*\rangle
\end{eqnarray*}
 where
$x,y\in A, a^*,b^*\in A^*$. Hence,
$R_{\circ}=\mathfrak{l}_{\succ_1}^*,L_{\ast}=\mathfrak{r}_{\prec_2}^*$.
Similarly,
$L_{\circ}=\mathfrak{r}_{\prec_1}^*,R_{\ast}=\mathfrak{l}_{\succ_2}^*$.
\end{proof}

\section{Bimodules and matched pairs of pre-alternative algebras}
\setcounter{equation}{0}
\renewcommand{\theequation}
{5.\arabic{equation}}

\begin{defn}{\rm
Let $(A,\prec,\succ)$ be a pre-alternative algebra and $V$ be a
vector space. Let
$L_{\prec},R_{\prec},L_{\succ},R_{\succ}:A\rightarrow gl(V)$ be four
linear maps. $V$ (or $(L_{\prec},R_{\prec},L_{\succ},R_{\succ})$, or
$(V,L_{\prec},R_{\prec},L_{\succ}$, $R_{\succ})$) is called a {\it
representation } or a {\it bimodule} of $A$ if (for any $x,y\in A$)
\begin{equation}L_{\succ}(x\circ y+y\circ
x)=L_{\succ}(x)L_{\succ}(y)+L_{\succ}(y)L_{\succ}(x),
\end{equation}
\begin{equation}
R_{\succ}(y)(L_{\circ}(x)+R_{\circ}(x))=L_{\succ}(x)R_{\succ}(y)+R_{\succ}(x\succ
y),\end{equation}
\begin{equation}
R_{\prec}(y)L_{\succ}(x)+R_{\prec}(y)R_{\prec}(x)=L_{\succ}(x)R_{\prec}(y)+R_{\prec}(x\circ
y),
\end{equation}
\begin{equation}
R_{\prec}(y)R_{\succ}(x)+R_{\prec}(y)L_{\prec}(x)=R_{\succ}(x\prec
y)+L_{\prec}(x)R_{\circ}(y),\end{equation}
\begin{equation}L_{\prec}(x\succ y)+L_{\prec}(y\prec
x)=L_{\succ}(x)L_{\prec}(y)+L_{\prec}(y)L_{\circ}(x),
\end{equation}
\begin{equation}
L_{\succ}(y\circ
x)+R_{\prec}(x)L_{\succ}(y)=L_{\succ}(y)L_{\succ}(x)+L_{\succ}(y)R_{\prec}(x),
\end{equation}
\begin{equation}
R_{\succ}(y)R_{\circ}(x)+R_{\prec}(x)R_{\succ}(y)=R_{\succ}(x\succ
y)+R_{\succ}(y\prec x),
\end{equation}
\begin{equation}
R_{\succ}(x)L_{\circ}(y)+L_{\prec}(y\succ
x)=L_{\succ}(y)R_{\succ}(x)+L_{\succ}(y)L_{\prec}(x),\end{equation}
\begin{equation}R_{\prec}(y)R_{\prec}(x)+R_{\prec}(x)R_{\prec}(y)=R_{\prec}(x\circ
y+y\circ x),\end{equation}
\begin{equation}R_{\prec}(y)L_{\prec}(x)+L_{\prec}(x\prec
y)=L_{\prec}(x)(R_{\circ}(y)+L_{\circ}(y)),\end{equation} where
$x\circ y=x\prec y+x\succ
y,L_{\circ}=L_{\succ}+L_{\prec},R_{\circ}=R_{\succ}+R_{\prec}$.}
\end{defn}

According to \cite{S3}, $(V,
L_{\prec},R_{\prec},L_{\succ},R_{\succ})$ is a bimodule of a
pre-alternative algebra $(A,\prec,\succ)$  if and only if the direct
sum $A\oplus V$ of vector spaces is turn into a pre-alternative
algebra (the {\it semidirect sum}) by defining multiplications in
$A\oplus V$ by (for any $x,y\in A,a,b\in V$)
\begin{equation}
(x+a)\prec(y+b)=x\prec y+L_{\prec}(x)b+R_{\prec}(y)a,\;\;
(x+a)\succ(y+b)=x\succ y+L_{\succ}(x)b+R_{\succ}(y)a. \end{equation}
 We denote it by
$A\ltimes_{L_{\prec},R_{\prec},L_{\succ},R_{\succ}}V$ or simply
$A\ltimes V$.

\begin{prop}
 Let
$(V,L_{\prec},R_{\prec},L_{\succ},R_{\succ})$ be a bimodule of a
pre-alternative algebra $(A,\prec,\succ)$. Let $(As(A),\circ)$ be
the associated alternative algebra.

{\rm (1)} Both $(V,L_{\succ},R_{\prec})$ and
$(V,L_{\circ}=L_{\prec}+L_{\succ},R_{\circ}=R_{\prec}+R_{\succ})$
are bimodules of $(As(A),\circ)$.

{\rm (2)} For any bimodule $(V,L,R)$ of $(As(A),\circ)$,
$(V,0,R,L,0)$ is a
bimodule of $(A,\prec,\succ)$.

{\rm (3)} $(V^*,-R_{\succ}^*,L_{\circ}^*,R_{\circ}^*,-L_{\prec}^*)$
 is a bimodule of $(A,\prec,\succ)$.
\end{prop}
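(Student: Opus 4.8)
The plan is to verify each of the three statements by direct substitution into the defining axioms, exploiting the structural link already established between pre-alternative bimodules (Definition 5.1) and alternative bimodules (Definition 2.3), so that most of the work reduces to identities that have already been proved. For part (1), I would first observe that the pair $(L_\succ,R_\prec)$ is the module analogue of the operators $(\mathfrak{l}_\succ,\mathfrak{r}_\prec)$ appearing in Proposition 2.9; concretely, I would take the semidirect sum pre-alternative algebra $A\ltimes_{L_\prec,R_\prec,L_\succ,R_\succ}V$, pass to its associated alternative algebra $As(A\ltimes V)$ via Proposition 2.7, and note that this associated algebra is exactly the semidirect sum $As(A)\ltimes_{L_\circ,R_\circ}V$. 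Since Proposition 2.9 tells us the left/right multiplications of a pre-alternative algebra restricted to a "$\succ$-left, $\prec$-right" pattern give an $As$-bimodule, applying it inside $A\ltimes V$ and reading off the action on $V$ yields that $(V,L_\succ,R_\prec)$ satisfies the axioms (2.3)–(2.4). That $(V,L_\circ,R_\circ)$ is an $As(A)$-bimodule is the statement that $As(A\ltimes V) = As(A)\ltimes_{L_\circ,R_\circ}V$ is a genuine alternative algebra, which is Proposition 2.7 applied to $A\ltimes V$. Alternatively one can add appropriate pairs of equations from (5.1)–(5.10): e.g.\ summing (5.3) and (5.6) (or (5.1) with the relevant companions) produces the two $As$-bimodule relations for $(L_\succ,R_\prec)$, and summing the $L_\circ$-type and $R_\circ$-type combinations of all ten axioms produces those for $(L_\circ,R_\circ)$; I would present whichever bookkeeping is shorter.

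For part (2), given an $As(A)$-bimodule $(V,L,R)$, I must check that $(V,L_\prec,R_\prec,L_\succ,R_\succ)=(V,0,R,L,0)$ satisfies all ten axioms (5.1)–(5.10). With $L_\prec=R_\succ=0$, $L_\succ=L$, $R_\prec=R$, we get $L_\circ=L$, $R_\circ=R$. Each of (5.1)–(5.10) then collapses: for instance (5.1) becomes $L(x\circ y+y\circ x)=L(x)L(y)+L(y)L(x)$, which is precisely the linearization of $L(x^2)=L(x)L(x)$ valid in characteristic $\neq 2$; (5.3) becomes $R(y)L(x)+R(y)R(x)=L(x)R(y)+R(x\circ y)$, i.e.\ the first relation in (2.4); (5.9) becomes $R(y)R(x)+R(x)R(y)=R(x\circ y+y\circ x)$, the linearization of $R(x^2)=R(x)R(x)$; and several of the remaining axioms, such as (5.4), (5.5), (5.7), (5.10), collapse either to the second relation in (2.4) or to trivial identities $0=0$. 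I would simply run through the list once and point to the matching line of Definition 2.3.

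For part (3), this is the pre-alternative analogue of Proposition 2.4 and will be the main obstacle, since it genuinely requires dualizing each of the ten axioms rather than quoting one. The cleanest route is to apply the general principle from part (1) and Proposition 2.4: if $(V,L_\prec,R_\prec,L_\succ,R_\succ)$ is an $A$-bimodule, then $(V,L_\succ,R_\prec)$ and $(V,L_\circ,R_\circ)$ are $As(A)$-bimodules by part (1), so by Proposition 2.4 their duals $(V^*,R_\succ^*,L_\prec^*)$ (wait — one must be careful: Proposition 2.4 sends $(L,R)\mapsto(R^*,L^*)$), i.e.\ $(V^*,R_\prec^*,L_\succ^*)$ and $(V^*,R_\circ^*,L_\circ^*)$ are $As(A)$-bimodules. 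The candidate $(V^*,-R_\succ^*,L_\circ^*,R_\circ^*,-L_\prec^*)$ has $L_{\prec,*}=-R_\succ^*$, $R_{\prec,*}=L_\circ^*$, $L_{\succ,*}=R_\circ^*$, $R_{\succ,*}=-L_\prec^*$, so that $L_{\circ,*}=L_{\succ,*}+L_{\prec,*}=R_\circ^*-R_\succ^*=R_\prec^*$ and $R_{\circ,*}=R_{\succ,*}+R_{\prec,*}=-L_\prec^*+L_\circ^*=L_\succ^*$. Thus the "$\circ$-data" of the dual structure is exactly $(R_\prec^*,L_\succ^*)$, which we need to be an $As(A)$-bimodule — and it is, being the dual of $(V,L_\succ,R_\prec)$ in the sense of Proposition 2.4. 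This observation handles the combinations of (5.1)–(5.10) that only see $L_\circ$ and $R_\circ$; for the genuinely "mixed" axioms I would instead pair each pre-alternative axiom $(5.k)$ for $(A,\prec,\succ)$ acting on $V$ with the transpose identity on $V^*$: take the defining relation of the semidirect sum, dualize the pairing $\langle -, -\rangle$ between $V$ and $V^*$, and match term by term. The bookkeeping to confirm all ten dualized axioms is routine but lengthy; I expect that a careful choice of which four of the ten to verify directly (the independent ones, the rest following from the strong axioms (2.7)–(2.9) applied to $A$) keeps it manageable. I would close by noting that part (3) is the statement that makes the coboundary bialgebra theory of Section 6 possible, since it provides the dual bimodule needed to define $\Delta$ on $A^*$.
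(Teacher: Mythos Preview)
Your proposal is correct and covers all three parts adequately. The overall strategy matches the paper's: the paper declares parts (1) and (2) ``more or less straightforward'' and, for (3), verifies only axiom (5.1) in the dual (showing $R_\circ^*(x\circ y+y\circ x)=R_\circ^*(x)R_\circ^*(y)+R_\circ^*(y)R_\circ^*(x)$ via part (1)) before invoking ``similarly'' for the remaining nine. Your treatment is more organized: for (1) you obtain both $As(A)$-bimodules conceptually by passing through the semidirect sum $A\ltimes V$ and quoting Propositions~2.7 and~2.9, which is cleaner than summing equations by hand; for (3) you make explicit the key structural observation that the $\circ$-data of the candidate dual is $(L_{\circ,*},R_{\circ,*})=(R_\prec^*,L_\succ^*)$, i.e.\ precisely the Proposition~2.4 dual of $(L_\succ,R_\prec)$, which systematizes what the paper leaves implicit. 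The trade-off is that the paper's one-line ``similarly'' hides the same bookkeeping you acknowledge is still required for the mixed axioms; neither approach avoids it, but yours makes clearer which axioms are already handled structurally and which need the term-by-term transpose computation. One cosmetic point: tidy up the parenthetical self-correction in your part (3) before submitting.
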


\begin{proof}
We only give the proof of (3) as an example. The others are more or
less straightforward. In fact, since $(V,L_{\circ},R_{\circ})$ is a
bimodule of $As(A)$, $R_{\circ}(x^2)=R_{\circ}(x)R_{\circ}(x)$  for
any $x\in A$. Hence
$$R_{\circ}(x\circ y+y\circ
x)=R_{\circ}(x)R_{\circ}(y)+R_{\circ}(y)R_{\circ}(x),\;\;\forall
x,y\in A.$$ So for any $v\in V,u^*\in V^*$,
\begin{eqnarray*}
\langle R_{\circ}^*(x\circ y+y\circ x)u^*,v\rangle&=&\langle
u^*,R_{\circ}(x\circ y+y\circ x)v\rangle=\langle u^*,(R_{\circ}(x)R_{\circ}(y)
+R_{\circ}(y)R_{\circ}(x))v\rangle\\
&=&\langle
(R_{\circ}^*(x)R_{\circ}^*(y)+R_{\circ}^*(y)R_{\circ}^*(x))u^*,v\rangle.
\end{eqnarray*}
Therefore $R_{\circ}^*(x\circ y+y\circ
x)=R_{\circ}^*(x)R_{\circ}^*(y)+R_{\circ}^*(y)R_{\circ}^*(x)$.
Similarly, $(-R_{\succ}^*,L_{\circ}^*,R_{\circ}^*,-L_{\prec}^*)$
also satisfies equations (5.2)-(5.10).
\end{proof}

\begin{exam}{\rm
 Let
$(A,\prec,\succ)$ be a pre-alternative algebra. Then
$(\mathfrak{l}_{\prec},\mathfrak{r}_{\prec},\mathfrak{l}_{\succ},\mathfrak{r}_{\succ}),
(0,\mathfrak{r}_{\prec},\mathfrak{l}_{\succ},0)$,
$(0,\mathfrak{r}_{\circ},\mathfrak{l}_{\circ},0),
(0,\mathfrak{l}_{\circ}^*,\mathfrak{r}_{\circ}^*,0),
(0,\mathfrak{l}_{\succ}^*, \mathfrak{r}_{\prec}^*,0)$ and
$(-\mathfrak{r}_{\succ}^*,\mathfrak{l}_{\circ}^*,
\mathfrak{r}_{\circ}^*,-\mathfrak{l}_{\prec}^*)$ are  bimodules of
$(A,\prec,\succ)$.}
\end{exam}

\begin{defn} {\rm  Let $(A,\prec_A,\succ_A)$ and
$(B,\prec_B,\succ_B)$ be two pre-alternative algebras. Suppose that
there are linear maps
$L_{\prec_A},R_{\prec_A},L_{\succ_A},R_{\succ_A}:A\rightarrow gl(B)$
and $L_{\prec_B},R_{\prec_B},L_{\succ_B},R_{\succ_B}:B\rightarrow
gl(A)$ such that the following products on the vector space $A\oplus
B$ (for any $x,y\in A, a,b\in B$)
\begin{equation}
(x+a)\prec(y+b)=x\prec_Ay+L_{\prec_B}(a)y+R_{\prec_B}(b)x+
a\prec_Bb+L_{\prec_A}(x)b+R_{\prec_A}(y)a,\end{equation}
\begin{equation}(x+a)\succ(y+b)=x\succ_Ay+L_{\succ_B}(a)y+R_{\succ_B}(b)x+
a\succ_Bb+L_{\succ_A}(x)b+R_{\succ_A}(y)a,\end{equation} define a
pre-alternative algebra structure. Then
$(A,B,L_{\prec_A},R_{\prec_A},L_{\succ_A},R_{\succ_A},
L_{\prec_B},R_{\prec_B}$, $L_{\succ_B}$, $R_{\succ_B})$ is called
{\it a matched pair of pre-alternative algebras}, and we denote this
pre-alternative algebra by
$A\bowtie_{L_{\prec_A},R_{\prec_A},L_{\succ_A},R_{\succ_A}}
^{L_{\prec_B},R_{\prec_B},L_{\succ_B},R_{\succ_B}}B$ or simply
$A\bowtie B$.}
\end{defn}

\begin{remark}{\rm It is not hard to write down explicitly a necessary and sufficient condition
satisfied by the above linear maps in a matched pair of
pre-alternative like Proposition 4.7, which involves 20 equations.
Since we will not use such a conclusion directly in this paper, they
are omitted here. Note that
$(B,L_{\prec_A},R_{\prec_A},L_{\succ_A},R_{\succ_A})$ and
$(A,L_{\prec_B},R_{\prec_B},L_{\succ_B},R_{\succ_B})$ must be
bimodules of $A$ and $B$, respectively.}\end{remark}

\begin{coro} Let $(A,B,L_{\prec_A},R_{\prec_A},L_{\succ_A},R_{\succ_A},
L_{\prec_B},R_{\prec_B},L_{\succ_B},R_{\succ_B})$ be a matched pair
of pre-alternative algebras. Then
$(As(A),As(B),L_{\prec_A}+L_{\succ_A},R_{\prec_A}+R_{\succ_A},
L_{\prec_B}+L_{\succ_B},R_{\prec_B}+R_{\succ_B})$ is a matched pair
of alternative algebras.
\end{coro}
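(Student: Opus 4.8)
The plan is to deduce the statement from Proposition 2.7 together with the last assertion of Proposition 4.7, rather than to verify the eight matched-pair identities for alternative algebras by hand. By Definition 5.4, the hypothesis means exactly that the products $\prec,\succ$ on $A\oplus B$ given by (5.12)--(5.13) make $A\oplus B$ into a pre-alternative algebra, which we write $A\bowtie B$. First I would invoke Proposition 2.7: its associated alternative algebra $As(A\bowtie B)$ is then an honest alternative algebra, with product $\circ=\prec+\succ$.

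Next I would compute this product $\circ$ explicitly. Adding (5.12) and (5.13) and grouping the summands lying in $A$ with those lying in $B$, one gets, for $x,y\in A$ and $a,b\in B$,
\begin{align*}
(x+a)\circ(y+b)={}&\bigl(x\circ_A y+(L_{\prec_B}+L_{\succ_B})(a)y+(R_{\prec_B}+R_{\succ_B})(b)x\bigr)\\
&{}+\bigl(a\circ_B b+(L_{\prec_A}+L_{\succ_A})(x)b+(R_{\prec_A}+R_{\succ_A})(y)a\bigr),
\end{align*}
where $x\circ_A y=x\prec_A y+x\succ_A y$ and $a\circ_B b=a\prec_B b+a\succ_B b$ are the products of $As(A)$ and $As(B)$. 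This is precisely the multiplication (4.10) of an alternative algebra of the shape $As(A)\bowtie As(B)$, with $L_A=L_{\prec_A}+L_{\succ_A}$, $R_A=R_{\prec_A}+R_{\succ_A}$, $L_B=L_{\prec_B}+L_{\succ_B}$, $R_B=R_{\prec_B}+R_{\succ_B}$.

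Now I would observe that $A$ and $B$ are pre-alternative subalgebras of $A\bowtie B$ — immediate from (5.12)--(5.13), since a product of two elements of $A$ (resp.\ of $B$) again lies in $A$ (resp.\ in $B$) — so $As(A)$ and $As(B)$ are alternative subalgebras of $As(A\bowtie B)$, and $As(A\bowtie B)=As(A)\oplus As(B)$ as a direct sum of underlying vector spaces. Hence $As(A\bowtie B)$ is an alternative algebra that is a direct sum of the underlying vector spaces of two subalgebras, so by the last assertion of Proposition 4.7 it arises from a matched pair of alternative algebras; and since the decomposition of an element of $A\oplus B$ into its $A$- and $B$-components is unique, comparing the product displayed above with the general form (4.10) forces the structure maps of that matched pair to be exactly $L_{\prec_A}+L_{\succ_A}$, $R_{\prec_A}+R_{\succ_A}$, $L_{\prec_B}+L_{\succ_B}$, $R_{\prec_B}+R_{\succ_B}$. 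This is the assertion of the corollary.

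The only point that needs a little care — the nearest thing to an obstacle, though a mild one — is to make sure that $(As(B),L_{\prec_A}+L_{\succ_A},R_{\prec_A}+R_{\succ_A})$ and $(As(A),L_{\prec_B}+L_{\succ_B},R_{\prec_B}+R_{\succ_B})$ really are bimodules of $As(A)$ and $As(B)$ and that all eight of the compatibility conditions (4.2)--(4.9) hold; along the route above this comes for free from Proposition 4.7. As an independent check, the two bimodule statements also follow from Proposition 5.2\,(1) applied to the bimodules $(B,L_{\prec_A},R_{\prec_A},L_{\succ_A},R_{\succ_A})$ of $A$ and $(A,L_{\prec_B},R_{\prec_B},L_{\succ_B},R_{\succ_B})$ of $B$ (cf.\ Remark 5.5), and (4.2)--(4.9) could in principle be recovered by summing the appropriate ones among the twenty equations defining a matched pair of pre-alternative algebras — but appealing to Proposition 4.7 avoids that computation altogether.
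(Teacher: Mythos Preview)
Your argument is correct and is precisely a fleshed-out version of the paper's one-line proof, which simply says ``It follows from the relation between the pre-alternative algebra and the associated alternative algebra.'' You have made explicit what the paper leaves implicit: that $As(A\bowtie B)$ is an alternative algebra decomposing as $As(A)\oplus As(B)$ with the product (4.10), so the last clause of Proposition 4.7 applies.
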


\begin{proof}
It follows from the relation between the pre-alternative algebra and
the associated alternative algebra.
\end{proof}

\begin{prop}
 Let $(A,\prec_1,\succ_1)$ be a
pre-alternative algebra and $(As(A),\circ_1)$ be the associated
alternative algebra. Suppose there is a pre-alternative algebra
structure $``\prec_2,\succ_2$'' on its dual space $A^*$ and
$(As(A^*),\circ_2)$ is the associated alternative algebra. Then
$(As(A),As(A^*), \mathfrak{r}_{\prec_1}^*,\mathfrak{l}_{\succ_1}^*$,
$\mathfrak{r}_{\prec_2}^*, \mathfrak{l}_{\succ_2}^*)$ is a matched
pair of alternative algebras if and only if
$(A,A^*,-\mathfrak{r}_{\succ_1}^*,\mathfrak{l}_{\circ_1}^*,\mathfrak{r}_{\circ_1}^*$,
$-\mathfrak{l}_{\prec_1}^*$,
$-\mathfrak{r}_{\succ_2}^*,\mathfrak{l}_{\circ_2}^*,
\mathfrak{r}_{\circ_2}^*,-\mathfrak{l}_{\prec_2}^*)$ is a matched
pair of pre-alternative algebras.
\end{prop}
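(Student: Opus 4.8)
The plan is to reduce the claimed equivalence of two matched-pair conditions to an already-established dictionary between bimodules of a pre-alternative algebra and bimodules of its associated alternative algebra, together with the observation (Remark 5.5) that a matched pair of pre-alternative algebras is governed by twenty equations which, upon summing the two products $\prec$ and $\succ$, collapse to the nine defining equations (4.2)--(4.10) of a matched pair of the associated alternative algebras. Concretely, I would first invoke Corollary 5.8: if $(A,A^*,-\mathfrak{r}_{\succ_1}^*,\mathfrak{l}_{\circ_1}^*,\mathfrak{r}_{\circ_1}^*,-\mathfrak{l}_{\prec_1}^*,-\mathfrak{r}_{\succ_2}^*,\mathfrak{l}_{\circ_2}^*,\mathfrak{r}_{\circ_2}^*,-\mathfrak{l}_{\prec_2}^*)$ is a matched pair of pre-alternative algebras, then summing the two "left" operators and the two "right" operators on each side gives $(-\mathfrak{r}_{\succ_1}^*+\mathfrak{r}_{\circ_1}^*)=\mathfrak{r}_{\prec_1}^*$ and $(\mathfrak{l}_{\circ_1}^*-\mathfrak{l}_{\prec_1}^*)=\mathfrak{l}_{\succ_1}^*$, and likewise for the index $2$, so that Corollary 5.8 immediately yields that $(As(A),As(A^*),\mathfrak{r}_{\prec_1}^*,\mathfrak{l}_{\succ_1}^*,\mathfrak{r}_{\prec_2}^*,\mathfrak{l}_{\succ_2}^*)$ is a matched pair of alternative algebras. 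This settles one direction "for free".

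For the converse I would proceed by a direct but structured verification. Assume $(As(A),As(A^*),\mathfrak{r}_{\prec_1}^*,\mathfrak{l}_{\succ_1}^*,\mathfrak{r}_{\prec_2}^*,\mathfrak{l}_{\succ_2}^*)$ is a matched pair of alternative algebras; by Proposition 4.7 this means the bilinear form $\omega_p$ of (3.11) is a symplectic form on the alternative algebra $As(A)\bowtie As(A^*)$, equivalently (by Propositions 2.16, 3.14 and 4.9 of the preceding sections) this $L$-symplectic alternative algebra carries a compatible pre-alternative structure via equation (2.18). The idea is to compute that compatible structure explicitly on $A\oplus A^*$ and to show that its restrictions to $A$ and $A^*$ recover $(\prec_1,\succ_1)$ and $(\prec_2,\succ_2)$, and that the four "mixed" actions of $A$ on $A^*$ and of $A^*$ on $A$ are exactly $-\mathfrak{r}_{\succ_1}^*,\mathfrak{l}_{\circ_1}^*,\mathfrak{r}_{\circ_1}^*,-\mathfrak{l}_{\prec_1}^*$ and their index-$2$ analogues. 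The key computational input is the defining relation of the dual bimodule together with the identities $L_\circ=L_\succ+L_\prec$, $R_\circ=R_\succ+R_\prec$, used exactly as in the proof of Proposition 5.3(3): one pairs $\omega_p\big((x+a^*)\prec_0(y+b^*),z+c^*\big)$ against $\omega_p(x+a^*,(y+b^*)\circ(z+c^*))$ and reads off the component maps.

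Alternatively — and this is probably the cleanest route — I would bypass the symplectic detour and argue purely at the level of operator identities: one knows from Proposition 5.3 that $(A^*, -\mathfrak{r}_{\succ_1}^*,\mathfrak{l}_{\circ_1}^*,\mathfrak{r}_{\circ_1}^*,-\mathfrak{l}_{\prec_1}^*)$ is a bimodule of $(A,\prec_1,\succ_1)$ and symmetrically for the dual pair, so the only content left to check is the "compatibility" part of Definition 5.4, i.e. that the products (5.13)--(5.14) built from these data are pre-alternative. Since summing $\prec$ and $\succ$ in (5.13)--(5.14) produces exactly the alternative multiplication $\star$ of $As(A)\bowtie As(A^*)$ (using the summation identities above), and since that $\star$ is alternative by hypothesis, the three defining identities (2.7)--(2.8) of a pre-alternative algebra for $(A\oplus A^*,\prec,\succ)$ reduce — after expanding via the dual-bimodule formulas — to the nine alternative matched-pair equations (4.2)--(4.10) for $(As(A),As(A^*),\mathfrak{r}_{\prec_1}^*,\mathfrak{l}_{\succ_1}^*,\mathfrak{r}_{\prec_2}^*,\mathfrak{l}_{\succ_2}^*)$, which hold by assumption. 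The main obstacle, and the only place real work is needed, is organizing this last expansion: one must carefully match each of the (up to twenty) pre-alternative matched-pair identities against the correct combination of the nine alternative ones, keeping track of the minus signs coming from the dual actions $-\mathfrak{r}_{\succ_i}^*$ and $-\mathfrak{l}_{\prec_i}^*$; I expect this bookkeeping, rather than any conceptual difficulty, to be the crux, and it is entirely parallel to the computation already carried out in Proposition 3.14.
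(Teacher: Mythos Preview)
Your proposal is correct, and your first route (the ``symplectic detour'') is exactly the paper's argument: Corollary 5.6 handles one implication, and for the converse the paper passes via Proposition 4.8 to the $L$-symplectic alternative algebra $As(A)\bowtie As(A^*)$ with form $\omega_p$, invokes Proposition 2.16 to obtain the compatible pre-alternative structure from equation (2.18), and then reads off the four mixed actions $a^*\prec x$, $x\prec a^*$, $x\succ a^*$, $a^*\succ x$ by pairing against $\omega_p$ --- precisely the computation you outline. Your second, direct-verification route is not pursued in the paper; the symplectic argument is short enough that the twenty-equation bookkeeping you anticipate is entirely sidestepped, so despite your instinct, the ``detour'' is in fact the cleaner path here.
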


\begin{proof} By Corollary 5.6, we only need to prove the``only if" part of the conclusion.
 If $(As(A)$, $As(A^*),
\mathfrak{r}_{\prec_1}^*,\mathfrak{l}_{\succ_1}^*,\mathfrak{r}_{\prec_2}^*$,
$\mathfrak{l}_{\succ_2}^*)$ is a matched pair of alternative
algebras, then by Proposition 4.8,
$As(A)\bowtie_{\mathfrak{r}_{\prec_1}^*,
\mathfrak{l}_{\succ_1}^*}^{\mathfrak{r}_{\prec_2}^*,\mathfrak{l}_{\succ_2}^*}As(B)$
is an $L$-symplectic alternative algebra with the symplectic form
given by equation (3.11). Hence there is a compatible
pre-alternative algebra structure on
$As(A)\bowtie_{\mathfrak{r}_{\prec_1}^*,
\mathfrak{l}_{\succ_1}^*}^{\mathfrak{r}_{\prec_2}^*,\mathfrak{l}_{\succ_2}^*}As(B)$
given by equation (2.18). Then for any $x,y\in A,a^*,b^*\in A^*$ we
have
$$
\langle a^*\prec x,y\rangle =\omega_p(a^*\prec
x,y)=\omega_p(a^*,x\circ_1 y)=\langle
\mathfrak{l}_{\circ_1}^*(x)a^*,y\rangle,
$$
$$\langle a^*\prec x,b^*\rangle
=-\omega_p(a^*\prec x,b^*)=-\langle a^*,
\mathfrak{l}_{\succ_2}^*(b^*) x\rangle=-\langle b^*\succ_2
a^*,x\rangle=\langle -\mathfrak{r}_{\succ_2}^*(a^*)x, b^*\rangle.$$
So $a^*\prec
x=-\mathfrak{r}_{\succ_2}^*(a^*)x+\mathfrak{l}_{\circ_1}(x)a^*$.
Similarly, $$x\prec
a^*=-\mathfrak{r}_{\succ_1}^*(x)a^*+\mathfrak{l}_{\circ_2}(a^*)x,
x\succ
a^*=\mathfrak{r}_{\circ_1}^*(x)a^*-\mathfrak{l}_{\prec_2}^*(a^*)x,
a^*\succ
x=\mathfrak{r}_{\circ_2}(a^*)x-\mathfrak{l}_{\prec_1}^*(x)a^*.$$
Therefore
$(A,A^*,-\mathfrak{r}_{\succ_1}^*,\mathfrak{l}_{\circ_1}^*,\mathfrak{r}_{\circ_1}^*,
-\mathfrak{l}_{\prec_1}^*,-\mathfrak{r}_{\succ_2}^*,\mathfrak{l}_{\circ_2}^*,
\mathfrak{r}_{\circ_2}^*,-\mathfrak{l}_{\prec_2}^*)$ is a matched
pair of pre-alternative algebras.\end{proof}

\section{Pre-alternative bialgebras}
\setcounter{equation}{0}
\renewcommand{\theequation}
{6.\arabic{equation}}

\begin{theorem}
 Let $(A,\prec,\succ,\alpha,\beta)$ be a
pre-alternative algebra $(A,\prec,\succ)$ equipped with two
comultiplications $\alpha,\beta:A\rightarrow A\otimes A$ and
$(As(A),\circ)$ be the associated alternative algebra. Suppose that
$\alpha^*,\beta^*:A^*\otimes A^*\rightarrow A^*$ induce a
pre-alternative algebra structure ``$\prec_*,\succ_*$'' on its dual
space $A^*$. Then
$(As(A),As(A^*),\mathfrak{r}_{\prec}^*,\mathfrak{l}_{\succ}^*,
\mathfrak{r}_{\prec_*}^*,\mathfrak{l}_{\succ_*}^*)$ is a matched
pair of alternative algebras if and only if $\alpha,\beta$ satisfy
the following eight equations (for any $x,y\in A$):

{\small
\begin{equation} \alpha(x\circ y+y\circ
x)=(\mathfrak{r}_{\circ}(y)\otimes 1+1\otimes
\mathfrak{l}_{\succ}(y))\alpha(x)+(\mathfrak{r}_{\circ}(x)\otimes
1+1\otimes \mathfrak{l}_{\succ}(x))\alpha(y),
\end{equation}
\begin{equation}
\beta(x\circ y+y\circ x)=(\mathfrak{r}_{\prec}(y)\otimes 1+1\otimes
\mathfrak{l}_{\circ}(y))\beta(x)+(\mathfrak{r}_{\prec}(x)\otimes
1+1\otimes \mathfrak{l}_{\circ}(x))\beta(y),\end{equation}
\begin{equation}\alpha(x\circ y)=(1\otimes \mathfrak{r}_{\prec}(x)+1\otimes
\mathfrak{l}_{\succ}(x)-\mathfrak{l}_{\circ}(x)\otimes
1)\alpha(y)+(\mathfrak{r}_{\circ}(y)\otimes
1)\alpha(x)+(\mathfrak{r}_{\circ}(y)\otimes 1-1\otimes
\mathfrak{l}_{\succ}(y))\tau\beta(x),\end{equation}
\begin{equation}\beta(x\circ y)=(\mathfrak{l}_{\succ}(y)\otimes 1+\mathfrak{r}_{\prec}(y)\otimes
1-1\otimes \mathfrak{r}_{\circ}(y))\beta(x)+(1\otimes
\mathfrak{l}_{\circ}(x))\beta(y)+(1\otimes
\mathfrak{l}_{\circ}(x)-\mathfrak{r}_{\prec}(x)\otimes
1)\tau\alpha(y),\end{equation}
\begin{equation}(\alpha+\beta)(x\prec y)=(1\otimes
\mathfrak{l}_{\prec}(x))(\tau\alpha+\beta)(y)+(\mathfrak{r}_{\prec}(y)\otimes
1+\mathfrak{l}_{\succ}(y)\otimes 1-1\otimes
\mathfrak{r}_{\prec}(y))(\alpha+\beta)(x)-(\mathfrak{r}_{\succ}(x)\otimes
1)\tau\beta(y),\end{equation}
\begin{equation}(\alpha+\beta)(x\succ y)=(\mathfrak{r}_{\succ}(y)\otimes
1)(\alpha+\tau\beta)(x)+(1\otimes \mathfrak{l}_{\succ}(x)+1\otimes
\mathfrak{r}_{\prec}(x)-\mathfrak{l}_{\succ}(x)\otimes
1)(\alpha+\beta)(y)-(1\otimes
\mathfrak{l}_{\prec}(y))\tau\alpha(x),\end{equation}
\begin{equation}(\alpha+\beta+\tau\alpha+\tau\beta)(x\succ y)=(\mathfrak{r}_{\succ}(y)\otimes
1)\alpha(x)+(1\otimes
\mathfrak{l}_{\succ}(x))(\alpha+\beta)(y)+(1\otimes
\mathfrak{r}_{\succ}(y))\tau\alpha(x)+(\mathfrak{l}_{\succ}(x)\otimes
1)(\tau\alpha+\tau\beta)(y),\end{equation}
\begin{equation}(\alpha+\beta+\tau\alpha+\tau\beta)(x\prec y)=(1\otimes
\mathfrak{l}_{\prec}(x))\beta(y)+(\mathfrak{r}_{\prec}(y)\otimes
1)(\alpha+\beta)(x)+(\mathfrak{l}_{\prec}(x)\otimes
1)\tau\beta(y)+(1\otimes
\mathfrak{r}_{\prec}(y))(\tau\alpha+\tau\beta)(x).\end{equation}}
\end{theorem}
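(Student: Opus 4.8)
The plan is to invoke Proposition 4.7 and then transcribe, one identity at a time, the eight operator equations that define a matched pair of alternative algebras into the comultiplication language of $\alpha$ and $\beta$.

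\emph{Set-up.} Applying Proposition 2.6 to $(A,\prec,\succ)$ and to $(A^*,\prec_*,\succ_*)$ (the latter being pre-alternative by hypothesis) shows that both $(As(A),\circ)$ and $(As(A^*),\ast)$ are alternative algebras, where $a^*\ast b^*=a^*\prec_* b^*+a^*\succ_* b^*$. Applying Propositions 2.4 and 2.9 to each of $A$ and $A^*$ shows that $(A^*,\mathfrak r_{\prec}^*,\mathfrak l_{\succ}^*)$ is a bimodule of $As(A)$ and $(A,\mathfrak r_{\prec_*}^*,\mathfrak l_{\succ_*}^*)$ is a bimodule of $As(A^*)$. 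Hence the two bimodule requirements in the hypotheses of Proposition 4.7 are automatically met, and $(As(A),As(A^*),\mathfrak r_{\prec}^*,\mathfrak l_{\succ}^*,\mathfrak r_{\prec_*}^*,\mathfrak l_{\succ_*}^*)$ is a matched pair of alternative algebras if and only if the eight identities (4.2)--(4.9) hold for this choice of data.

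\emph{Transcription.} Equations (4.2)--(4.5) are identities in $A$ depending on $x,y\in A$ and $a^*\in A^*$; I would pair each of them with an arbitrary $b^*\in A^*$. Equations (4.6)--(4.9) are identities in $A^*$ depending on $a^*,b^*\in A^*$ and $x\in A$; I would pair each with an arbitrary $z\in A$. In both cases one only needs the defining relations $\langle\mathfrak r_{\prec}^*(x)\xi,v\rangle=\langle\xi,v\prec x\rangle$, $\langle\mathfrak l_{\succ}^*(x)\xi,v\rangle=\langle\xi,x\succ v\rangle$ (and their $\ast$-analogues), the identities $\langle x,a^*\prec_* b^*\rangle=\langle\alpha(x),a^*\otimes b^*\rangle$ and $\langle x,a^*\succ_* b^*\rangle=\langle\beta(x),a^*\otimes b^*\rangle$ coming from the hypothesis that $\alpha^*,\beta^*$ induce $\prec_*,\succ_*$, together with $\mathfrak r_{\circ}=\mathfrak r_{\prec}+\mathfrak r_{\succ}$, $\mathfrak l_{\circ}=\mathfrak l_{\prec}+\mathfrak l_{\succ}$, $Ass_A=L_A+R_A$, $Ass_B=L_B+R_B$, and the leg notation of Section 1 to repackage operators as $T\otimes 1$ or $1\otimes T$. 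Since pairing against all of $A^*$ (respectively all of $A$) loses no information, each transcription is reversible, so the conjunction of (4.2)--(4.9) is equivalent to the conjunction of (6.1)--(6.8); this settles both implications simultaneously, and also exhibits the hypothesis on $\alpha^*,\beta^*$ as exactly what is needed for the statement to make sense (it guarantees that $\ast$ is alternative and that the relevant dual maps really are bimodules).

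\emph{The correspondence.} Carrying the computation out, one finds: (4.3) and (4.5) --- the equations whose argument is the symmetric product $x\circ y+y\circ x$ --- become (6.2) and (6.1); (4.2) and (4.4) --- argument $x\circ y$ --- become (6.3) and (6.4); (4.6) and (4.8) become (6.5) and (6.6) (the argument $x\prec y$, respectively $x\succ y$, enters through the product $\ast$, which corresponds to $\alpha+\beta$); and (4.7), (4.9), whose $A^*$-argument is the symmetric product $a^*\ast b^*+b^*\ast a^*$ (corresponding to $\alpha+\beta+\tau\alpha+\tau\beta$), become (6.7) and (6.8). For instance, pairing (4.3) with $b^*$ turns its left-hand side into $\langle\beta(x\circ y+y\circ x),a^*\otimes b^*\rangle$ and its four right-hand terms into $(\mathfrak r_{\prec}(y)\otimes 1+1\otimes\mathfrak l_{\circ}(y))\beta(x)+(\mathfrak r_{\prec}(x)\otimes 1+1\otimes\mathfrak l_{\circ}(x))\beta(y)$, which is precisely (6.2).

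\emph{Main obstacle.} The work is entirely bookkeeping, and the one place where care is essential is the flips. Every occurrence of a right-multiplication dual (i.e.\ $\mathfrak r_{\prec}^*$, $\mathfrak r_{\prec_*}^*$, and, through $Ass$, $\mathfrak r_{\circ}^*$, $\mathfrak r_{\ast}^*$) reverses the order of its two vector arguments when it is moved across the canonical pairing; this is exactly what produces the $\tau\alpha$ and $\tau\beta$ summands on the right-hand sides of (6.3)--(6.8) and the combinations $\alpha+\beta+\tau\alpha+\tau\beta$ in (6.7)--(6.8), and it is also why (4.5) yields (6.1) only after a global application of $\tau$ while (4.3) yields (6.2) directly. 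Keeping these flips consistent across eight lengthy identities, and matching each transcribed identity to its counterpart among (6.1)--(6.8), is where essentially all the effort lies. If one prefers a fully mechanical route, the whole computation can be done in a fixed basis $\{e_1,\dots,e_n\}$ of $A$ with dual basis $\{e_1^*,\dots,e_n^*\}$, writing out the structure constants of $\prec$, $\succ$, $\alpha$, $\beta$ and comparing coefficients, exactly in the style of the proof of Proposition 3.6; this removes any ambiguity about signs and flips.
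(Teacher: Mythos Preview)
Your approach is essentially the same as the paper's: reduce via Proposition~4.7 to showing that (4.2)--(4.9) with $(L_A,R_A,L_B,R_B)=(\mathfrak r_{\prec}^*,\mathfrak l_{\succ}^*,\mathfrak r_{\prec_*}^*,\mathfrak l_{\succ_*}^*)$ are equivalent to (6.1)--(6.8), by pairing each identity against an arbitrary element and transcribing. The paper in fact only works out the single case (4.2)$\Leftrightarrow$(6.3) and declares the rest similar, so your proposal is, if anything, more explicit --- you give the full correspondence table and correctly flag the flips as the only real hazard (your observation that (4.5) lands on (6.1) only after swapping the tensor slots is exactly right). One trivial correction: the result you cite as ``Proposition~2.6'' is Proposition~2.7 in the paper (2.6 is a remark).
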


\begin{proof}
 By Proposition 4.7, we just need to prove
equations (6.1)-(6.8) are equivalent to equations (4.2)-(4.9), where
$(A,B,L_A,R_A,L_B,R_B)$ is replaced by
$(As(A),As(A^*),\mathfrak{r}_{\prec}^*,\mathfrak{l}_{\succ}^*,
\mathfrak{r}_{\prec_*}^*,\mathfrak{l}_{\succ_*}^*)$. As an example,
we give an explicit proof of the equivalence between equations (4.2)
and (6.3). The proof of the others is similar. In fact, in this
case, equation (4.2) becomes
$$\mathfrak{r}_{\prec}^*(\mathfrak{r}_{\prec}^*(x)a^*+
\mathfrak{l}_{\succ}^*(x)a^*)y+(\mathfrak{r}_{\prec}^*(a^*)x+
\mathfrak{l}_{\succ}^*(a^*)x)\circ
y=\mathfrak{r}_{\prec}^*(a^*)(x\circ
y)+\mathfrak{l}_{\succ}^*(\mathfrak{l}_{\succ}^*(y)a^*)x
+x\circ(\mathfrak{r}_{\prec}^*(a^*)y),$$ where $x,y\in A, a^*\in
A^*$. Let both the left-hand side and the right-hand side of the
above equation act on $b^*\in A^*$, we have
\begin{eqnarray*}
\langle LHS,b^*\rangle &=&\langle
\mathfrak{r}_{\prec}^*(\mathfrak{r}_{\prec}^*(x)a^*+
\mathfrak{l}_{\succ}^*(x)a^*)y+(\mathfrak{r}_{\prec}^*(a^*)x+
\mathfrak{l}_{\succ}^*(a^*)x)\circ y,b^*\rangle \\
& =&\langle
y,b^*\prec(\mathfrak{r}_{\prec}^*(x)a^*+\mathfrak{l}_{\succ}^*(x)a^*)\rangle
+ \langle \mathfrak{r}_{\prec}^*(a^*)x+
\mathfrak{l}_{\succ}^*(a^*)x,\mathfrak{r}_{\circ}^*(y)b^*\rangle\\
&=&\langle \alpha(y),b^*\otimes
\mathfrak{r}_{\prec}^*(x)a^*+b^*\otimes
\mathfrak{l}_{\succ}^*(x)a^*\rangle +\langle
\alpha(x),\mathfrak{r}_{\circ}^*(y)b^*\otimes a^*\rangle +\langle
\beta(x),a^*\otimes \mathfrak{r}_{\circ}^*(y)b^*\rangle\\
& =&\langle (1\otimes \mathfrak{r}_{\prec}(x)+1\otimes
\mathfrak{l}_{\succ}(x))\alpha(y)+(\mathfrak{r}_{\circ}(y)\otimes
1)(\alpha+\tau\beta)(x),b^*\otimes a^*\rangle;\\
\langle RHS,b^*\rangle& =&\langle x\circ y,b^*\prec a^*\rangle
+\langle x,(\mathfrak{l}_{\succ}^*(y)a^*)\succ b^*\rangle +\langle
\mathfrak{r}_{\prec}^*(a^*)y,\mathfrak{l}_{\circ}^*(x)b^*\rangle\\
&=&\langle \alpha(x\circ y),b^*\otimes a^*\rangle +\langle
\beta(x),\mathfrak{l}_{\succ}^*(y)a^*\otimes b^*\rangle +\langle
\alpha(y),\mathfrak{l}_{\circ}^*(x)b^*\otimes a^*\rangle\\
&=&\langle \alpha(x\circ
y)+(1\otimes\mathfrak{l}_{\succ}(y))\tau\beta(x)+(\mathfrak{l}_{\circ}(x)\otimes
1)\alpha(y),b^*\otimes a^*\rangle .\end{eqnarray*}
 So equation (4.2) holds if and only if equation (6.3) holds.
\end{proof}

\begin{defn}
{\rm ${\rm (1)}$ Let $(A,\alpha,\beta)$ be a vector space with two
comultiplications $\alpha,\beta:A\rightarrow A\otimes A$. If
$(A,\alpha^*,\beta^*)$ becomes a pre-alternative algebra, then we
call the triple $(A,\alpha,\beta)$ a {\it pre-alternative coalgebra}.\\
 ${\rm (2)}$ If $(A,\prec,\succ,\alpha,\beta)$ is a pre-alternative algebra
$(A,\prec,\succ)$ with two comultiplications
$\alpha,\beta:A\rightarrow A\otimes A$ such that $(A,\alpha,\beta)$
is a pre-alternative coalgebra and $\alpha,\beta$ satisfy equations
(6.1)-(6.8),
 then $(A,\prec,\succ,\alpha,\beta)$ is called a {\it pre-alternative
 bialgebra}.}
\end{defn}

 Combining Propositions 4.8, 5.7 and Theorem 6.1, we have the following
  conclusion:

\begin{coro} Let $(A,\prec_1,\succ_1)$ be a pre-alternative algebra
and $(As(A),\circ_1)$ be the associated alternative algebra. Let
$\alpha,\beta:A\rightarrow A\otimes A$ be two linear maps such that
$\alpha^*,\beta^*:A^*\otimes A^*\subset (A\otimes A)^*\rightarrow
A^*$ induce a pre-alternative algebra structure on $A^*$ denoted by
``$\prec_2,\succ_2$", that is, $(A,\alpha,\beta)$ is a
pre-alternative coalgebra. Let $(As(A^*),\circ_2)$ be the associated
alternative algebra of $(A^*,\prec_2,\succ_2)$. Then the following
conditions are equivalent:

${\rm (1)}$ $(As(A)\bowtie As(A^*),As(A),As(A^*), \omega_p)$ is an
$L$-symplectic alternative algebra (or a phase space of $As(A)$),
where $\omega_p$ is given by equation $(3.11)$;

 ${\rm (2)}$
$(As(A),As(A^*),\mathfrak{r}_{\prec_1}^*,\mathfrak{l}_{\succ_1}^*,
\mathfrak{r}_{\prec_2}^*,\mathfrak{l}_{\succ_2}^*)$ is a matched
pair of alternative algebras;

 ${\rm (3)}$
$(A,A^*,-\mathfrak{r}_{\succ_1}^*,\mathfrak{l}_{\circ_1}^*,\mathfrak{r}_{\circ_1}^*,
       -\mathfrak{l}_{\prec_1}^*,-\mathfrak{r}_{\succ_2}^*,\mathfrak{l}_{\circ_2}^*,
       \mathfrak{r}_{\circ_2}^*,-\mathfrak{l}_{\prec_2}^*)$
       is a matched pair of pre-alternative algebras;

${\rm (4)}$ $(A,\prec_1,\succ_1,\alpha,\beta)$ is a pre-alternative
bialgebra.
\end{coro}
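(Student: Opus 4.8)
The plan is to establish the cycle of equivalences $(4)\Leftrightarrow(2)\Leftrightarrow(3)$ and $(2)\Leftrightarrow(1)$ by simply assembling the three results already proved in the excerpt, rather than re-deriving anything from scratch. The equivalence $(4)\Leftrightarrow(2)$ is precisely the content of Theorem 6.1: by hypothesis $(A,\alpha,\beta)$ is a pre-alternative coalgebra, so $\alpha^*,\beta^*$ induce a pre-alternative algebra structure $(\prec_2,\succ_2)$ on $A^*$; Theorem 6.1 then says that $(As(A),As(A^*),\mathfrak{r}_{\prec_1}^*,\mathfrak{l}_{\succ_1}^*,\mathfrak{r}_{\prec_2}^*,\mathfrak{l}_{\succ_2}^*)$ is a matched pair of alternative algebras if and only if $\alpha,\beta$ satisfy equations (6.1)--(6.8), and the latter is by Definition 6.2(2) exactly the statement that $(A,\prec_1,\succ_1,\alpha,\beta)$ is a pre-alternative bialgebra. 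Next, the equivalence $(2)\Leftrightarrow(3)$ is exactly Proposition 5.7 applied to the pre-alternative algebra $(A,\prec_1,\succ_1)$ and the induced structure $(\prec_2,\succ_2)$ on $A^*$.

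For $(2)\Leftrightarrow(1)$ I would invoke Proposition 4.8. That proposition, applied with $(A,\prec_1,\succ_1)$ and $(A^*,\prec_2,\succ_2)$, says that there is an $L$-symplectic alternative algebra structure on $A\oplus A^*$ with $As(A)$ and $As(A^*)$ Lagrangian subalgebras relative to $\omega_p$ (equation (3.11)) if and only if $(As(A),As(A^*),\mathfrak{r}_{\prec_1}^*,\mathfrak{l}_{\succ_1}^*,\mathfrak{r}_{\prec_2}^*,\mathfrak{l}_{\succ_2}^*)$ is a matched pair of alternative algebras, and moreover that in this situation the resulting alternative algebra is $As(A)\bowtie As(A^*)$. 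The only point needing a word of care is the identification of condition (1) as phrased here — namely "$(As(A)\bowtie As(A^*),As(A),As(A^*),\omega_p)$ is an $L$-symplectic alternative algebra (equivalently a phase space of $As(A)$)" — with the conclusion of Proposition 4.8. This follows because $As(A)$ and $As(A^*)$ are, by construction of the matched-pair sum, complementary subalgebras whose underlying vector spaces direct-sum to $A\oplus A^*$, they are each isotropic (hence Lagrangian, being half-dimensional) with respect to $\omega_p$ by the form of (3.11), and the "phase space" reformulation is then Proposition 4.5 together with Definition 4.3.

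So the proof is really just: $(4)\Leftrightarrow(2)$ by Theorem 6.1 and Definition 6.2; $(2)\Leftrightarrow(3)$ by Proposition 5.7; $(1)\Leftrightarrow(2)$ by Proposition 4.8 (and Propositions 4.2, 4.5 for the "phase space" wording). The statement of the corollary in fact advertises this — "Combining Propositions 4.8, 5.7 and Theorem 6.1" — so I would keep the write-up to a few lines that spell out these citations. The main (and only) potential obstacle is a bookkeeping one: making sure the roles of $\alpha,\beta$ versus the induced products $\prec_2,\succ_2$ line up exactly as in the hypotheses of Theorem 6.1 and Proposition 5.7, and that the matched-pair data $(\mathfrak{r}_{\prec_1}^*,\mathfrak{l}_{\succ_1}^*,\mathfrak{r}_{\prec_2}^*,\mathfrak{l}_{\succ_2}^*)$ appearing in (2) is literally the same tuple in all three cited results; once that is checked there is nothing further to prove.
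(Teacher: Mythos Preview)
Your proposal is correct and matches the paper's approach exactly: the paper itself presents this corollary with no proof beyond the sentence ``Combining Propositions 4.8, 5.7 and Theorem 6.1, we have the following conclusion,'' and your write-up simply spells out which cited result gives which equivalence. The extra care you take in identifying condition (1) with the conclusion of Proposition 4.8 (via Definition 4.3 and Proposition 4.5) is appropriate and does not depart from the intended argument.
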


\begin{defn}
{\rm Let $(A,\prec_A,\succ_A,\alpha_A,\beta_A)$ and
$(B,\prec_B,\succ_B,\alpha_B,\beta_B)$ be two pre-alternative
bialgebras. A {\it homomorphism of pre-alternative bialgebras}
$\varphi:A\rightarrow B$ is a homomorphism of pre-alternative
algebras such that
\begin{equation}(\varphi\otimes\varphi)\alpha_A(x)=\alpha_B(\varphi(x)),\quad
(\varphi\otimes\varphi)\beta_A(x)=\beta_B(\varphi(x)),\quad \forall
x\in A.\end{equation} }
\end{defn}

\begin{prop}
 Two $L$-symplectic
(hence phase spaces of) alternative algebras are isomorphic if and
only if their corresponding pre-alternative bialgebras are
isomorphic.
\end{prop}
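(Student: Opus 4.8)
The plan is to combine three already-established facts: Proposition 4.5 (every $L$-symplectic alternative algebra is isomorphic to a phase space of one of its Lagrangian subalgebras), Corollary 6.4 (a phase space of $As(A)$ and a pre-alternative bialgebra structure on $A$ carry the same information), and Proposition 4.2 (isomorphism of $L$-symplectic alternative algebras $\Longleftrightarrow$ isomorphism of their underlying pre-alternative algebras respecting (4.1)). So I would first reduce, via Proposition 4.5, to the case where the two $L$-symplectic alternative algebras are the phase spaces $P_A=As(A)\bowtie As(A^*)$ and $P_B=As(B)\bowtie As(B^*)$ attached to pre-alternative bialgebras $(A,\prec,\succ,\alpha,\beta)$ and $(B,\prec,\succ,\alpha_B,\beta_B)$, the compatible pre-alternative algebra structure on the total space being the one of equation (2.18). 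The bookkeeping observation to record first: any linear isomorphism $\Phi\colon P_A\to P_B$ obeying (4.1) must carry $A$ to $B$ and $A^*$ to $B^*$, hence has the form $\Phi=\varphi\oplus\psi$ with $\varphi\colon A\to B$ and $\psi\colon A^*\to B^*$ linear isomorphisms, and invariance of $\omega_p$ (formula (3.11)) forces $\langle\psi(a^*),\varphi(y)\rangle=\langle a^*,y\rangle$ for all $a^*\in A^*,y\in A$, that is $\psi=(\varphi^{-1})^*$.

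For the ``if'' direction I would start from an isomorphism $\varphi\colon A\to B$ of pre-alternative bialgebras and set $\Phi=\varphi\oplus(\varphi^{-1})^*$. The identity $\langle(\varphi^{-1})^*a^*,\varphi(x)\rangle=\langle a^*,x\rangle$ shows at once that $\Phi$ is a linear isomorphism which matches the Lagrangian pieces and preserves $\omega_p$, so (4.1) holds and it remains only to verify that $\Phi$ is a homomorphism for the compatible pre-alternative products on $P_A$ and $P_B$. On the subspace $A$ this is the hypothesis that $\varphi$ respects $\prec,\succ$; on $A^*$ one observes that condition (6.9) for $\varphi$, paired against $B^*\otimes B^*$ and combined with $\psi=(\varphi^{-1})^*$, is exactly the statement that $\psi$ is a homomorphism of the dual pre-alternative algebras $(A^*,\prec_*,\succ_*)\to(B^*,\prec_*,\succ_*)$; and on the mixed terms of the product — which, by Corollary 6.4(3), are assembled from dual module operators such as $\mathfrak{l}_{\circ}^*,\mathfrak{r}_{\circ}^*,\mathfrak{l}_{\prec}^*,\mathfrak{r}_{\succ}^*$ on $A^*$ and their $A^*$-counterparts on $A$ — a short pairing computation, e.g. $\langle\varphi(\mathfrak{l}_{\prec_*}^*(b^*)x),\psi(c^*)\rangle=\langle x,b^*\prec_* c^*\rangle=\langle\varphi(x),\psi(b^*)\prec_*\psi(c^*)\rangle=\langle\mathfrak{l}_{\prec_*}^*(\psi(b^*))\varphi(x),\psi(c^*)\rangle$, shows that each such term is intertwined by $\Phi$. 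Hence $\Phi$ is an isomorphism of pre-alternative algebras satisfying (4.1), and Proposition 4.2 yields the required isomorphism of $L$-symplectic alternative algebras.

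For the ``only if'' direction I would run this chain backwards: an isomorphism of the two $L$-symplectic alternative algebras gives, by Proposition 4.2, an isomorphism $\Phi$ of their underlying pre-alternative algebras obeying (4.1); by the bookkeeping observation $\Phi=\varphi\oplus(\varphi^{-1})^*$; restricting $\Phi$ to the pre-alternative subalgebras $A$ and $A^*$ shows that $\varphi$ is a homomorphism of $(A,\prec,\succ)$ and $(\varphi^{-1})^*$ one of $(A^*,\prec_*,\succ_*)$; and the equivalence ``$(\varphi^{-1})^*$ is a homomorphism of the dual pre-alternative algebras'' $\Longleftrightarrow$ ``$\varphi$ satisfies (6.9)'' — the same equivalence used above — upgrades $\varphi$ to an isomorphism of pre-alternative bialgebras.

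I expect the only genuinely delicate point to be the dualization step: one has to fix precisely the conventions relating $\alpha,\beta$ to $\prec_*,\succ_*$ together with the various flips occurring in (6.1)--(6.8), and then check that all four families of mixed terms in the phase-space product are intertwined by $\varphi\oplus(\varphi^{-1})^*$ — a finite but somewhat tedious verification. Everything else is a formal consequence of Propositions 4.2, 4.5 and 5.7 together with Corollary 6.4.
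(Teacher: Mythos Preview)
Your proposal is correct and follows essentially the same route as the paper: decompose any $\omega_p$-preserving isomorphism as $\Phi=\varphi\oplus(\varphi^{-1})^*$, use Proposition 4.2 to pass between isomorphisms of $L$-symplectic alternative algebras and isomorphisms of the compatible pre-alternative algebras satisfying (4.1), and observe that $(\varphi^{-1})^*$ being a pre-alternative algebra homomorphism on the dual side is equivalent to $\varphi$ satisfying the coalgebra condition (6.9). The paper's own proof is terser---in the ``if'' direction it simply asserts that $\varphi\oplus(\varphi'^*)^{-1}$ is an isomorphism of $L$-symplectic alternative algebras without spelling out the mixed-term check you outline---so your version is a more detailed execution of the same argument rather than a different one.
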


\begin{proof}
Let $(As(A)\bowtie As(A^*),As(A),As(A^*),\omega_p)$ and
$(As(B)\bowtie As(B^*),As(B),As(B^*),\omega_p)$ be two
$L$-symplectic alternative algebras and $\varphi:As(A)\bowtie
As(A^*)\rightarrow As(B)\bowtie As(B^*)$ is the isomorphism. Then
$\varphi|_A:A\rightarrow B$ and $\varphi|_{A^*}:A^*\rightarrow B^*$
are isomorphisms of pre-alternative algebras due to Proposition 4.2.
Moreover, $\varphi|_{A^*}={(\varphi|_A)^*}^{-1}$ since
\begin{eqnarray*}
\langle \varphi|_{A^*}(a^*),\varphi(x)\rangle &=&
\omega_p(\varphi|_{A^*}(a^*),\varphi(x))= \omega_p(a^*,x) = \langle
a^*,x\rangle =\langle \varphi^*{(\varphi|_A)^*}^{-1}(a^*),x\rangle\\
& =& \langle {(\varphi|_A)^*}^{-1}(a^*),\varphi(x)\rangle,\;\;
\forall x\in A,a^*\in A^*.
\end{eqnarray*}
So $(\varphi|_A)^*:B^*\rightarrow A^*$ is a homomorphism of
pre-alternative algebras and then
$(A,\prec_A,\succ_A,\alpha_A,\beta_A)$ and
$(B,\prec_B,\succ_B,\alpha_B,\beta_B)$ are isomorphic as
pre-alternative bialgebras.
 Conversely, let
$(A,\prec_A,\succ_A,\alpha_A,\beta_A)$ and
$(B,\prec_B,\succ_B,\alpha_B,\beta_B)$ be two isomorphic
pre-alternative bialgebras, and  $\varphi':A\rightarrow B$ be the
isomorphism. Let $\varphi:A\oplus A^*\rightarrow B\oplus B^*$ be a
linear map defined by
$$\varphi(x)=\varphi'(x),\;\;\varphi(a^*)=(\varphi'^*)^{-1}(a^*),\;\;\forall
x\in A,a^*\in A^*.$$ Then it is easy to show that $\varphi$ is an
isomorphism of the two $L$-symplectic alternative algebras
$(As(A)\bowtie As(A^*),As(A),As(A^*),\omega_p)$ and $(As(B)\bowtie
As(B^*),As(B),As(B^*),\omega_p)$.\end{proof}

\begin{exam}{\rm Let $(A,\prec,\succ,\alpha,\beta)$ be a pre-alternative bialgebra.
Then its dual $(A,\prec_*,\succ_*$, $\gamma,\delta)$ is also a
pre-alternative bialgebra, where the pre-alternative algebra
structure $\prec,\succ$ on $A$ is defined by the linear maps
$\gamma^*,\delta^*:A\otimes A\rightarrow A$ and
$\alpha^*,\beta^*:A^*\otimes A^*\rightarrow A^*$ induce a
pre-alternative algebra structure on $A^*$ denoted by
$``\prec_*,\succ_*"$. }
\end{exam}

\begin{exam}{\rm Let $(A,\prec,\succ)$ be a pre-alternative algebra.
Then $(A,\prec,\succ,\ 0, 0)$ is a pre-alternative  bialgebra, and
the corresponding pre-alternative algebra structure on $A\oplus A^*$
is the semi-direct sum
$A\ltimes_{-\mathfrak{r}_{\succ}^*,\mathfrak{l}_{\circ}^*,\mathfrak{r}_{\circ}^*,
       -\mathfrak{l}_{\prec}^*}A^*$. Moreover, its corresponding
associated   alternative algebra is the semi-direct sum
$As(A)\ltimes_{\mathfrak{r}_{\prec}^*,\mathfrak{l}_{\succ}^*}A^*$
 with the symplectic form $\omega_p$ given by equation (3.11).
}\end{exam}

\section{Coboundary pre-alternative bialgebras}
\setcounter{equation}{0}
\renewcommand{\theequation}
{7.\arabic{equation}}

\begin{defn} {\rm A pre-alternative bialgebra
$(A,\prec,\succ,\alpha,\beta)$ is called {\it coboundary} if the
linear maps $\alpha,\beta:A\rightarrow A\otimes A$ are given by
\begin{equation}
\alpha(x)=(\mathfrak{r}_{\circ}(x)\otimes
1-1\otimes\mathfrak{l}_{\succ}(x))r_{\prec},
 \end{equation}
\begin{equation}
\beta(x)=(1\otimes\mathfrak{l}_{\circ}(x)-\mathfrak{r}_{\prec}(x)\otimes
 1)r_{\succ},\end{equation}
 where  $x\circ y=x\prec y+x\succ y$, $x,y\in A$ and $r_{\prec},r_{\succ}\in A\otimes
 A$.}
\end{defn}

\begin{remark}{\rm
 The expression of equations (7.1)-(7.2) and equations (6.1)-(6.2)
 looks like certain kind of ``$1$-coboundary" and ``1-cocycle".}\end{remark}

\begin{theorem}
Let $(A,\prec,\succ)$ be a pre-alternative algebra with
 two linear maps $\alpha,\beta:A\rightarrow A\otimes A$ defined by
 equations (7.1) and (7.2) respectively. If $r_{\prec}=r_{\succ}=r\in A\otimes A$ and
 $r$ is symmetric, then $\alpha,\beta$ satisfy
 equations (6.1)-(6.8).
\end{theorem}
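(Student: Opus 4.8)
The proof is a direct computation: since Theorem 6.1 does not apply (its standing hypothesis that $\alpha^{*},\beta^{*}$ make $A^{*}$ a pre-alternative algebra is not assumed here), one must substitute the defining formulas (7.1) and (7.2) for $\alpha$ and $\beta$ straight into each of the eight equations (6.1)-(6.8) and check that the two sides agree, using only the pre-alternative axioms and $r=\tau r$. First I would write $r=\sum_i a_i\otimes b_i$ with $\sum_i a_i\otimes b_i=\sum_i b_i\otimes a_i$ and record
\[
\alpha(x)=\sum_i\big((a_i\circ x)\otimes b_i-a_i\otimes(x\succ b_i)\big),\qquad
\beta(x)=\sum_i\big(a_i\otimes(x\circ b_i)-(a_i\prec x)\otimes b_i\big),
\]
together with the expressions for $\tau\alpha(x)$ and $\tau\beta(x)$ obtained by flipping the two tensor factors and re-indexing through $r=\tau r$ (so that, e.g., $\tau\alpha(x)=\sum_i a_i\otimes(b_i\circ x)-\sum_i(x\succ a_i)\otimes b_i$). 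This re-indexing is the only place the symmetry hypothesis enters, and it is exactly what makes the six equations (6.3)-(6.8) involving $\tau\alpha,\tau\beta$ reduce to identities matching term by term on the two tensor legs.

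Before the verification I would isolate the operator identities that do the work, so that each of (6.1)-(6.8) becomes pure bookkeeping: (i) the bimodule axioms (2.3)-(2.4) for $(A,\mathfrak{l}_{\succ},\mathfrak{r}_{\prec})$ over $As(A)$ from Proposition 2.9, in particular $\mathfrak{r}_{\prec}(y)\mathfrak{l}_{\succ}(x)-\mathfrak{l}_{\succ}(x)\mathfrak{r}_{\prec}(y)=\mathfrak{r}_{\prec}(x\circ y)-\mathfrak{r}_{\prec}(y)\mathfrak{r}_{\prec}(x)$ and $\mathfrak{l}_{\succ}(x\circ y+y\circ x)=\mathfrak{l}_{\succ}(x)\mathfrak{l}_{\succ}(y)+\mathfrak{l}_{\succ}(y)\mathfrak{l}_{\succ}(x)$; (ii) the analogous identities for the bimodule $(A,\mathfrak{l}_{\circ},\mathfrak{r}_{\circ})$ coming from $As(A)$ being alternative; and (iii) the strong pre-alternative axioms (2.7)-(2.8), evaluated appropriately on the $a_i$'s and $b_i$'s. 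Then I would proceed equation by equation: for the ``$1$-cocycle'' equations (6.1)-(6.2), the left side $\alpha(x\circ y+y\circ x)$ expands by applying (i) and (ii) separately on the two tensor legs and lands on the right side at once; for (6.3)-(6.4) one additionally uses the commutator axiom from (2.4) on the leg where $\mathfrak{l}_{\succ}$ collides with $\mathfrak{r}_{\circ}$ (resp. $\mathfrak{r}_{\prec}$), together with the $\tau$-formulas above; and for (6.5)-(6.8), where the input is $x\prec y$ or $x\succ y$, the mixed axioms (2.7)-(2.8) reconcile the remaining terms. I would present one representative case in full detail (say the verification of (6.3)), exhibiting that the $b_i$-leg terms and the $a_i$-leg terms match separately, and remark that the other seven are entirely parallel.

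The obstacle is combinatorial rather than conceptual: each of (6.1)-(6.8) carries several summands on each side acting on different tensor factors, every substitution produces a handful of $r$-terms, and the real labor is in organizing the cancellations while tracking which multiplication operator sits on which leg and with which sign. The one genuinely delicate point is applying $r=\tau r$ correctly when re-indexing the $\tau\alpha$ and $\tau\beta$ contributions (and the quadruple sum $\alpha+\beta+\tau\alpha+\tau\beta$ occurring in (6.7)-(6.8)), since this is what aligns the legs of $\tau\beta(x)$ with those of $\alpha(x)$; once a fixed convention for it is adopted, every equation collapses to the pre-alternative axioms, with no condition on $r$ beyond symmetry.
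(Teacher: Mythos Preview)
Your proposal is correct and follows essentially the same route as the paper: substitute the formulas (7.1)--(7.2) into each of (6.1)--(6.8), use $r=\tau r$ to re-index the $\tau\alpha$, $\tau\beta$ contributions, and reduce everything to the pre-alternative axioms (2.7)--(2.8) together with the bimodule identities of Proposition~2.9 and the alternative law for $\circ$. The paper likewise treats (6.1)--(6.2) as immediate (there the cross terms $\mathfrak{r}_{\circ}\otimes\mathfrak{l}_{\succ}$ cancel in pairs before any axiom is invoked, so symmetry of $r$ is not even needed for those two) and then works out (6.5) in full as its representative case rather than your (6.3); otherwise the arguments coincide.
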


\begin{proof} It is obvious that $\alpha,\beta$ automatically satisfy equations (6.1) and (6.2).
For equations (6.3)-(6.8), as an example, we give an explicit proof
of the fact that $\alpha,\beta$ satisfy equation (6.5) since the
proof of the other cases are similar.
 Assume $r=\sum_i{u_i\otimes v_i}\in A\otimes A$. After rearranging the terms
 suitably, we have that (note that we use the fact that $r$ is
 symmetric)
\begin{eqnarray*}
&&(\alpha+\beta)(x\prec y)-
 (1\otimes\mathfrak{l}_{\prec}(x))(\tau\alpha+\beta)(y)-(\mathfrak{r}_{\prec}(y)\otimes
1+\mathfrak{l}_{\succ}(y)\otimes 1-1\otimes
\mathfrak{r}_{\prec}(y))(\alpha+\beta)(x)\\
&\mbox{}&+(\mathfrak{r}_{\succ}(x)\otimes
1)\tau\beta(y)\\
&&=\sum_i\{u_i\circ(x\prec y)\otimes v_i-u_i\prec(x\prec y)\otimes
v_i-(u_i\circ x)\prec y\otimes v_i+(u_i\prec x)\prec y\otimes
v_i\\
\end{eqnarray*}
\begin{eqnarray*}
& &-y\succ(u_i\circ x)\otimes v_i+y\succ(u_i\prec x)\otimes
v_i+(y\circ u_i)\succ x\otimes v_i-u_i\otimes(x\prec y)\succ
v_i\\
& &+u_i\otimes (x\prec y)\circ v_i-u_i\otimes x\prec(v_i\circ
y)-u_i\otimes x\prec(y\circ v_i)-u_i\otimes(x\succ v_i)\prec
y\\
& &+u_i\otimes(x\circ v_i)\prec y+y\succ u_i\otimes x\prec
v_i+y\succ u_i\otimes x\succ v_i-y\succ
u_i\otimes x\circ v_i+u_i\prec y\otimes x\prec v_i\\
& &+u_i\prec y\otimes x\succ v_i-u_i\prec y\otimes x\circ
v_i+u_i\circ x\otimes v_i\prec y-u_i\prec x\otimes v_i\prec
y-u_i\succ x\otimes v_i\prec y\}.
\end{eqnarray*} The sum of the first seven terms is zero since it equals to
$$\sum_i{[u_i\succ(x\prec y)-(u_i\succ x)\prec y-y\succ(u_i\succ
x)+(y\circ u_i)\succ x]\otimes v_i}=0.$$ The sum of the 8th-13th
terms is zero since it equals to
$$\sum_i{u_i\otimes[(x\prec y)\prec v_i-x\prec(y\circ
v_i)-x\prec(v_i\circ y)+(x\prec v_i)\prec y]}=0.$$  The sum of the
14th-16th terms, the sum of 17th-19th terms and the sum of the last
three terms are all zero obviously.
\end{proof}

\begin{lemma}
 Let $A$ be a vector space and
$\alpha,\beta:A\rightarrow A\otimes A$ be two linear maps. Then
$(A,\alpha,\beta)$ is a pre-alternative coalgebra if and only if the
linear maps $S^i_{\alpha,\beta}:A\rightarrow A\otimes A\otimes A$
$(i\in \{1,2,3,4\})$ given by the following equations are all zero:
(for any $x\in A$)
\begin{equation}
S^1_{\alpha,\beta}(x)=((\alpha+\beta)\otimes 1)\beta(x)+(\tau\otimes
1)((\alpha+\beta)\otimes
1)\beta(x)-(1\otimes\beta)\beta(x)-(\tau\otimes
1)(1\otimes\beta)\beta(x),\end{equation}
\begin{equation}
S^2_{\alpha,\beta}(x)=(\beta\otimes 1)\alpha(x)+(\tau\otimes
1)(\alpha\otimes 1)\alpha(x)-(1\otimes\alpha)\beta(x)-(\tau\otimes
1)(1\otimes(\alpha+\beta))\alpha(x),\end{equation}
\begin{equation}S^3_{\alpha,\beta}(x)=((\alpha+\beta)\otimes 1)\beta(x)+(1\otimes\tau)(\beta\otimes
1)\alpha(x)-(1\otimes\beta)\beta(x)-(1\otimes\tau)(1\otimes\alpha)\beta(x),\end{equation}
\begin{equation}S^4_{\alpha,\beta}(x)=(\alpha\otimes 1)\alpha(x)+(1\otimes\tau)(\alpha\otimes
1)\alpha(x)-(1\otimes(\alpha+\beta))\alpha(x)-
(1\otimes\tau)(1\otimes(\alpha+\beta))\alpha(x).\end{equation}
\end{lemma}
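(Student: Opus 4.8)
The plan is to unravel what it means for $(A,\alpha,\beta)$ to be a pre-alternative coalgebra by dualizing Definition 2.6. By definition, $(A,\alpha,\beta)$ is a pre-alternative coalgebra precisely when $(A^*,\prec_*,\succ_*)$ is a pre-alternative algebra, where $\prec_*=\alpha^*$ and $\succ_*=\beta^*$ (so that $\alpha=(\prec_*)^*$ and $\beta=(\succ_*)^*$ after the canonical identification $A\simeq A^{**}$, valid since $A$ is finite-dimensional). The pre-alternative axioms on $A^*$ are, by Remark 2.7(1), equivalent to the four strong identities
$$(a^*,b^*,c^*)_m+(b^*,a^*,c^*)_r=0,\quad (a^*,b^*,c^*)_m+(a^*,c^*,b^*)_l=0,$$
$$(a^*,b^*,c^*)_l+(b^*,a^*,c^*)_l=0,\quad (a^*,b^*,c^*)_r+(a^*,c^*,b^*)_r=0,$$
for all $a^*,b^*,c^*\in A^*$, where the three associators are taken with respect to $\prec_*,\succ_*$ and $a^*\circ_*b^*=a^*\prec_*b^*+a^*\succ_*b^*$. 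The strategy is to show that each $S^i_{\alpha,\beta}=0$ ($i=1,2,3,4$) is the dual, as a map $A\to A^{\otimes 3}$, of one of these four identities viewed as a trilinear expression $A^*\otimes A^*\otimes A^*\to A^*$.

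First I would fix the bookkeeping: for a bilinear product $m:A^*\otimes A^*\to A^*$ its dual comultiplication $m^*:A\to A\otimes A$ satisfies $\langle m^*(x),a^*\otimes b^*\rangle=\langle x,m(a^*,b^*)\rangle$, and composites dualize in the expected contravariant way, with the flip $\sigma$ on $A^{\otimes 2}$ dualizing the swap of the two arguments and the maps $\tau\otimes 1$, $1\otimes\tau$ in (7.3)--(7.7) tracking permutations of the three tensor legs. Concretely, a term like $((\alpha+\beta)\otimes 1)\beta(x)$ pairs against $a^*\otimes b^*\otimes c^*$ to give $\langle x,\ (a^*\circ_* b^*)\succ_* c^*\rangle$ — reading $\beta$ as the "outer" $\succ_*$ and $\alpha+\beta=\circ_*$ as the product on the first two legs — while $(1\otimes\beta)\beta(x)$ gives $\langle x,\ a^*\succ_*(b^*\succ_* c^*)\rangle$; hence the first two terms of $S^1$ encode $(a^*,b^*,c^*)_l$ and $(b^*,a^*,c^*)_l$ via the $(\tau\otimes 1)$ symmetrization, so $S^1_{\alpha,\beta}=0$ is exactly the $l$-associator identity $(a^*,b^*,c^*)_l+(b^*,a^*,c^*)_l=0$ on $A^*$. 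I would carry out the same matching for $S^2$ (dual to $(a^*,b^*,c^*)_m+(b^*,a^*,c^*)_r=0$, where $\alpha=\prec_*$ appears in the $r$- and $m$-associators), for $S^3$ (dual to $(a^*,b^*,c^*)_m+(a^*,c^*,b^*)_l=0$, with the $1\otimes\tau$ flip realizing the swap of $b^*$ and $c^*$ in the $l$-term), and for $S^4$ (dual to $(a^*,b^*,c^*)_r+(a^*,c^*,b^*)_r=0$), in each case reading off from (7.4)--(7.7) which legs carry $\circ_*=\alpha+\beta$, which carry $\alpha=\prec_*$ or $\beta=\succ_*$, and which permutation is applied. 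Since $A$ is finite-dimensional, a linear map $A\to A^{\otimes 3}$ vanishes iff it pairs to zero against every $a^*\otimes b^*\otimes c^*$, so each $S^i_{\alpha,\beta}=0$ is equivalent to the corresponding strong pre-alternative identity on $A^*$, and the conjunction of all four is equivalent — again by Remark 2.7(1), in characteristic $\neq 2$ — to $(A^*,\prec_*,\succ_*)$ being a pre-alternative algebra, i.e. to $(A,\alpha,\beta)$ being a pre-alternative coalgebra.

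The main obstacle is purely clerical rather than conceptual: getting the eleven or so terms in each of (7.3)--(7.7) to line up, with the correct signs and the correct placement of $\tau\otimes 1$ versus $1\otimes\tau$, against the expanded associators on $A^*$ — in particular making sure that the "middle" product $x\circ y=x\prec y+x\succ y$ appearing inside each associator is consistently dualized as $\alpha+\beta$ on the appropriate pair of tensor legs, and that the two permuted copies in each $S^i$ genuinely reproduce the two terms of the relevant strong identity (e.g. $(x,y,z)_l+(y,x,z)_l$ versus $(x,y,z)_l$ alone). I would organize this by writing out, once and for all, the pairing of a generic leg-permuted, product-labeled tensor expression with $a^*\otimes b^*\otimes c^*$, and then reading (7.3)--(7.7) through that dictionary; no genuinely hard step remains after that.
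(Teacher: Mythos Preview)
Your proposal is correct and is exactly the argument the paper has in mind: the paper's own proof is the single sentence ``It follows immediately from the definition of pre-alternative algebra (cf.\ Remark 2.6),'' and what you have written is precisely the unpacking of that sentence via duality, matching each $S^i_{\alpha,\beta}$ to one of the four strong axioms (2.7)--(2.8). Your identifications of $S^1,S^2,S^3,S^4$ with the $l$-, $(m,r)$-, $(m,l)$-, and $r$-associator identities respectively check out term by term (note the relevant remark is numbered 2.6 in the paper, not 2.7).
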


\begin{proof}
It follows immediately from the definition of pre-alternative
algebra (cf. Remark 2.6). \end{proof}

\begin{defn}
{\rm Let $(A,\prec,\succ)$ be a pre-alternative algebra and
$(As(A),\circ)$ be the associated alternative algebra. Let $r\in
A\otimes A$. The following equations are called {\it
$PA_j^i$-equations} $(i=1,2,j=1,2,3)$ respectively:
\begin{equation}
PA_1^1=r_{12}\circ r_{13}-r_{23}\succ r_{12}-r_{13}\prec
r_{23}=0,\end{equation}
\begin{equation}PA_1^2=r_{13}\circ r_{12}-r_{12}\prec r_{23}-r_{23}\succ r_{13}=0,\end{equation}
\begin{equation}PA_2^1=r_{12}\circ r_{23}-r_{23}\prec r_{13}-r_{13}\succ r_{12}=0,\end{equation}
\begin{equation}PA_2^2=r_{23}\circ r_{12}-r_{13}\succ r_{23}-r_{12}\prec r_{13}=0,\end{equation}
\begin{equation}PA_3^1=r_{13}\circ r_{23}-r_{12}\succ r_{13}-r_{23}\prec r_{12}=0,\end{equation}
\begin{equation}PA_3^2=r_{23}\circ r_{13}-r_{13}\prec r_{12}-r_{12}\succ r_{23}=0.\end{equation}
 We set $PA_j=PA_j^1+PA_j^2$,
where $j=1,2,3$. All $PA_j^i$-equations $(i=1,2,j=1,2,3)$ are called
{\it $PA$-equations}.}
\end{defn}

\begin{prop}
Let $(A,\prec,\succ)$ be a pre-alternative algebra and
$(As(A),\circ)$ be the associated alternative algebra. Let $r\in
A\otimes A$ be symmetric. Let $\alpha,\beta:A\rightarrow A\otimes A$
be two linear maps given by equations (7.1) and (7.2) respectively,
where $r_{\prec}=r_{\succ}=r$. Then $(A,\alpha,\beta)$ becomes a
pre-alternative coalgebra if and only if the following equations
holds: (for any $x\in A$)
\begin{equation}
-(1\otimes 1\otimes \mathfrak{l}_{\circ}(x))PA_3+
(1\otimes\mathfrak{r}_{\prec}(x)\otimes
1)PA_3^2+(\mathfrak{r}_{\prec}(x)\otimes 1\otimes
1)PA_3^1=0,\end{equation}
\begin{equation}-(1\otimes 1\otimes\mathfrak{l}_{\succ}(x))PA_2+
(1\otimes\mathfrak{r}_{\circ}(x)\otimes
1)PA_2^1+(\mathfrak{r}_{\prec}(x)\otimes 1\otimes
1)PA_2^2=0,\end{equation}
\begin{equation}(\mathfrak{r}_{\prec}(x)\otimes 1\otimes 1)PA_3-(1\otimes
1\otimes\mathfrak{l}_{\circ}(x))PA_3^1-(1\otimes\mathfrak{l}_{\succ}(x)\otimes
1)PA_3^2=0,\end{equation}
\begin{equation}(\mathfrak{r}_{\circ}(x)\otimes 1\otimes
1)PA_1-(1\otimes\mathfrak{l}_{\succ}(x)\otimes 1)PA_1^2-(1\otimes
1\otimes\mathfrak{l}_{\succ}(x))PA_1^1=0.\end{equation}
\end{prop}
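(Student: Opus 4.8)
The plan is to deduce this from Lemma 7.4 by a direct, if lengthy, computation. By Lemma 7.4, $(A,\alpha,\beta)$ is a pre-alternative coalgebra precisely when the four maps $S^i_{\alpha,\beta}:A\rightarrow A\otimes A\otimes A$ defined by (7.3)--(7.6) vanish identically, so it suffices to show that, once $\alpha,\beta$ are given by (7.1)--(7.2) with $r_{\prec}=r_{\succ}=r$ and $r$ symmetric, each $S^i_{\alpha,\beta}(x)$ simplifies --- using the pre-alternative identities of $(A,\prec,\succ)$ --- to a (possibly negated) copy of the left-hand side of one of the equations (7.13)--(7.16); the asserted equivalence then follows at once. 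Writing $r=\sum_i u_i\otimes v_i$ and using $r=\tau r$, one first records the concrete forms
\[
\alpha(x)=\sum_i\big((u_i\circ x)\otimes v_i-u_i\otimes(x\succ v_i)\big),\qquad
\beta(x)=\sum_i\big(u_i\otimes(x\circ v_i)-(u_i\prec x)\otimes v_i\big),
\]
\[
(\alpha+\beta)(x)=\sum_i\big((u_i\succ x)\otimes v_i+u_i\otimes(x\prec v_i)\big),
\]
which already display the multiplication operators $\mathfrak{r}_{\circ},\mathfrak{l}_{\succ}$ (coming from $\alpha$) and $\mathfrak{l}_{\circ},\mathfrak{r}_{\prec}$ (coming from $\beta$) that appear in (7.13)--(7.16).

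Next I would substitute these expressions into (7.3)--(7.6). Each $S^i_{\alpha,\beta}(x)$ then becomes a double sum over indices $i,j$ of elementary tensors in $A\otimes A\otimes A$, each of whose three legs is built out of $u_i,v_i,u_j,v_j$ and a single occurrence of $x$ using $\prec$, $\succ$ and $\circ$. The symmetry $r=\tau r$ is used to relabel $u\leftrightarrow v$ so that these terms can be grouped according to the leg on which $x$ sits; together with the flips $\tau\otimes 1$ and $1\otimes\tau$ occurring in (7.3)--(7.6), this relabelling is exactly what produces both members $PA_j^1$ and $PA_j^2$ of each of the pairs in (7.7)--(7.12). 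In the terms that survive, the single $x$ appears as an outermost factor on one leg, where it contributes an operator $\mathfrak{r}_{\bullet}(x)$ or $\mathfrak{l}_{\bullet}(x)$, leaving behind a tensor whose three legs involve only $u_i,v_i,u_j,v_j$, i.e. one of the pieces $r_{12}\circ r_{13},\,r_{23}\succ r_{12},\dots$ that make up (7.7)--(7.12).

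The technical heart of the argument, and the main obstacle, is to dispose of the remaining ``off-$PA$'' terms: after the substitution there are large blocks of terms in which two of the legs are tied together through a nested product, of the shape $\big((u_i\succ x)\prec v_i\big)\otimes(\cdots)$ and its analogues. One must show that every such block vanishes by the defining identities (2.6) of a pre-alternative algebra --- equivalently by the strong forms (2.7)--(2.8), available since $\mathrm{char}\,\mathbf{k}\neq 2$ --- by recognizing the block as a sum of $l$-, $m$- and $r$-associators $(\cdot,\cdot,\cdot)_l,(\cdot,\cdot,\cdot)_m,(\cdot,\cdot,\cdot)_r$ and invoking (2.6), exactly as in the proof of Theorem 7.3 but one tensor degree higher. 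What then remains is, for each $i$, precisely the left-hand side of one of (7.13)--(7.16); I expect the natural matching in which the $\beta$-only map $S^1_{\alpha,\beta}$ reproduces (7.13), the $\alpha$-only map $S^4_{\alpha,\beta}$ reproduces (7.16), and the two mixed maps $S^2_{\alpha,\beta}$ and $S^3_{\alpha,\beta}$ reproduce (7.14) and (7.15) respectively. Following the style of the rest of the paper, I would carry out the computation in full only for a single representative case, say $S^1_{\alpha,\beta}(x)$ against (7.13), and then remark that the other three are entirely analogous.
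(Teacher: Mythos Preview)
Your proposal is correct and follows essentially the same route as the paper: invoke Lemma 7.4, substitute the explicit formulas for $\alpha$ and $\beta$, exploit $r=\tau r$ together with the pre-alternative identities of Remark 2.6, and verify the matching $S^1\leftrightarrow(7.13)$, $S^2\leftrightarrow(7.14)$, $S^3\leftrightarrow(7.15)$, $S^4\leftrightarrow(7.16)$, doing only $S^1$ in detail. The one organizational difference is that the paper does not first isolate ``off-$PA$'' blocks and cancel them separately; instead it partitions all the terms of $S^1_{\alpha,\beta}(x)$ into three groups $(S1),(S2),(S3)$ according to the tensor leg on which $x$ sits, and then shows (using symmetry of $r$ and Remark 2.6) that these three groups equal $-(1\otimes1\otimes\mathfrak{l}_\circ(x))PA_3$, $(1\otimes\mathfrak{r}_\prec(x)\otimes1)PA_3^2$, and $(\mathfrak{r}_\prec(x)\otimes1\otimes1)PA_3^1$ respectively --- the pre-alternative identities are absorbed into these identifications rather than appearing as a separate cancellation step.
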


\begin{proof} We give an explicit proof of the fact that
${\rm equation\; (7.13)}\Leftrightarrow S_{\alpha,\beta}^1=0$ as an
example. By a similar study, we can show that $$  {\rm equation\;
(7.14)}\Leftrightarrow S_{\alpha,\beta}^2=0,\;\; {\rm equation\;
(7.15)}\Leftrightarrow S_{\alpha,\beta}^3=0, \;\;{\rm equation\;
(7.16)}\Leftrightarrow S_{\alpha,\beta}^4=0.$$ In fact, set
$r=\sum\limits_i u_i\otimes v_i$. Substituting
$$\alpha(x)=\sum\limits_iu_i\circ x\otimes v_i-u_i\otimes x\succ v_i,\;
\beta(x)=\sum\limits_iu_i\otimes x\circ v_i-u_i\prec x\otimes
v_i,\;\forall x\in A,$$ into equation (7.3) and after rearranging
the terms suitably, we divide $S_{\alpha,\beta}^1$ into three parts:
$$S_{\alpha,\beta}^1=(S1)+(S2)+(S3),$$
where
\begin{eqnarray*}
(S1)&=&\sum_{i,j}\{u_i\circ u_j\otimes v_i\otimes x\circ
v_j-u_i\otimes u_j\succ v_i\otimes x\circ v_j+u_i\otimes u_j\circ
v_i\otimes x\circ v_j\\
& &-u_i\prec u_j\otimes v_i\otimes x\circ v_j+v_i\otimes u_i\circ
u_j\otimes x\circ v_j-u_j\succ v_i\otimes
u_i\otimes x\circ v_j\\
& &+u_j\circ v_i\otimes u_i\otimes x\circ v_j-v_i\otimes u_i\prec
u_j\otimes x\circ v_j-u_j\otimes u_i\otimes(x\circ v_j)\circ
v_i\\&&-u_i\otimes u_j\otimes(x\circ
v_j)\circ v_i\}.\\
(S2)&=&\sum_{i,j}\{u_i\otimes(u_j\prec x)\succ v_i\otimes
v_j-u_i\otimes(u_j\prec x)\circ v_i\otimes v_j-v_i\otimes
u_i\circ(u_j\prec x)\otimes v_j\\ & &+v_i\otimes u_i\prec(u_j\prec
x)\otimes v_j+u_j\otimes u_i\prec(x\circ v_j)\otimes v_i+u_i\otimes
u_j\prec x\otimes v_j\circ v_i\\&&-u_i\prec v_j\otimes u_j\prec
x\otimes v_i\},\\
(S3)&=&\sum_{i,j}\{-u_i\circ(u_j\prec x)\otimes v_i\otimes
v_j+u_i\prec(u_j\prec x)\otimes v_i\otimes v_j+(u_j\prec x)\succ
v_i\otimes u_i\otimes v_j\\& &-(u_j\prec x)\circ v_i\otimes
u_i\otimes v_j+u_j\prec x\otimes u_i\otimes v_j\circ v_i-u_j\prec
x\otimes u_i\prec v_j\otimes v_i\\&&+u_i\prec(x\circ v_j)\otimes
u_j\otimes v_i\}.
\end{eqnarray*}
Since $r$ is symmetric and by Remark 2.6, we have
$$(S1)=-(1\otimes
1\otimes\mathfrak{l}_{\circ}(x))PA_3,\;\;(S2)=(1\otimes\mathfrak{r}_{\prec}(x)\otimes
1)PA_3^2,\;\;(S3)=(\mathfrak{r}_{\prec}(x)\otimes 1\otimes
1)PA_3^1.$$ Therefore the conclusion holds.
\end{proof}

\begin{theorem}
 Let $(A,\prec,\succ)$ be a
pre-alternative algebra and $r\in A\otimes A$ be symmetric. Let
$\alpha,\beta:A\rightarrow A\otimes A$ be two linear maps given by
equations (7.1) and (7.2) respectively, where
$r_{\prec}=r_{\succ}=r$. Then $(A,\prec,\succ,\alpha,\beta)$ is a
pre-alternative bialgebra if and only if equations (7.13)-(7.16) are
satisfied.\end{theorem}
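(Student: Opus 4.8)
The plan is to deduce the statement directly from Definition 7.4 by verifying its three constituent requirements one at a time, under the standing hypotheses that $r$ is symmetric and $r_{\prec}=r_{\succ}=r$. Recall that $(A,\prec,\succ,\alpha,\beta)$ is a pre-alternative bialgebra precisely when (i) $(A,\prec,\succ)$ is a pre-alternative algebra, (ii) $(A,\alpha,\beta)$ is a pre-alternative coalgebra, and (iii) $\alpha,\beta$ satisfy equations (6.1)--(6.8). Requirement (i) is part of the hypothesis of the theorem, so nothing has to be checked there.

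For requirement (iii), the point is that $\alpha,\beta$ are defined by equations (7.1)--(7.2) with $r_{\prec}=r_{\succ}=r$ and $r$ symmetric, which is exactly the setting of Theorem 7.3; hence $\alpha,\beta$ automatically satisfy equations (6.1)--(6.8), and (iii) imposes no extra condition. For requirement (ii), I would invoke Proposition 7.5, whose hypotheses ($r$ symmetric, $\alpha,\beta$ given by (7.1)--(7.2) with $r_{\prec}=r_{\succ}=r$) coincide exactly with those of the present theorem: it states that $(A,\alpha,\beta)$ is a pre-alternative coalgebra if and only if equations (7.13)--(7.16) hold. Combining the three points gives: $(A,\prec,\succ,\alpha,\beta)$ is a pre-alternative bialgebra $\iff$ (ii) holds $\iff$ equations (7.13)--(7.16) are satisfied, which is the desired assertion.

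There is no genuine obstacle here; the substance has already been carried by Theorem 7.3 (which supplies (6.1)--(6.8) for free) and by Proposition 7.5 (which translates the coassociativity-type conditions $S^i_{\alpha,\beta}=0$ into the $PA$-equations (7.13)--(7.16)). The only item requiring care is the bookkeeping observation that the hypotheses of those two earlier results are simultaneously in force — symmetry of $r$ together with the identification $r_{\prec}=r_{\succ}=r$ — and this is precisely what is assumed in the theorem, so the argument reduces to a one-line synthesis once this is noted.
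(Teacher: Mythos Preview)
Your argument is correct and matches the paper's own proof, which simply says ``It follows from Theorem 7.3 and Proposition 7.6.'' The only slip is in your cross-references: the definition of a pre-alternative bialgebra is Definition 6.2 (not 7.4, which is a lemma), and the result translating the coalgebra condition into equations (7.13)--(7.16) is Proposition 7.6 (not 7.5, which is the definition of the $PA$-equations).
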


\begin{proof} It follows from Theorem 7.3 and Proposition
7.6.\end{proof}

Next we give a Drinfeld's ``double" construction (\cite{CP}) for a
pre-alternative bialgebra.

\begin{theorem} Let $(A,\prec,\succ,\alpha,\beta)$ be a pre-alternative
bialgebra. Then there is a canonical pre-alternative bialgebra
structure on $A\oplus A^*$ such that both the inclusions
$i_1:A\rightarrow A\oplus A^*$ and $i_2:A^*\rightarrow A\oplus A^*$
into the two summands are homomorphisms of pre-alternative
bialgebras, where the pre-alternative bialgebra structure on $A^*$
is given in Example 6.6.
\end{theorem}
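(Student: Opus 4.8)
The plan is to make $D:=A\oplus A^{*}$ into a \emph{coboundary} pre-alternative bialgebra built from the canonical symmetric element of $D\otimes D$, and then to check that the two inclusions preserve all the structure. By the equivalence $(4)\Leftrightarrow(3)$ of Corollary 6.3, the pre-alternative bialgebra $(A,\prec,\succ,\alpha,\beta)$ yields a matched pair of pre-alternative algebras $(A,A^{*},-\mathfrak{r}_{\succ}^{*},\mathfrak{l}_{\circ}^{*},\mathfrak{r}_{\circ}^{*},-\mathfrak{l}_{\prec}^{*},-\mathfrak{r}_{\succ_{*}}^{*},\mathfrak{l}_{\circ_{*}}^{*},\mathfrak{r}_{\circ_{*}}^{*},-\mathfrak{l}_{\prec_{*}}^{*})$, where $(A^{*},\prec_{*},\succ_{*})$ is the pre-alternative algebra underlying the dual bialgebra of Example 6.6 and $\circ_{*}=\prec_{*}+\succ_{*}$. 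Through Definition 5.4 this produces a pre-alternative algebra structure $(\prec_{D},\succ_{D})$ on $D$ whose associated alternative algebra is the $L$-symplectic algebra $As(A)\bowtie As(A^{*})$ and whose mixed products are exactly those written out in the proof of Proposition 5.7. Fix a basis $\{e_{i}\}$ of $A$ with dual basis $\{e_{i}^{*}\}$ and set $r=\tfrac12\sum_{i}(e_{i}\otimes e_{i}^{*}+e_{i}^{*}\otimes e_{i})\in D\otimes D$; this $r$ is symmetric, and nondegenerate (it is the canonical element of the natural pairing of $A$ with $A^{*}$). Define $\alpha_{D},\beta_{D}:D\to D\otimes D$ by equations (7.1)--(7.2) with $r_{\prec}=r_{\succ}=r$, all products taken in $D$.

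Now apply the coboundary machinery to $D$. By Theorem 7.3 the maps $\alpha_{D},\beta_{D}$ automatically satisfy equations (6.1)--(6.8) for $D$; by Theorem 7.7 (through Proposition 7.6 and Lemma 7.4), $(D,\prec_{D},\succ_{D},\alpha_{D},\beta_{D})$ is a pre-alternative bialgebra provided equations (7.13)--(7.16) hold in $D$, and this in turn follows as soon as $r$ is a solution of \emph{all} the $PA$-equations of Definition 7.5 in $D$, since every summand on the left-hand sides of (7.13)--(7.16) carries a $PA_{j}$- or $PA_{j}^{i}$-factor. So the core of the proof is the identity $PA_{j}^{i}(r)=0$ in $D^{\otimes 3}$ for $i=1,2$ and $j=1,2,3$: one expands $r_{12}\circ_{D}r_{13}$, $r_{23}\succ_{D}r_{12}$, $r_{13}\prec_{D}r_{23}$ and the analogous mixed products using the Proposition 5.7 formulas together with the definitions of $\prec_{*},\succ_{*}$ via $\alpha^{*},\beta^{*}$, and then sorts the resulting element of $D^{\otimes 3}$ according to which of the summands $A$ or $A^{*}$ carries each of the three tensor legs; each homogeneous component is precisely one of the eight bialgebra equations (6.1)--(6.8), or an immediate consequence of them, hence vanishes by hypothesis. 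I expect this bookkeeping to be the main obstacle: there are six $PA$-equations, each splitting into $2^{3}=8$ components, and the matching uses the full strength of (6.1)--(6.8) and of the matched-pair formulas.

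It remains to treat the inclusions. Since $A$ and $A^{*}$ are subalgebras of $D=A\bowtie A^{*}$ (take $a=b=0$, respectively $x=y=0$, in the products of Definition 5.4), both $i_{1}$ and $i_{2}$ are homomorphisms of pre-alternative algebras. For the comultiplications, evaluate (7.1)--(7.2) on $x\in A\subset D$: using $e_{i}\circ_{D}x=e_{i}\circ x\in A$ and $x\succ_{D}e_{i}=x\succ e_{i}\in A$, together with the Proposition 5.7 expansions of $e_{i}^{*}\circ_{D}x$ and $x\succ_{D}e_{i}^{*}$, one finds that the $A\otimes A^{*}$ and $A^{*}\otimes A$ parts of $\alpha_{D}(x)$ cancel while its $A\otimes A$ part equals $\alpha(x)$ --- here the normalization $\tfrac12$ absorbs the two equal contributions, one from each summand of $r$ --- and similarly $\beta_{D}(x)=\beta(x)$; hence $(i_{1}\otimes i_{1})\alpha=\alpha_{D}i_{1}$ and $(i_{1}\otimes i_{1})\beta=\beta_{D}i_{1}$. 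Running the same computation with $A$ and $A^{*}$ interchanged --- legitimate because the matched pair of Corollary 6.3 and the element $r$ are symmetric under the swap $A\leftrightarrow A^{*}$, $\alpha\leftrightarrow\gamma$, $\beta\leftrightarrow\delta$ --- gives $\alpha_{D}(a^{*})=\gamma(a^{*})$ and $\beta_{D}(a^{*})=\delta(a^{*})$ for $a^{*}\in A^{*}$, where $(\gamma,\delta)$ is the comultiplication of the dual pre-alternative bialgebra of Example 6.6; thus $i_{2}$ is a homomorphism of pre-alternative bialgebras as well. The one delicate point here is keeping track of the signs produced by the various minus signs in the matched pair of Corollary 6.3.
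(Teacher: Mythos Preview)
Your plan has a genuine gap at its ``core'': the symmetric element
\[
r=\tfrac12\sum_i(e_i\otimes e_i^*+e_i^*\otimes e_i)\in D\otimes D
\]
does \emph{not} satisfy the $PA$-equations in $D=A\bowtie A^*$, so the step ``each homogeneous component of $PA_j^i(r)$ is one of (6.1)--(6.8), hence vanishes'' fails. To see this, take the simplest pre-alternative bialgebra $(A,\prec,\succ,0,0)$ of Example 6.7, so that $D=A\ltimes_{-\mathfrak{r}_\succ^*,\mathfrak{l}_\circ^*,\mathfrak{r}_\circ^*,-\mathfrak{l}_\prec^*}A^*$ and (6.1)--(6.8) are satisfied trivially. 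Since your $r$ is symmetric and nondegenerate, by Proposition~8.6 it solves the $PA$-equations in $D$ iff the induced bilinear form $\mathcal B$ (which here is the form (3.2)) is a $2$-cocycle of $(D,\prec_D,\succ_D)$. But for $x,y\in A$ and $a^*\in A^*$ one computes in this semidirect product $y\succ_D a^*=\mathfrak{r}_\circ^*(y)a^*$ and $a^*\prec_D x=\mathfrak{l}_\circ^*(x)a^*$, so
\[
\mathcal B(x,y\succ_D a^*)+\mathcal B(y,a^*\prec_D x)=2\langle x\circ y,a^*\rangle
\quad\text{while}\quad
\mathcal B(x\circ_D y,a^*)=\langle x\circ y,a^*\rangle.
\]
Thus the $2$-cocycle identity (8.9) fails whenever $\circ$ is nontrivial, and the $PA$-equations for your $r$ do not hold in $D$. (Compare Corollary 8.10: the symmetric $r$ \emph{does} solve the $PA$-equations in the \emph{different} semidirect product $A\ltimes_{0,\mathfrak{l}_\succ^*,\mathfrak{r}_\prec^*,0}A^*$, not in the bimodule appearing here.) Consequently your route through Theorem 7.7 via the $PA$-equations breaks down; at best one could hope that the particular combinations in (7.13)--(7.16) vanish even though each $PA_j^i$ does not, but that is not what you argued.

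The paper avoids this obstruction by taking the \emph{non-symmetric} $r=\sum_i e_i\otimes e_i^*$ corresponding to $\mathrm{id}:A\to A$. As the authors note, Theorem 7.7 then no longer applies, and they instead verify directly, in coordinates, that the resulting $\alpha_{\mathcal{PAD}},\beta_{\mathcal{PAD}}$ satisfy both (6.1)--(6.8) and the coalgebra conditions $S^i_{\alpha,\beta}=0$ of Lemma 7.4; the key reduction for (6.3), for instance, is the identity
\[
(\mathfrak{r}_{\circ_D}(y)\otimes 1-1\otimes\mathfrak{l}_{\succ_D}(y))(\mathfrak{l}_{\circ_D}(x)\otimes 1-1\otimes\mathfrak{r}_{\prec_D}(x))(r-\tau r)=0,
\]
which they check case by case on the four types of pairs $(x,y)$. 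Your verification of the inclusions is fine in spirit, but the main argument needs to be reworked along these lines.
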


\begin{proof} Denote the pre-alternative algebra structure on $A^*$ induced by $\alpha^*$
and $\beta^*$ by $\prec_*$ and $\succ_*$ respectively, and its
associated alternative algebra structure by $\ast$. Let $r\in
A\otimes A^*\subset(A\oplus A^*)\otimes(A\oplus A^*)$ correspond to
the identity map ${\rm id}:A\rightarrow A$. Then the pre-alternative
algebra structure $\prec_{\bullet},\succ_{\bullet}$ on $A\oplus A^*$
is given by
$\mathcal{PAD}(A)=A\bowtie_{-\mathfrak{r}_{\succ}^*,\mathfrak{l}_{\circ}^*,
\mathfrak{r}_{\circ}^*,-\mathfrak{l}_{\prec}^*}^{-\mathfrak{r}_{\succ_*}^*,
\mathfrak{l}_{\ast}^*,\mathfrak{r}_{\ast}^*,-\mathfrak{l}_{\prec_*}^*}A^*$,
that is
\begin{eqnarray*} &&x\prec_{\bullet}y=x\prec y,\quad
x\succ_{\bullet}y=x\succ y,\quad
a^*\prec_{\bullet}b^*=a^*\prec_*b^*,\quad
a^*\succ_{\bullet}b^*=a^*\succ_*b^*,\\
&&x\prec_{\bullet}a^*=-\mathfrak{r}_{\succ}^*(x)b^*
+\mathfrak{l}_*^*(a^*)x,\quad
x\succ_{\bullet}a^*=\mathfrak{r}_{\circ}^*(x)a^*-\mathfrak{l}_{\prec_*}^*(a^*)x,
\\
&&a^*\prec_{\bullet}x=-\mathfrak{r}_{\succ_*}^*(a^*)x+\mathfrak{l}_{\circ}^*(x)a^*,\quad
a^*\succ_{\bullet}x=\mathfrak{r}_*^*(a^*)x-\mathfrak{l}_{\prec}^*(x)a^*,\;\;\forall
x,y\in A,a^*,b^*\in A^*. \end{eqnarray*} We shall denote its
associated alternative algebra structure by $\bullet$. Let
$\{e_i,...,e_n\}$ be a basis of $A$ and $\{e_1^*,...e_n^*\}$ be its
dual basis. Then $r=\sum\limits_ie_i\otimes e_i^*$. Next we prove
that
$$\alpha_{\mathcal{PAD}}(u)=(\mathfrak{r}_{\circ}(u)\otimes
1-1\otimes\mathfrak{l}_{\succ}(u))r,\quad {\rm and}\quad
 \beta_{\mathcal{PAD}}(u)=(1\otimes\mathfrak{l}_{\circ}(u)-\mathfrak{r}_{\prec}(u)\otimes
 1)r,$$
induces a (coboundary) pre-alternative bialgebra structure on
$A\oplus A^*$. Since $r$ is not symmetric we cannot apply Theorem
7.7 directly. Instead, We shall prove that $\alpha_{\mathcal{PAD}}$
and $\beta_{\mathcal{PAD}}$ satisfy equations (6.1)-(6.8) and the
conditions of Lemma 7.4. For the former, we give an explicit proof
of the fact that $\alpha_{\mathcal{PAD}}$ and
$\beta_{\mathcal{PAD}}$
 satisfy equation (6.3) as an example (the proof of the others is similar). In fact, we only need to prove
 $$(\mathfrak{r}_{\circ}(y)\otimes
 1-1\otimes\mathfrak{l}_{\succ}(y))(\mathfrak{l}_{\circ}(x)\otimes
 1-1\otimes\mathfrak{r}_{\prec}(x))(r-\tau r)=0,\forall x\in A.$$
We can prove the above equation in the following cases: (I) $x,y\in
A$; (II) $x,y\in A^*$; (III) $x\in A,y\in A^*$ and (IV) $x\in
A^*,y\in A$. As an example, we give an explicit proof of the first
case (the proof of the others is similar). Let $x=e_i,y=e_j$, then
the above equation becomes
\begin{eqnarray*}
& &\sum_k\{(e_i\bullet e_k^*)\bullet e_j\otimes e_k-e_k^*\bullet
e_j\otimes e_k\prec_{\bullet}e_i-e_i\bullet e_k^*\otimes
e_j\succ_{\bullet}e_k+e_k^*\otimes
e_j\succ_{\bullet}(e_k\prec_{\bullet}e_i)\}\\
 (*) &=&\sum_k\{(e_i\bullet e_k)\bullet e_j\otimes e_k^*-e_k\bullet
e_j\otimes e_k^*\prec_{\bullet}e_i-e_i\bullet e_k\otimes
e_j\succ_{\bullet}e_k^*+e_k\otimes
e_j\succ_{\bullet}(e_k^*\prec_{\bullet}e_i)\}.
\end{eqnarray*}
The coefficient of $e_m\otimes e_n$ on the left hand side of the
above equation $(*)$ is
\begin{eqnarray*}
& &\sum_k\{\langle (e_i\bullet e_n^*)\bullet e_j,e_m^*\rangle
-\langle e_k^*\bullet e_j,e_m^*\rangle \langle
e_k\prec_{\bullet}e_i,e_n^*\rangle -\langle e_i\bullet
e_k^*,e_m^*\rangle \langle e_j\succ_{\bullet}e_k,e_n^*\rangle \}\\
&=&\sum_k\{\langle e_n^*,e_k\prec e_i\rangle \langle
e_j,e_m^*\prec_*e_k^*\rangle +\langle e_i,e_n^*\succ_*e_k^*\rangle
\langle e_k\circ e_j,e_m^*\rangle -\langle
e_j,e_m^*\prec_*e_k^*\rangle \langle e_k\prec
e_i,e_n^*\rangle \\
& &-\langle e_i,e_k^*\succ_*e_m^*\rangle \langle e_j\succ
e_k,e_n^*\rangle \}
\end{eqnarray*}
The coefficient of $e_m\otimes e_n$ on the right hand side of the
equation $(*)$ is
\begin{eqnarray*}
& &\sum_k\{-\langle e_k\bullet e_j,e_m^*\rangle \langle
e_k^*\prec_{\bullet} e_i,e_n^*\rangle -\langle e_i\bullet
e_k,e_m^*\rangle \langle e_j\succ_{\bullet}e_k^*,e_n^*\rangle +
\langle e_j\succ_{\bullet}(e_m^*\prec_{\bullet}e_i),e_n^*\rangle \}\\
&=&\sum_k\{\langle e_k\circ e_j,e_m^*\rangle \langle
e_i,e_n^*\succ_*e_k^*\rangle +\langle e_i\circ e_k,e_m^*\rangle
\langle e_j,e_k^*\prec_*e_n^*\rangle -\langle e_j\succ
e_k,e_n^*\rangle \langle e_i,e_k^*\succ_*e_m^*\rangle \\
& &-\langle e_j,e_k^*\prec_*e_n^*\rangle \langle e_m^*,e_i\circ
e_k\rangle \}.
\end{eqnarray*}
So the coefficients of $e_m\otimes e_n$ on the both sides of the
equation $(*)$ are the same. Similarly, the coefficients of
$e_m^*\otimes e_n,e_m\otimes e_n^*$ and $e_m^*\otimes e_n^*$ on both
sides of the equation $(*)$ are the same.

On the other hand, we prove that
$S_{\alpha_{\mathcal{PAD}},\beta_{\mathcal{PAD}}}^i=0,i=1,2,3,4$. As
an example we give an explicit proof of the fact that
$S_{\alpha_{\mathcal{PAD}},\beta_{\mathcal{PAD}}}^1=0$.  In fact,
the coefficient of $e_m\otimes e_n\otimes e_p$ in
$S_{\alpha_{\mathcal{PAD}},\beta_{\mathcal{PAD}}}^1(e_k)$ is
\begin{eqnarray*}
& &-\langle e_j\succ_{\bullet}e_m^*,e_n^*\rangle \langle e_k\bullet
e_j^*,e_p^*\rangle +\langle e_j\bullet e_m^*,e_n^*\rangle \langle
e_k\bullet e_j^*,e_p^*\rangle -\langle
e_j\succ_{\bullet}e_n^*,e_m^*\rangle \langle e_k\bullet
e_j^*,e_p^*\rangle \\
& &+\langle e_j\bullet e_n^*,e_m^*\rangle \langle e_k\bullet
e_j^*,e_p^*\rangle -\langle (e_k\bullet
e_m^*)\bullet e_n^*,e_p^*\rangle -\langle (e_k\bullet e_n^*)\bullet e_m^*,e_p^*\rangle \\
&=&\langle e_j,e_m^*\prec_*e_n^*\rangle \langle
e_k,e_j^*\succ_*e_p^*\rangle +\langle e_j,e_m^*\succ_*
e_n^*\rangle \langle e_k,e_j^*\succ_*e_p^*\rangle +\langle e_j,e_n^*\prec_*e_m^*\rangle \langle e_k,e_j^*\succ_*e_p^*\rangle \\
& &+\langle e_j,e_n^*\succ_*e_m^*\rangle \langle
e_k,e_j^*\succ_*e_p^*\rangle -\langle e_k\bullet(e_m^*\ast
e_n^*+e_n^*\ast e_m^*),e_p^*\rangle \\
&=&\langle e_k,(e_m^*\ast e_n^*+e_n^*\ast
e_m^*)\succ_{\bullet}e_p^*-(e_m^*\ast e_n^*+e_n^*\ast
e_m^*)\succ_{\bullet}e_p^*\rangle =0.
\end{eqnarray*}
Similarly, the coefficients of $e_m^*\otimes e_n\otimes
e_p,e_m\otimes e_n^*\otimes e_p,e_m^*\otimes e_n^*\otimes e_p,
e_m\otimes e_n\otimes e_p^*,e_m^*\otimes e_n\otimes e_p^*,e_m\otimes
e_n^*\otimes e_p^*$ and $e_m^*\otimes e_n^*\otimes e_p^*$ in
$S_{\alpha_{\mathcal{PAD}},\beta_{\mathcal{PAD}}}^1(e_k)$ are all
zero. Therefore,
$S_{\alpha_{\mathcal{PAD}},\beta_{\mathcal{PAD}}}^1(e_k)=0$. By a
similar study,
$S_{\alpha_{\mathcal{PAD}},\beta_{\mathcal{PAD}}}^1(e_k^*)=0$. Hence
$\mathcal{PAD}(A)$ is a pre-alternative bialgebra.

 For $e_i\in A$, we have
\begin{eqnarray*}
\alpha_{\mathcal{PAD}}(e_i)&=&\sum_je_j\circ e_i\otimes
e_j^*-e_j\otimes e_i\succ_{\bullet}e_j^*\\&=& \sum_{j,m}e_j\circ
e_i\otimes e_j^*-e_j\otimes
e_m^*\langle e_j^*,e_m\circ e_i\rangle +e_j\otimes e_m\langle e_i,e_j^*\prec_*e_m^*\rangle \\
&=&\sum_{j,m}\langle e_i,e_j^*\prec_*e_m^*\rangle e_j\otimes
e_m=\alpha(e_i).
\end{eqnarray*}
Similarly we have $\beta_{\mathcal{PAD}}(e_i)=\beta(e_i)$, so the
inclusion $i_1: A\rightarrow A\oplus A^*$ is a homomorphism of
pre-alternative bialgebras. Similarly, the inclusion $i_2:
A^*\rightarrow A\oplus A^*$ is also a homomorphism of
pre-alternative bialgebras.\end{proof}

\begin{defn}{\rm Let $(A,\prec,\succ,\alpha,\beta)$ be a pre-alternative
bialgebra. With the pre-alternative bialgebra structure given in
Theorem 7.8, $A\oplus A^*$ is called a {\it Drinfeld's symplectic
double} of $A$. We denoted it by $\mathcal{PAD}(A)$.}
\end{defn}

\begin{prop}
 Let $(A,\prec,\succ,\alpha,\beta)$ be a
pre-alternative bialgebra with $\alpha,\beta$ defined by equations
(7.1) and (7.2), where $r_{\prec}=r_{\succ}=r\in A\otimes A$ and $r$
is a solution of $PA$-equations. Then $T_r$ is a homomorphism of
pre-alternative bialgebras from the pre-alternative bialgebra given
in Example 6.6 to $(A,\prec,\succ,\alpha,\beta)$.
\end{prop}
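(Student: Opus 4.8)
\textit{Proof proposal.} The plan is to verify directly the two conditions of Definition~6.4 for the linear map $\varphi=T_r\colon A^*\to A$. By Example~6.6 the source is the pre-alternative bialgebra $(A^*,\prec_*,\succ_*,\gamma,\delta)$, whose products $\prec_*,\succ_*$ are $\alpha^*,\beta^*$ restricted to $A^*\otimes A^*$ and whose comultiplications $\gamma,\delta\colon A^*\to A^*\otimes A^*$ are the transposes $\prec^*,\succ^*$ of the products of $A$; the target is $(A,\prec,\succ,\alpha,\beta)$, with comultiplications $\alpha,\beta$. So one must check, for all $a^*,b^*\in A^*$, the four identities $T_r(a^*\prec_* b^*)=T_r(a^*)\prec T_r(b^*)$, $T_r(a^*\succ_* b^*)=T_r(a^*)\succ T_r(b^*)$, $(T_r\otimes T_r)\gamma(a^*)=\alpha(T_r(a^*))$ and $(T_r\otimes T_r)\delta(a^*)=\beta(T_r(a^*))$. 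Throughout I write $r=\sum_i u_i\otimes v_i$, so that $T_r(w^*)=\sum_i\langle w^*,v_i\rangle u_i$ and, as in the proof of Theorem~7.8, $\alpha(x)=\sum_i (u_i\circ x)\otimes v_i-\sum_i u_i\otimes(x\succ v_i)$ and $\beta(x)=\sum_i u_i\otimes(x\circ v_i)-\sum_i(u_i\prec x)\otimes v_i$.

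First I would settle the two coalgebra identities. Expanding $\alpha(T_r(a^*))$ in the basis and regrouping, one sees that $\alpha(T_r(a^*))$ is precisely what is obtained by contracting the third tensor factor of $r_{12}\circ r_{13}-r_{23}\succ r_{12}$ against $a^*$, while $(T_r\otimes T_r)\gamma(a^*)$ is the same third-factor contraction of $r_{13}\prec r_{23}$. Hence $\alpha(T_r(a^*))-(T_r\otimes T_r)\gamma(a^*)$ equals the third-factor contraction of $PA_1^1=r_{12}\circ r_{13}-r_{23}\succ r_{12}-r_{13}\prec r_{23}$, which vanishes since $r$ solves the $PA$-equations. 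In the same way $\beta(T_r(a^*))-(T_r\otimes T_r)\delta(a^*)$ is the third-factor contraction of $PA_2^2=r_{23}\circ r_{12}-r_{13}\succ r_{23}-r_{12}\prec r_{13}$, so it too is $0$.

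Next I would treat multiplicativity. Unwinding $a^*\prec_* b^*=\alpha^*(a^*\otimes b^*)$ gives $T_r(a^*\prec_* b^*)=\sum_{i,k}\langle a^*,u_i\circ v_k\rangle\langle b^*,v_i\rangle u_k-\sum_{i,k}\langle a^*,u_i\rangle\langle b^*,v_k\succ v_i\rangle u_k$, whereas $T_r(a^*)\prec T_r(b^*)=\sum_{k,m}\langle a^*,v_k\rangle\langle b^*,v_m\rangle(u_k\prec u_m)$. Contracting the second and third tensor factors of $PA_2^2$ against $a^*$ and $b^*$ reproduces exactly these three sums, with the signs they carry in $PA_2^2$; thus $T_r(a^*\prec_* b^*)-T_r(a^*)\prec T_r(b^*)$ is that contraction of $PA_2^2$, hence $0$. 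Likewise, the analogous expansion of $a^*\succ_* b^*=\beta^*(a^*\otimes b^*)$ shows $T_r(a^*\succ_* b^*)-T_r(a^*)\succ T_r(b^*)$ to be the $(a^*,b^*)$-contraction of the second and third factors of $PA_3^1=r_{13}\circ r_{23}-r_{12}\succ r_{13}-r_{23}\prec r_{12}$, which is $0$. Combining the two preceding paragraphs, $T_r$ is a homomorphism of pre-alternative algebras that also commutes with the two comultiplications, i.e. a homomorphism of pre-alternative bialgebras.

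The step I expect to be the real work is purely organizational: one must track, slot by slot and with the correct signs, which copy of $r$ sits in which of the three tensor positions of each expression $r_{ab}\diamond r_{cd}$ (using the conventions of (1.7) and (1.8)), and match these against the basis expansions of $\alpha^*,\beta^*$ and of $\gamma=\prec^*,\ \delta=\succ^*$ coming from Example~6.6 --- that is, pin down precisely which $PA_j^i$-equation, contracted in which factors, yields each of the four required identities. No structural input beyond the vanishing of the $PA$-equations is needed; in particular no symmetry of $r$ enters, since in each case both sides are reproduced term for term by the relevant contraction.
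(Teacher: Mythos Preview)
Your proposal is correct and follows essentially the same route as the paper: both reduce each of the four homomorphism identities to the vanishing of a contraction of a specific $PA_j^i$-equation. The paper packages the key step into the preliminary observations $(1\otimes\alpha)r=r_{12}\prec r_{13}$ and $(\alpha\otimes 1)r=r_{13}\prec r_{23}$ (which are exactly $PA_2^2=0$ and $PA_1^1=0$ in disguise) and then writes the chain of equalities directly, whereas you name the relevant $PA_j^i$ explicitly for each of the four checks; your remark that symmetry of $r$ is not used in this argument is a correct bonus observation.
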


\begin{proof}
Note that $(1\otimes\alpha)r=r_{12}\prec r_{13}$ and $(\alpha\otimes
1)r=r_{13}\prec r_{23}$. Denote the pre-alternative algebra
structure on $A^*$ induced by $\alpha^*$ and $\beta^*$ by $\prec_*$
and $\succ_*$ respectively and the pre-alternative algebra structure
$\prec,\succ$ on $A$ is defined by the linear maps
$\gamma^*,\delta^*:A\otimes A\rightarrow A$ respectively. Then

{\small
$$T_r(a^*\prec_*b^*)=\langle 1\otimes(a^*\prec_*b^*),r\rangle
=\langle 1\otimes a^*\otimes b^*,(1\otimes\alpha)r\rangle =\langle
1\otimes a^*\otimes b^*,r_{12}\prec r_{13}\rangle =T_r(a^*)\prec
T_r(b^*),$$
$$(T_r\otimes T_r)\gamma(a^*)=\langle 1\otimes 1\otimes
a^*,r_{13}\prec r_{23}\rangle =(1\otimes 1\otimes a^*)(\alpha\otimes
1)r=\alpha(T_r(a^*)),$$} where $a^*,b^*\in A^*$. Similarly we have
$T_r(a^*\succ_*b^*)=T_r(a^*)\succ T_r(b^*)$ and $(T_r\otimes
T_r)\delta(a^*)=\beta(T_r(a^*))$. So the conclusion
holds.\end{proof}

\section{$PA$-equations and their properties}
\setcounter{equation}{0}
\renewcommand{\theequation}
{8.\arabic{equation}}

The simplest way to satisfy the conditions of Theorem 7.7 is given
as follows.

\begin{prop} Let $(A,\prec,\succ)$ be a pre-alternative algebra and $r\in
A\otimes A$ be symmetric. Let $\alpha,\beta:A\rightarrow A\otimes A$
be two linear maps defined by equations (7.1) and (7.2). Then
$(A,\prec,\succ,\alpha,\beta)$ is a pre-alternative bialgebra if $r$
satisfies $PA$-equations.
\end{prop}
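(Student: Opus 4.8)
The plan is to obtain this as an immediate consequence of Theorem 7.7. That theorem already characterizes, under exactly the hypotheses assumed here — $(A,\prec,\succ)$ a pre-alternative algebra, $r\in A\otimes A$ symmetric, and $\alpha,\beta$ given by (7.1)--(7.2) with $r_{\prec}=r_{\succ}=r$ — precisely when $(A,\prec,\succ,\alpha,\beta)$ is a pre-alternative bialgebra: namely when equations (7.13)--(7.16) all hold. (Recall that Theorem 7.3 already guarantees $\alpha,\beta$ satisfy (6.1)--(6.8) in this setting, so by Proposition 7.6 and Theorem 7.1 the only remaining requirement is that $(A,\alpha,\beta)$ be a pre-alternative coalgebra, which is exactly (7.13)--(7.16).) Thus the whole task reduces to verifying that a solution of the $PA$-equations forces (7.13)--(7.16).

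First I would unwind the definitions: saying that $r$ satisfies the $PA$-equations means, by Definition 7.5, that $PA_j^i=0$ in $A\otimes A\otimes A$ for all $i\in\{1,2\}$ and $j\in\{1,2,3\}$; since $PA_j:=PA_j^1+PA_j^2$, this also gives $PA_j=0$ for $j=1,2,3$. Next I would read off the right-hand sides of (7.13)--(7.16) and note that each of these four identities asserts the vanishing of a sum whose every summand has the shape $(\text{operator})\,PA_j$ or $(\text{operator})\,PA_j^i$, where the operator is one of $\mathfrak{l}_{\circ}(x)$, $\mathfrak{l}_{\succ}(x)$, $\mathfrak{r}_{\prec}(x)$, $\mathfrak{r}_{\circ}(x)$ placed in a single tensor slot with identities in the other slots. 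A linear map sends $0$ to $0$, so once all $PA_j^i$ and all $PA_j$ vanish, every summand vanishes, and hence (7.13)--(7.16) hold for all $x\in A$. Applying Theorem 7.7 then yields that $(A,\prec,\succ,\alpha,\beta)$ is a pre-alternative bialgebra.

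I do not expect a genuine obstacle in this argument; it is essentially a corollary of Theorem 7.7. The only point that needs (routine) checking is that no term lying outside the finite list $\{PA_j,\,PA_j^i\}$ appears in (7.13)--(7.16), which is immediate from their displayed form. It is worth stressing — as the surrounding text does — that this produces only a sufficient condition: solving the $PA$-equations is ``the simplest way'' to meet the hypotheses of Theorem 7.7, whereas (7.13)--(7.16) are in general strictly weaker than $PA_j^i=0$.
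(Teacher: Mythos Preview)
Your proposal is correct and matches the paper's approach exactly: the paper presents this proposition immediately after the sentence ``The simplest way to satisfy the conditions of Theorem 7.7 is given as follows'' and gives no separate proof, treating it as the obvious corollary that if every $PA_j^i$ vanishes then the left-hand sides of (7.13)--(7.16) vanish term by term.
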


\begin{prop} Let $(A,\prec,\succ)$ be a pre-alternative algebra and $(As(A),\circ)$ be the
associated alternative algebra. Let $r\in A\otimes A$ be a symmetric
solution of $PA$-equations in $A$. Then the pre-alternative algebra
structure ``$\prec_{\bullet},\succ_{\bullet}$'' on the Drinfeld's
symplectic double $\mathcal{PAD}(A)$ is given as follows:
\begin{equation}
a^*\prec_{\bullet}b^*=a^*\prec_*b^*=
\mathfrak{l}_{\circ}^*(T_r(b^*))a^*-\mathfrak{r}_{\succ}^*(T_r(a^*))b^*,\end{equation}\begin{equation}
a^*\succ_{\bullet}b^*=a^*\succ_*b^*=
\mathfrak{r}_{\circ}^*(T_r(a^*))b^*-\mathfrak{l}_{\prec}^*(T_r(b^*))a^*,\end{equation}\begin{equation}
x\prec_{\bullet}a^*=x\prec
T_r(a^*)+T_r(\mathfrak{r}_{\succ}^*(x)a^*)-\mathfrak{r}_{\succ}^*(x)a^*,\end{equation}\begin{equation}
x\succ_{\bullet}a^*=\mathfrak{r}_{\circ}^*(x)a^*-T_r(\mathfrak{r}_{\circ}^*(x)a^*)
+x\succ
T_r(a^*),\end{equation}\begin{equation}a^*\prec_{\bullet}x=-T_r(\mathfrak{l}_{\circ}^*(x)a^*)+T_r(a^*)\prec
x+\mathfrak{l}_{\circ}^*(x)a^*,\end{equation}\begin{equation}
a^*\succ_{\bullet}x=T_r(a^*)\succ
x+T_r(\mathfrak{l}_{\prec}^*(x)a^*)-\mathfrak{l}_{\prec}^*(x)a^*,\end{equation}
where $x\in A,a^*,b^*\in A^*$, the pre-alternative algebra structure
on $A^*$ is denoted by ``$\prec_*,\succ_*$" and the associated
alternative algebra structure on $As(A^*)$ is denoted by ``$\ast$''.
\end{prop}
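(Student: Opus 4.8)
The plan is to read off the structure of $\mathcal{PAD}(A)=A\bowtie A^*$ from Theorem 7.8 and then make explicit the dual maps appearing there, using the coboundary shape of $\alpha,\beta$ in (7.1)--(7.2) (with $r_{\prec}=r_{\succ}=r$) together with the symmetry of $r$. The $PA$-equation hypothesis enters only through Proposition 8.1, which guarantees that $(A,\prec,\succ,\alpha,\beta)$ is an honest pre-alternative bialgebra so that $\mathcal{PAD}(A)$ is defined; granting that, everything reduces to dual-pairing bookkeeping, the one genuinely computational point being the determination of $\prec_*,\succ_*$ in (8.1)--(8.2). Throughout, write $r=\sum_i u_i\otimes v_i$; since $r$ is symmetric, $T_r(a^*)=\sum_i\langle a^*,v_i\rangle u_i=\sum_i\langle a^*,u_i\rangle v_i$ and the adjunction $\langle T_r(a^*),b^*\rangle=\langle a^*,T_r(b^*)\rangle$ holds for all $a^*,b^*\in A^*$.

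First I would establish (8.1)--(8.2). Setting $r_{\prec}=r_{\succ}=r$ in (7.1)--(7.2) gives $\alpha(x)=\sum_i(u_i\circ x)\otimes v_i-\sum_i u_i\otimes(x\succ v_i)$ and $\beta(x)=\sum_i u_i\otimes(x\circ v_i)-\sum_i(u_i\prec x)\otimes v_i$. Since $\prec_*,\succ_*$ on $A^*$ are the operations induced by $\alpha^*,\beta^*$, pairing $a^*\prec_* b^*$ and $a^*\succ_* b^*$ against an arbitrary $x\in A$ and invoking the two expressions for $T_r$ above directly yields $a^*\prec_* b^*=\mathfrak{l}_{\circ}^*(T_r(b^*))a^*-\mathfrak{r}_{\succ}^*(T_r(a^*))b^*$ and $a^*\succ_* b^*=\mathfrak{r}_{\circ}^*(T_r(a^*))b^*-\mathfrak{l}_{\prec}^*(T_r(b^*))a^*$, which are (8.1)--(8.2). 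Summing them and using $\mathfrak{l}_{\circ}=\mathfrak{l}_{\prec}+\mathfrak{l}_{\succ}$, $\mathfrak{r}_{\circ}=\mathfrak{r}_{\prec}+\mathfrak{r}_{\succ}$ produces the auxiliary identity $a^*\ast b^*=\mathfrak{l}_{\succ}^*(T_r(b^*))a^*+\mathfrak{r}_{\prec}^*(T_r(a^*))b^*$ for the associated alternative product $\ast$ on $A^*$, which I will use repeatedly.

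For (8.3)--(8.6) I would start from the matched-pair formulas of Theorem 7.8, namely $x\prec_{\bullet}a^*=\mathfrak{l}_{\ast}^*(a^*)x-\mathfrak{r}_{\succ}^*(x)a^*$, $x\succ_{\bullet}a^*=\mathfrak{r}_{\circ}^*(x)a^*-\mathfrak{l}_{\prec_*}^*(a^*)x$, $a^*\prec_{\bullet}x=\mathfrak{l}_{\circ}^*(x)a^*-\mathfrak{r}_{\succ_*}^*(a^*)x$ and $a^*\succ_{\bullet}x=\mathfrak{r}_{\ast}^*(a^*)x-\mathfrak{l}_{\prec}^*(x)a^*$. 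The $A^*$-valued summand in each of these already coincides with the $A^*$-part of the corresponding identity (8.3)--(8.6), so it remains only to rewrite the $A$-valued summands $\mathfrak{l}_{\ast}^*(a^*)x$, $\mathfrak{l}_{\prec_*}^*(a^*)x$, $\mathfrak{r}_{\succ_*}^*(a^*)x$, $\mathfrak{r}_{\ast}^*(a^*)x$. Each is obtained by pairing with an arbitrary $c^*\in A^*$: for instance $\langle\mathfrak{l}_{\ast}^*(a^*)x,c^*\rangle=\langle x,a^*\ast c^*\rangle$, into which I substitute the auxiliary formula for $a^*\ast c^*$ and use $\langle T_r(\mathfrak{r}_{\succ}^*(x)a^*),c^*\rangle=\langle a^*,T_r(c^*)\succ x\rangle$ (the $T_r$-adjunction) to get $\langle\mathfrak{l}_{\ast}^*(a^*)x,c^*\rangle=\langle x\prec T_r(a^*)+T_r(\mathfrak{r}_{\succ}^*(x)a^*),c^*\rangle$, i.e. (8.3). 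The remaining three identities are handled the same way, expanding $a^*\prec_* c^*$ (for (8.4)), $c^*\succ_* a^*$ (for (8.5)) and $c^*\ast a^*$ (for (8.6)) by (8.1)--(8.2) and then applying the $T_r$-adjunction. The only thing to watch is the signs coming from (7.1)--(7.2) and the fact that $\mathfrak{l}_{\ast}^*,\mathfrak{r}_{\ast}^*,\mathfrak{l}_{\prec_*}^*,\mathfrak{r}_{\succ_*}^*$ act on $A\cong A^{**}$ while $\mathfrak{l}_{\circ}^*,\mathfrak{r}_{\circ}^*,\mathfrak{l}_{\prec}^*,\mathfrak{r}_{\succ}^*$ act on $A^*$; there is no genuine conceptual obstacle.
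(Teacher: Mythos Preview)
Your proposal is correct and follows essentially the same argument as the paper's own proof: first compute $\prec_*,\succ_*$ by pairing the coboundary comultiplications (7.1)--(7.2) against $A$, then read off the mixed products from the matched-pair description in Theorem~7.8 by dualizing $\mathfrak{l}_{\ast}^*,\mathfrak{r}_{\ast}^*,\mathfrak{l}_{\prec_*}^*,\mathfrak{r}_{\succ_*}^*$ via (8.1)--(8.2) and the $T_r$-adjunction. The only cosmetic difference is that the paper carries this out in a fixed basis with structure constants (and illustrates with (8.4) rather than (8.3)), whereas you work coordinate-free with general elements; neither choice changes the logic.
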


\begin{proof}
Let $\{e_1,...,e_n\}$ be a basis of $A$ and $\{e_1^*,...,e_n^*\}$ be
its dual basis. Suppose that
$$e_i\prec
e_j=\sum_{i,j}c_{ij}^ke_k,\quad e_i\succ
e_j=\sum_{i,j}d_{ij}^ke_k,\quad r=\sum_{i,j}a_{ij}e_i\otimes
e_j,\quad a_{ij}=a_{ji}.$$ Then $T_r(e_i^*)=\sum\limits_ka_{ki}e_k$.
Thus (for any $k,l$)
\begin{eqnarray*}
e_k^*\prec_{\ast}e_l^*&=&\sum_s\{\langle e_k^*\otimes
e_l^*,\alpha(e_s)\rangle e_s^*\}
=\sum_{i,s}\{a_{il}(c_{is}^k+d_{is}^k)-a_{ki}d_{si}^l\}e_s^*
\\&=&\sum_{i,s}\{a_{il}\langle e_i\circ e_s,e_k^*\rangle -e_{ki}\langle
e_s\succ e_i,e_l^*\rangle \}e_s^*=
\mathfrak{l}_{\circ}^*(T_r(e_l^*))e_k^*-\mathfrak{r}_{\succ}^*(T_r(e_k^*))e_l^*.
\end{eqnarray*}
So equation (8.1) holds. Similarly equation (8.2) holds. Therefore
\begin{eqnarray*}\mathfrak{l}_{\prec_*}^*(e_k^*)e_m&=&\sum_s\langle
e_m,e_k^*\prec_*e_s^*\rangle e_s =\sum_s\langle
e_m,\mathfrak{l}_{\circ}^*(T_r(e_s^*))e_k^*-
\mathfrak{r}_{\succ}^*(T_r(e_k^*))e_s^*\rangle e_s
\\&=&\sum_s\langle T_r(e_s^*)\circ e_m,e_k^*\rangle e_s-\langle
e_m\succ T_r(e_k^*),e_s^*\rangle e_s
=T_r(\mathfrak{r}_{\circ}^*(e_m)e_k^*)-e_m\succ T_r(e_k^*).
\end{eqnarray*}
Therefore equation (8.4) follows from the fact that
$e_m\succ_{\bullet}e_k^*=\mathfrak{r}_{\circ}^*(e_m)e_k^*-\mathfrak{l}_{\prec_*}^*(e_k^*)e_m$.
Similarly, we can get the other equations.
\end{proof}

\begin{prop} Let $(A,\prec,\succ)$ be a pre-alternative algebra and
$(As(A),\circ)$ be the associated alternative algebra. Let $r\in
A\otimes A$ be symmetric. Then $r$ is a solution of one of
$PA_j^i$-equations ($i=1,2,j=1,2,3$) if and only if $T_r$ satisfies
\begin{equation}
T_r(a^*)\circ T_r(b^*)=T_r(\mathfrak{r}_{\prec}^*(T_r(a^*))b^*+
\mathfrak{l}_{\succ}^*(T_r(b^*))a^*),\;\; \forall a^*,b^*\in
A^*,\end{equation} that is, $T_r$ is an
 $Al$-operator of $As(A)$ associated to the bimodule
 $(A^*,\mathfrak{r}_{\prec}^*,\mathfrak{l}_{\succ}^*)$.
So in this case $PA_j^i$-equations ($i=1,2,j=1,2,3$) are all
equivalent. Moreover, if $r$ is a solution of one of
$PA_j^i$-equations ($i=1,2,j=1,2,3$), then there is a
 pre-alternative algebra structure on $A^*$ given by \begin{equation}a^*\prec
 b^*=\mathfrak{l}_{\succ}^*(T_r(b^*))a^*,\quad a^*\succ
 b^*=\mathfrak{r}_{\prec}^*(T_r(a^*))b^*,\quad \forall a^*,b^*\in
 A^*.\end{equation}
The associated alternative algebra structure $As(A^*)$ is the same
as the one given by equations (8.1) and (8.2)
 which is induced by $r$ in the sense of coboundary
 pre-alternative bialgebras.
\end{prop}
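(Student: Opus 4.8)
The plan is to run the argument in close parallel with the proof of Proposition 3.6, the only change being that the alternative Yang--Baxter equation and the bimodule $(A^*,\mathfrak{r}_{\circ}^*,\mathfrak{l}_{\circ}^*)$ are replaced by the $PA_j^i$-equations and the bimodule $(A^*,\mathfrak{r}_{\prec}^*,\mathfrak{l}_{\succ}^*)$. Recall that, by Proposition 2.9 together with Proposition 2.4, $(A^*,\mathfrak{r}_{\prec}^*,\mathfrak{l}_{\succ}^*)$ is a bimodule of $(As(A),\circ)$, and that, by Definition 2.10, (8.7) is exactly the assertion that $T_r:A^*\to A$ is an $Al$-operator of $As(A)$ associated to this bimodule. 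The core of the proof is therefore the equivalence, for symmetric $r$, between $r$ being a solution of one of the $PA_j^i$-equations and $T_r$ satisfying (8.7). Granting this, each of the six $PA_j^i$-equations is equivalent to the single identity (8.7), hence they are all mutually equivalent; in particular, if $r$ solves one of them it solves all of them, so that $PA_j=PA_j^1+PA_j^2=0$ for $j=1,2,3$ and $r$ is a symmetric solution of the $PA$-equations, which is precisely the hypothesis of Proposition 8.2.

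To prove this equivalence I would fix a basis $\{e_1,\dots,e_n\}$ of $A$ with dual basis $\{e_1^*,\dots,e_n^*\}$, write $e_i\prec e_j=\sum_k c_{ij}^k e_k$, $e_i\succ e_j=\sum_k d_{ij}^k e_k$, and $r=\sum_{i,j}a_{ij}e_i\otimes e_j$ with $a_{ij}=a_{ji}$, so that $T_r(e_l^*)=\sum_k a_{kl}e_k$. I would then expand one representative equation, say $PA_1^1=r_{12}\circ r_{13}-r_{23}\succ r_{12}-r_{13}\prec r_{23}=0$, in this basis. Using the symmetry $a_{ij}=a_{ji}$ to collect the three summands into contributions lying in a single common tensor factor, the coefficient there, paired against $e_k^*\otimes e_l^*$ in the remaining two factors, turns out to be the component form of
$$T_r(e_k^*)\circ T_r(e_l^*)-T_r\bigl(\mathfrak{r}_{\prec}^*(T_r(e_k^*))e_l^*+\mathfrak{l}_{\succ}^*(T_r(e_l^*))e_k^*\bigr),$$
so that $PA_1^1=0$ if and only if (8.7) holds; this is the same bookkeeping as in the proof of Proposition 3.6, now carried out with a symmetric rather than a skew-symmetric $r$. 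The other five $PA_j^i$-equations are handled by entirely analogous computations, each again reducing to (8.7). This basis computation is the main obstacle: it is conceptually routine, but the six cubic-in-$r$ expressions generate many terms, and one must apply $a_{ij}=a_{ji}$ carefully in order to route every term into the appropriate tensor factor.

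With (8.7) established, Proposition 2.11 applied to the $Al$-operator $T_r$ of $As(A)$ associated to the bimodule $(A^*,\mathfrak{r}_{\prec}^*,\mathfrak{l}_{\succ}^*)$ yields at once the pre-alternative algebra structure on $A^*$ given by $a^*\prec b^*=\mathfrak{l}_{\succ}^*(T_r(b^*))a^*$ and $a^*\succ b^*=\mathfrak{r}_{\prec}^*(T_r(a^*))b^*$, which is (8.8).

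Finally, to identify the associated alternative algebra of (8.8) with the one given by (8.1)--(8.2): the product of (8.8) is
$$a^*\prec b^*+a^*\succ b^*=\mathfrak{l}_{\succ}^*(T_r(b^*))a^*+\mathfrak{r}_{\prec}^*(T_r(a^*))b^*,$$
whereas adding (8.1) and (8.2) and using $\mathfrak{l}_{\circ}^*=\mathfrak{l}_{\prec}^*+\mathfrak{l}_{\succ}^*$, $\mathfrak{r}_{\circ}^*=\mathfrak{r}_{\prec}^*+\mathfrak{r}_{\succ}^*$ gives
$$(\mathfrak{l}_{\circ}^*-\mathfrak{l}_{\prec}^*)(T_r(b^*))a^*+(\mathfrak{r}_{\circ}^*-\mathfrak{r}_{\succ}^*)(T_r(a^*))b^*=\mathfrak{l}_{\succ}^*(T_r(b^*))a^*+\mathfrak{r}_{\prec}^*(T_r(a^*))b^*,$$
which is the same expression. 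Hence the two (in general distinct) pre-alternative structures on $A^*$ have the same associated alternative algebra $As(A^*)$, as asserted.
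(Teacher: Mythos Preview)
Your proposal is correct and follows essentially the same approach as the paper, which simply states that the result follows from a similar proof as of Proposition 3.6. You have in fact spelled out more detail than the paper does, including the explicit verification that the associated alternative algebra of (8.8) agrees with the sum of (8.1) and (8.2), which the paper leaves implicit.
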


\begin{proof} It follows from a similar proof as of Proposition 3.6.
\end{proof}

\begin{defn}
{\rm  Let $(A,\prec,\succ)$ be a pre-alternative algebra. A bilinear
form $\mathcal{B}:A\otimes A\rightarrow \textbf{k}$ is called a {\it
2-cocycle} of $A$ if
\begin{equation}\mathcal{B}(x\circ y,z)=\mathcal{B}(x,y\succ
z)+\mathcal{B}(y,z\prec x),\quad \forall x,y,z\in A.\end{equation}}
\end{defn}

\begin{prop}  Let $(A,\prec,\succ)$ be a pre-alternative algebra and
$(As(A),\circ)$ be the
   associated alternative algebra. Let $\mathcal B$ be a 2-cocycle
   of $(A,\prec,\succ)$. Then the bilinear form $\omega$ defined by
\begin{equation}
\omega(x,y)=\mathcal{B}(x,y)-\mathcal{B}(y,x),\quad \forall x,y\in
A,\end{equation} is a closed form on $As(A)$.
\end{prop}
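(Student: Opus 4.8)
The plan is to verify the closedness identity
\[
\omega(x\circ y,z)+\omega(y\circ z,x)+\omega(z\circ x,y)=0,\qquad\forall x,y,z\in A,
\]
directly from the $2$-cocycle condition (8.11). First I would record that $\omega$ is genuinely skew-symmetric, since $\omega(y,x)=\mathcal B(y,x)-\mathcal B(x,y)=-\omega(x,y)$, so that $\omega$ is of the type to which Definition 2.17 applies; the content is then exactly the displayed cyclic identity on $As(A)$.

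Next I would expand each $\omega(u,v)=\mathcal B(u,v)-\mathcal B(v,u)$ and split the cyclic sum as
\[
\big[\mathcal B(x\circ y,z)+\mathcal B(y\circ z,x)+\mathcal B(z\circ x,y)\big]-\big[\mathcal B(z,x\circ y)+\mathcal B(x,y\circ z)+\mathcal B(y,z\circ x)\big].
\]
To the three summands in the first bracket I apply (8.11), getting $\mathcal B(x,y\succ z)+\mathcal B(y,z\prec x)$, then $\mathcal B(y,z\succ x)+\mathcal B(z,x\prec y)$, and then $\mathcal B(z,x\succ y)+\mathcal B(x,y\prec z)$. Regrouping these six terms by their first argument and using $y\succ z+y\prec z=y\circ z$ together with its cyclic analogues, the first bracket collapses to $\mathcal B(x,y\circ z)+\mathcal B(y,z\circ x)+\mathcal B(z,x\circ y)$, which is precisely the second bracket (the same three summands in a different order). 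Hence the two brackets cancel and the cyclic sum vanishes.

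There is no genuine obstacle here: the computation is short, and the only point worth emphasizing is the structural observation that applying the $2$-cocycle condition termwise to a sum that is already invariant under the cyclic shift $(x,y,z)\mapsto(y,z,x)$ necessarily produces an expression with the same invariance, and it is this forced symmetry that makes the $\mathcal B(\,\cdot\,,\,\cdot\circ\cdot\,)$ terms cancel against the antisymmetrizing tail of $\omega$. In the write-up I would display the regrouping step explicitly, so that the reader sees the $\prec$ and $\succ$ pieces recombining into $\circ$, and leave the remaining identification as immediate.
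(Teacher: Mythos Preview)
Your argument is correct and is exactly the direct verification the paper has in mind when it writes ``Straightforward'': expand $\omega$ in terms of $\mathcal B$, apply the 2-cocycle identity to each of the three $\mathcal B(\,\cdot\circ\cdot\,,\,\cdot\,)$ terms, and observe that the resulting $\prec$ and $\succ$ pieces recombine into $\circ$ to cancel the subtracted bracket. One small fix: the 2-cocycle condition you are invoking is equation~(8.9) in the paper, not~(8.11); equation~(8.11) is the associativity-type condition on the bilinear form $h$ appearing later in Corollary~8.7.
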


\begin{proof} Straightforward.\end{proof}

\begin{prop}
 Let $(A,\prec,\succ)$ be a
pre-alternative algebra and $r\in A\otimes A$. Suppose that $r$ is
symmetric and nondegenerate. Then $r$ is a solution of one of
$PA_j^i$-equations ($i=1,2,j=1,2,3$) in $(A,\prec,\succ)$ if and
only if the (nondegenerate) bilinear form $\mathcal B$ induced by
$r$ through equation (1.10) is a 2-cocycle of
$(A,\prec,\succ)$.\end{prop}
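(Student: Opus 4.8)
The plan is to translate, via the operator reformulation already available, the $PA_j^i$-equations into the $2$-cocycle identity, using that nondegeneracy of $r$ makes $T_r:A^*\rightarrow A$ invertible.

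First I would note that since $r$ is symmetric and nondegenerate, $T_r$ is a linear isomorphism, and the bilinear form of equation (1.10) is $\mathcal B(u,v)=\langle T_r^{-1}(u),v\rangle$ for all $u,v\in A$, which is nondegenerate (and, by the symmetry of $r$, symmetric, though we will not need this last fact). By Proposition 8.3 (the characterization of symmetric solutions of the $PA_j^i$-equations as $Al$-operators) applied with the bimodule $(A^*,\mathfrak{r}_{\prec}^*,\mathfrak{l}_{\succ}^*)$, the element $r$ is a solution of one of the $PA_j^i$-equations if and only if $T_r$ satisfies equation (8.7). Because $T_r$ is bijective, every $x,y\in A$ can be written $x=T_r(a^*)$, $y=T_r(b^*)$ with $a^*=T_r^{-1}(x)$, $b^*=T_r^{-1}(y)$, so (8.7) is equivalent, after applying $T_r^{-1}$, to
\[
T_r^{-1}(x\circ y)=\mathfrak{r}_{\prec}^*(x)\,T_r^{-1}(y)+\mathfrak{l}_{\succ}^*(y)\,T_r^{-1}(x),\qquad\forall\,x,y\in A.
\]

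Next I would pair this identity against an arbitrary $z\in A$. Using the definition of the dual representation (equation (1.13)) one gets $\langle\mathfrak{r}_{\prec}^*(x)T_r^{-1}(y),z\rangle=\langle T_r^{-1}(y),z\prec x\rangle=\mathcal B(y,z\prec x)$ and $\langle\mathfrak{l}_{\succ}^*(y)T_r^{-1}(x),z\rangle=\langle T_r^{-1}(x),y\succ z\rangle=\mathcal B(x,y\succ z)$, while the left-hand side equals $\langle T_r^{-1}(x\circ y),z\rangle=\mathcal B(x\circ y,z)$. Since two functionals on $A$ agree precisely when they agree against every $z\in A$, the displayed operator identity holds if and only if $\mathcal B(x\circ y,z)=\mathcal B(x,y\succ z)+\mathcal B(y,z\prec x)$ for all $x,y,z\in A$, which is exactly equation (8.9), the defining property of a $2$-cocycle. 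Chaining these equivalences yields the proposition.

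The proof is entirely formal; the only place demanding care --- and the likeliest source of a sign or transpose error --- is the bookkeeping with the duality conventions: one must use the dual representation $\rho^*$ in the sense of equation (1.13) (not the linear-algebra dual of equation (1.11)), and remember that $\mathcal B(u,v)=\langle T_r^{-1}(u),v\rangle$ rather than $\langle T_r^{-1}(v),u\rangle$, the two agreeing here because $r$ is symmetric.
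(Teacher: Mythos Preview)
Your proof is correct and follows essentially the same strategy as the paper: both arguments exploit the invertibility of $T_r$ and Proposition~8.3 to pass between the $PA_j^i$-equations and the $2$-cocycle identity. The only organizational difference is that the paper pairs the tensor expansion of $r$ directly against $u^*\otimes v^*\otimes w^*$ to land on one specific equation ($PA_1^2$), and then invokes Proposition~8.3, whereas you invoke Proposition~8.3 first to obtain the $Al$-operator identity (8.7) and then pair against $z$; your route is slightly more streamlined but not substantively different.
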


\begin{proof}
 Let
$r=\sum_ia_i\otimes b_i$. Since $r$ is symmetric, we have
$\sum_ia_i\otimes b_i=\sum_ib_i\otimes a_i$. Therefore
$T_r(v^*)=\sum_i\langle v^*,a_i\rangle b_i=\sum_i\langle
v^*,b_i\rangle a_i$ for any $v^*\in A^*$. Since $r$ is
nondegenerate, for any $x,y,z\in A$, there exist $u^*,v^*,w^*\in
A^*$ such that $x=T_r(u^*),y=T_r(v^*),z=T_r(w^*)$. Therefore

{\small \begin{eqnarray*} \mathcal{B}(x,z\circ y)&=&\langle
u^*,T_r(w^*)\circ T_r(v^*)\rangle =\sum_{i,j}\langle w^*,b_i\rangle
\langle v^*,b_j\rangle \langle
u^*,a_i\circ a_j\rangle=\langle u^*\otimes v^*\otimes w^*,r_{13}\circ r_{12}\rangle, \\
\mathcal{B}(y,x\prec z\rangle &=&\langle v^*,T_r(u^*)\prec
T_r(w^*)\rangle =\sum_{i,j}\langle u^*,b_i\rangle \langle
w^*,b_j\rangle \langle v^*,a_i\prec a_j\rangle =\langle u^*\otimes v^*\otimes w^*,r_{12}\prec r_{23}\rangle, \\
\mathcal{B}(y\succ x,z)&=&\langle T_r(v^*)\succ T_r(u^*),w^*\rangle
=\sum_{i,j}\langle v^*,b_i\rangle \langle u^*,b_j\rangle \langle
w^*,a_i\succ a_j\rangle=\langle u^*\otimes v^*\otimes
w^*,r_{23}\succ r_{13}\rangle.
\end{eqnarray*}}
Hence $\mathcal B$ is a 2-cocycle of $(A,\prec,\succ)$ if and only
if equation (7.8) holds. By Proposition 8.3, the conclusion follows.
\end{proof}

\begin{coro} Let $(A,\prec,\succ)$ be a pre-alternative algebra
and $r\in A\otimes A$. Assume $r$ is symmetric and there exists a
nondegenerate symmetric bilinear form $h(x,y)$ on $A$ which is
associative in the sense that
\begin{equation}
h(x\prec y,z)=h(x,y\succ z),\;\;\forall x,y,z\in A.
\end{equation}
 Define a linear map $\varphi:A\rightarrow A^*$ by $\langle
\varphi(x),y\rangle =h(x,y)$. Then
$\tilde{T}_r=T_r\varphi:A\rightarrow A$ is an $Al$-operator
associated to the bimodule
$(A,\mathfrak{l}_{\succ},\mathfrak{r}_{\prec})$ if and only if $r$
is a symmetric solution of $PA$-equations. In this case,
$\tilde{T}_r$ satisfies the following equation:
\begin{equation}
\tilde{T}_r(x)\circ \tilde{T}_r(y)=\tilde{T}_r(\tilde{T}_r(x)\succ
y+x\prec \tilde{T}_r(y)).\end{equation} So we can define a
pre-alternative algebra structure on $A$ by \begin{equation}x\prec
y=x\prec \tilde{T}_r(y),\quad  x\succ y=\tilde{T}_r(x)\succ
y.\end{equation}
\end{coro}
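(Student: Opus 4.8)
The plan is to mimic the proof of the analogous statement for alternative algebras in Section~3 (the corollary showing, for skew-symmetric $r$ and a nondegenerate symmetric \emph{invariant} form $\mathcal B$, that $\tilde T_r=T_r\varphi$ is an $Al$-operator for $(A,\mathfrak{l}_{\circ},\mathfrak{r}_{\circ})$ iff $r$ solves the alternative Yang--Baxter equation), with Proposition~3.6 there replaced by Proposition~8.3 here and the invariant form replaced by the associative form $h$. The first step is to record how $\varphi$ intertwines the relevant bimodule operators. Using the associativity $h(x\prec y,z)=h(x,y\succ z)$ together with the symmetry of $h$, one gets for all $x,y,z\in A$
\[
\langle \varphi(x\succ y),z\rangle=h(x\succ y,z)=h(z,x\succ y)=h(z\prec x,y)=h(y,z\prec x)=\langle \mathfrak{r}_{\prec}^*(x)\varphi(y),z\rangle ,
\]
and directly from associativity
\[
\langle \varphi(y\prec x),z\rangle=h(y\prec x,z)=h(y,x\succ z)=\langle \mathfrak{l}_{\succ}^*(x)\varphi(y),z\rangle .
\]
Hence $\varphi\,\mathfrak{l}_{\succ}(x)=\mathfrak{r}_{\prec}^*(x)\,\varphi$ and $\varphi\,\mathfrak{r}_{\prec}(x)=\mathfrak{l}_{\succ}^*(x)\,\varphi$ for every $x\in A$; since $h$ is nondegenerate (and everything is finite-dimensional) $\varphi$ is a linear isomorphism, so $\tilde{T}_r=T_r\varphi$ is well defined on all of $A$.

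Next I would substitute $a^*=\varphi(x)$ and $b^*=\varphi(y)$ into equation (8.7) of Proposition~8.3. Then $T_r(a^*)=\tilde{T}_r(x)$, $T_r(b^*)=\tilde{T}_r(y)$, and the two intertwining identities give $\mathfrak{r}_{\prec}^*(\tilde{T}_r(x))\varphi(y)=\varphi(\tilde{T}_r(x)\succ y)$ and $\mathfrak{l}_{\succ}^*(\tilde{T}_r(y))\varphi(x)=\varphi(x\prec\tilde{T}_r(y))$. Applying $T_r$ and using $T_r\varphi=\tilde{T}_r$, equation (8.7) turns into
\[
\tilde{T}_r(x)\circ\tilde{T}_r(y)=\tilde{T}_r\bigl(\tilde{T}_r(x)\succ y+x\prec\tilde{T}_r(y)\bigr),\qquad\forall x,y\in A,
\]
which is the first displayed identity of the statement and, by Definition~2.10 with $(V,L,R)=(A,\mathfrak{l}_{\succ},\mathfrak{r}_{\prec})$, says precisely that $\tilde{T}_r$ is an $Al$-operator associated to $(A,\mathfrak{l}_{\succ},\mathfrak{r}_{\prec})$. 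Because $\varphi$ is bijective, $a^*,b^*$ run over all of $A^*$ as $x,y$ run over $A$, so $\tilde{T}_r$ is such an $Al$-operator if and only if (8.7) holds; by Proposition~8.3 this is equivalent to $r$ being a symmetric solution of one, hence of all, of the $PA_j^i$-equations, i.e.\ of the $PA$-equations. This proves both the stated equivalence and the identity.

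For the remaining assertion I would invoke Proposition~2.11 applied to the $Al$-operator $\tilde{T}_r\colon A\to A$ associated to the bimodule $(A,\mathfrak{l}_{\succ},\mathfrak{r}_{\prec})$: it equips $A$ with the pre-alternative algebra structure $x\prec y=\mathfrak{r}_{\prec}(\tilde{T}_r(y))x=x\prec\tilde{T}_r(y)$ and $x\succ y=\mathfrak{l}_{\succ}(\tilde{T}_r(x))y=\tilde{T}_r(x)\succ y$, which is the last displayed identity of the statement.

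Since Propositions~8.3 and~2.11 do the real work, no serious obstacle is expected. The one place that needs attention is the bookkeeping for the intertwining identities — in particular the chain $h(x\succ y,z)=h(z,x\succ y)=h(z\prec x,y)$, which genuinely uses both the symmetry and the associativity of $h$ — together with keeping straight the order-reversal of dualization, namely that $\varphi$ conjugates $\mathfrak{l}_{\succ}$ to $\mathfrak{r}_{\prec}^*$ and $\mathfrak{r}_{\prec}$ to $\mathfrak{l}_{\succ}^*$.
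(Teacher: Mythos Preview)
Your proposal is correct and follows essentially the same approach as the paper, which simply says the proof is similar to that of Corollary~3.7; you have spelled out precisely that analogy, establishing the intertwining identities $\varphi\,\mathfrak{l}_{\succ}(x)=\mathfrak{r}_{\prec}^*(x)\,\varphi$ and $\varphi\,\mathfrak{r}_{\prec}(x)=\mathfrak{l}_{\succ}^*(x)\,\varphi$ from the symmetry and associativity of $h$ and then substituting into Proposition~8.3 exactly as Corollary~3.7 substitutes into Proposition~3.6. The additional appeal to Proposition~2.11 for the final pre-alternative structure is also in line with the paper's general framework.
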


\begin{proof}
It follows from a similar proof as of Corollary 3.7.\end{proof}

\begin{remark}{\rm In fact, a symmetric bilinear form on a pre-alternative algebra $(A,\prec,\succ)$ satisfying
equation (8.11) is a 2-cocycle of $(A,\prec,\succ)$.}
\end{remark}

By a similar proof as of Proposition 3.9, we have the following
conclusion:

\begin{prop}
 Let $(A,\circ)$ be an alternative algebra.
 Let $(V,L,R)$ be a bimodule of $A$ and
 $(V^*,R^*,L^*)$ be its dual bimodule. Suppose that $Al:V\rightarrow
 A$ is an $Al$-operator associated to $(V,L,R)$. Then $r=Al+\tau
 Al$ is a symmetric solution of $PA$-equations in $Al(V)\ltimes_{0,L^*,R^*,0}V^*$,
 where $Al(V)\subset A$ is a pre-alternative algebra given by equation
 (2.12) and  $(V^*,0,L^*,R^*,0)$ is a bimodule of $Al(V)$.
 \end{prop}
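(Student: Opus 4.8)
The plan is to reduce the claim, through Proposition 3.9's analogue for $PA$-equations (namely Proposition 8.3), to checking that one explicit linear map is an $Al$-operator. Write $B=Al(V)\ltimes_{0,L^*,R^*,0}V^*$ for the pre-alternative algebra in the statement and $\circ_B=\prec_B+\succ_B$ for its associated alternative product. Fix a basis $\{v_1,\dots,v_m\}$ of $V$ with dual basis $\{v_1^*,\dots,v_m^*\}$ and, as in the proof of Proposition 3.9, identify $Al\colon V\to Al(V)$ with $\sum_i Al(v_i)\otimes v_i^*\in Al(V)\otimes V^*\subset B\otimes B$; then $r=Al+\tau Al=\sum_i\bigl(Al(v_i)\otimes v_i^*+v_i^*\otimes Al(v_i)\bigr)$, which is symmetric. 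A short computation with the canonical pairing between $B$ and $B^*=Al(V)^*\oplus V$ shows that the induced map $T_r\colon B^*\to B$ is $T_r(\zeta+w)=Al^*(\zeta)+Al(w)$ for $\zeta\in Al(V)^*$ and $w\in V$, where $Al^*\colon Al(V)^*\to V^*$ is the transpose of $Al$. By Proposition 8.3 it then suffices to show that $T_r$ is an $Al$-operator of $As(B)$ associated to $(B^*,\mathfrak{r}_{\prec_B}^*,\mathfrak{l}_{\succ_B}^*)$, since that forces $r$ to solve all six $PA_j^i$-equations at once.

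The next step is to record the multiplication table of $B$. From equation (5.11) with $(L_\prec,R_\prec,L_\succ,R_\succ)=(0,L^*,R^*,0)$ one gets, for $x\in Al(V)$ and $a^*,b^*\in V^*$, that $a^*\prec_B x=L^*(x)a^*$, $x\succ_B a^*=R^*(x)a^*$, and $x\prec_B a^*=a^*\succ_B x=a^*\prec_B b^*=a^*\succ_B b^*=0$, while $\prec_B,\succ_B$ restricted to $Al(V)$ are the induced pre-alternative products of equation (2.12). By Proposition 2.11 the restriction of $\circ_B$ to $Al(V)$ is the subalgebra product of $(A,\circ)$, so the $Al$-operator identity (2.10) gives $Al(u)\circ_B Al(w)=Al\bigl(L(Al(u))w+R(Al(w))u\bigr)$, and together with the above one reads off $x\circ_B a^*=R^*(x)a^*$, $a^*\circ_B x=L^*(x)a^*$, $a^*\circ_B b^*=0$; that is, $As(B)=Al(V)\ltimes_{R^*,L^*}V^*$. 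With this table, the $Al$-operator identity $T_r(\phi)\circ_B T_r(\psi)=T_r\bigl(\mathfrak{r}_{\prec_B}^*(T_r(\phi))\psi+\mathfrak{l}_{\succ_B}^*(T_r(\psi))\phi\bigr)$ splits into four cases according to the $Al(V)^*$/$V$ components of $\phi,\psi$: with both arguments in $V$ it is exactly (2.10); with both in $Al(V)^*$ both sides vanish since $a^*\circ_B b^*=0$; and in each mixed case both sides unwind, using the definitions of $L^*,R^*$ and of the induced structure on $Al(V)$, to $L^*(Al(w))Al^*(\zeta)$ (respectively $R^*(Al(w))Al^*(\zeta)$). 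Alternatively --- and this is the route literally modelled on the proof of Proposition 3.9 --- one substitutes $r$ into each $PA_j^i$-equation, expands every product $r_{ab}\star r_{cd}$ into the four terms coming from the two components of $r$, and checks that the terms with a leg in $Al(V)$ reassemble via (2.12) into a vanishing pre-alternative combination in $Al(V)$, while the terms with two legs in $V^*$ cancel because $a^*\prec_B b^*=a^*\succ_B b^*=0$.

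The only genuinely delicate point is that the induced pre-alternative structure on $Al(V)$ is well defined, i.e. that $\ker Al$ is an ideal of $(V,\prec,\succ)$; this is forced by substituting $v\in\ker Al$ into (2.10), which yields $Al(L(Al(w))v)=Al(R(Al(w))v)=0$ for all $w\in V$. Everything else is bookkeeping: tracking $Al(V)\oplus V^*$ components, using $\sum_i\langle v_i^*,-\rangle v_i={\rm id}_V$ and the finite-dimensional identification $V^{**}=V$, and --- in the direct approach --- matching which triples of terms in each $PA_j^i$-equation annihilate one another, exactly as in Proposition 3.9. I would present one representative case in full (say $PA_1^1=0$, or the both-in-$V$ case of the $Al$-operator identity) and note that the remaining cases are entirely analogous.
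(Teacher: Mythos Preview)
Your proposal is correct. The paper's own proof is a one-liner --- ``by a similar proof as of Proposition 3.9'' --- meaning the intended argument is to substitute $r=\sum_i(Al(v_i)\otimes v_i^*+v_i^*\otimes Al(v_i))$ directly into each $PA_j^i$-expression, expand the four cross-terms coming from the two summands of $r$, and watch everything collapse using the $Al$-operator identity (2.10) exactly as in Proposition 3.9. You sketch this as your secondary route.

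Your primary route is genuinely different: you invoke Proposition 8.3 to reduce everything to checking that $T_r(\zeta+w)=Al^*(\zeta)+Al(w)$ is an $Al$-operator of $As(B)$ associated to $(B^*,\mathfrak r_{\prec_B}^*,\mathfrak l_{\succ_B}^*)$, and then verify that single identity componentwise on $Al(V)^*\oplus V$. This is cleaner and more conceptual than the paper's approach: instead of six separate tensor computations you have one operator identity split into four cases, two of which are trivial (both sides vanish) and one of which is literally (2.10). The price is a small amount of bookkeeping to identify $T_r$ and to compute $\mathfrak r_{\prec_B}^*,\mathfrak l_{\succ_B}^*$ on each summand of $B^*$, but you have done that correctly. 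The remark about $\ker Al$ being an ideal is accurate and is implicitly what makes equation (2.12) well defined in Proposition 2.11; it does no harm to make it explicit. Either approach is acceptable, and in fact your operator-theoretic route makes the parallel with Propositions 3.6/3.9 and 8.3 more transparent than the paper's terse reference does.
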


\begin{coro}
Let $(A,\prec,\succ)$ be a pre-alternative algebra. Then
\begin{equation}
r=\sum_i(e_i\otimes e_i^*+e_i^*\otimes e_i)\end{equation} is a
symmetric solution of $PA$-equations in
$A\ltimes_{0,\mathfrak{l}_{\succ}^*,\mathfrak{r}_{\prec}^*,0}A^*$,
where $\{e_1,...,e_n\}$ is a basis of $A$ and $\{e_1^*,...,e_n^*\}$
is its dual basis. Moreover $r$ is nondegenerate and the induced
$2$-cocycle $\mathcal{B}$ of
$A\ltimes_{0,\mathfrak{l}_{\succ}^*,\mathfrak{r}_{\prec}^*,0}A^*$ is
given by equation (3.2).
\end{coro}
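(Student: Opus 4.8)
The plan is to obtain this corollary from Proposition 8.9 (that if $Al$ is an $Al$-operator then $r=Al+\tau Al$ is a symmetric solution of the $PA$-equations) by exhibiting a concrete $Al$-operator. In Proposition 8.9 take the alternative algebra to be the associated alternative algebra $(As(A),\circ)$ of the given pre-alternative algebra, and take the bimodule $(V,L,R)$ to be $(A,\mathfrak{l}_{\succ},\mathfrak{r}_{\prec})$, which is a bimodule of $As(A)$ by Proposition 2.9; then $(V^{*},R^{*},L^{*})=(A^{*},\mathfrak{r}_{\prec}^{*},\mathfrak{l}_{\succ}^{*})$. First I would check that ${\rm id}:A\rightarrow A$ is an $Al$-operator associated to $(A,\mathfrak{l}_{\succ},\mathfrak{r}_{\prec})$: indeed ${\rm id}(u)\circ{\rm id}(v)=u\circ v=u\succ v+u\prec v=\mathfrak{l}_{\succ}({\rm id}(u))v+\mathfrak{r}_{\prec}({\rm id}(v))u$, which is exactly equation (2.10) (this is already used in the proof of Corollary 3.15). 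Note also that $Al(V)={\rm id}(A)=A$, that the pre-alternative structure which Proposition 2.11 induces on $Al(V)$ is the original $(\prec,\succ)$, and that $(A^{*},0,\mathfrak{l}_{\succ}^{*},\mathfrak{r}_{\prec}^{*},0)$ is a bimodule of $(A,\prec,\succ)$ (see Proposition 5.2 and Example 5.3).

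Next I would match $r=Al+\tau Al$ with the element appearing in the statement. Regarding a linear map as a tensor through equation (1.13), the identity map ${\rm id}:A\rightarrow A$, viewed inside $(A\oplus A^{*})\otimes(A\oplus A^{*})$, corresponds to $\sum_{i}e_{i}\otimes e_{i}^{*}$, so $\tau\,{\rm id}=\sum_{i}e_{i}^{*}\otimes e_{i}$ and $r={\rm id}+\tau\,{\rm id}=\sum_{i}(e_{i}\otimes e_{i}^{*}+e_{i}^{*}\otimes e_{i})$. Proposition 8.9 then immediately gives that this $r$ is a symmetric solution of the $PA$-equations in $A\ltimes_{0,\mathfrak{l}_{\succ}^{*},\mathfrak{r}_{\prec}^{*},0}A^{*}$.

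For the two remaining assertions I would compute directly. Using the canonical identification $(A\oplus A^{*})^{*}=A^{*}\oplus A$, a short computation with $r=\sum_{i}(e_{i}\otimes e_{i}^{*}+e_{i}^{*}\otimes e_{i})$ shows that the linear map $T_{r}:A^{*}\oplus A\rightarrow A\oplus A^{*}$ attached to $r$ by equation (1.9) is $T_{r}(p+z)=z+p$, the interchange of the two summands; this is invertible, so $r$ is nondegenerate. Hence $T_{r}^{-1}(x+a^{*})=a^{*}+x$, and the bilinear form induced by $r$ through equation (1.10) is $\mathcal{B}(x+a^{*},y+b^{*})=\langle T_{r}^{-1}(x+a^{*}),\,y+b^{*}\rangle=\langle a^{*},y\rangle+\langle x,b^{*}\rangle$, which is exactly equation (3.2). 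Finally, since $r$ is symmetric, nondegenerate and (by the previous step) a solution of the $PA$-equations, Proposition 8.6 shows that $\mathcal{B}$ is a $2$-cocycle of $A\ltimes_{0,\mathfrak{l}_{\succ}^{*},\mathfrak{r}_{\prec}^{*},0}A^{*}$.

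There is no genuinely difficult step here: the statement is essentially Proposition 8.9 specialized to $Al={\rm id}$, plus bookkeeping. The only point that needs care — hence the closest thing to an obstacle — is keeping the various identifications straight (a linear map versus its element of $(A\oplus A^{*})^{\otimes 2}$ via (1.13), and the description of $(A\oplus A^{*})^{*}$ as $A^{*}\oplus A$ with its canonical pairing), so that the $r$, $T_{r}$ and $\mathcal{B}$ one computes really coincide with $\sum_{i}(e_{i}\otimes e_{i}^{*}+e_{i}^{*}\otimes e_{i})$ and with equation (3.2) as claimed.
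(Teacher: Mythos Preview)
Your proposal is correct and follows essentially the same approach as the paper: specialize Proposition~8.9 with $V=A$, $(L,R)=(\mathfrak{l}_{\succ},\mathfrak{r}_{\prec})$ and $Al={\rm id}$. You supply more detail than the paper's one-line proof, in particular the explicit verification of nondegeneracy and the identification of the induced bilinear form with equation~(3.2), which the paper leaves implicit.
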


\begin{proof} It follows by
letting $V=A, (L,R)=(\mathfrak{l}_{\succ},\mathfrak{r}_{\prec})$ and
$Al={\rm id}$ in Proposition 8.9.\end{proof}

\begin{coro}
Let $(A,\prec,\succ)$ be a pre-alternative algebra and
$(As(A),\circ)$ be the associated alternative algebra. If $r$ is a
nondegenerate symmetric solution of $PA$-equations in $A$, then
there is a new compatible pre-alternative algebra structure on
$As(A)$ given by
\begin{equation}
x\prec'y=T_r(\mathfrak{l}_{\succ}^*(y)T_r^{-1}(x)),\quad x\succ'y=
T_r(\mathfrak{r}_{\prec}^*(x)T_r^{-1}(y)),\;\; \forall x,y\in
A,\end{equation} which is just the pre-alternative algebra structure
given by
\begin{equation}\mathcal{B}(x\prec'y,z)=\mathcal{B}(x,y\succ
z),\quad \mathcal{B}(x\succ'y,z)=\mathcal{B}(y,z\prec x),\;\;
\forall x,y,z\in A,\end{equation} where $\mathcal B$ is the
nondegenerate symmetric 2-cocycle of $A$ induced by $r$  through
equation (1.10).
\end{coro}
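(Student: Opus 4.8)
The plan is to reduce the statement to an application of Corollary 8.7 (in the symmetric/2-cocycle guise) together with Proposition 2.16 transported through the linear isomorphism $T_r$. The claim has two halves: first, that equation (8.17) defines a compatible pre-alternative algebra structure on $As(A)$; second, that this structure coincides with the one characterized by the cocycle identities (8.18). Since $r$ is a nondegenerate symmetric solution of $PA$-equations, Proposition 8.8 tells us that the induced bilinear form $\mathcal B$ (via (1.10)) is a nondegenerate symmetric $2$-cocycle of $(A,\prec,\succ)$, and Proposition 8.3 tells us that $T_r$ is then an $Al$-operator of $As(A)$ associated to the bimodule $(A^*,\mathfrak{r}_{\prec}^*,\mathfrak{l}_{\succ}^*)$. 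This is exactly the setup needed to feed into Proposition 2.13: an invertible $Al$-operator produces a compatible pre-alternative algebra structure on $As(A)$ through equation (2.15), and substituting $L=\mathfrak{r}_{\prec}^*$, $R=\mathfrak{l}_{\succ}^*$ into (2.15) gives precisely (8.17).

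For the second half, I would compute $\mathcal B(x\prec'y,z)$ directly using $\mathcal B(u,v)=\langle T_r^{-1}u,v\rangle$ and the definition of the dual representation. Writing $a^*=T_r^{-1}(x)$, $b^*=T_r^{-1}(y)$, we have $x\prec'y=T_r(\mathfrak{l}_{\succ}^*(y)a^*)$, so $\mathcal B(x\prec'y,z)=\langle \mathfrak{l}_{\succ}^*(T_r(b^*))a^*,z\rangle=\langle a^*,\mathfrak{l}_{\succ}(T_r(b^*))z\rangle=\langle a^*, z\prec (T_r(b^*)\circ\,?)\rangle$ — more precisely one uses $\mathfrak{l}_{\succ}(w)z$ relating to the product; here one must be careful that $\mathfrak{l}_{\succ}^*$ is the dual of $\mathfrak{l}_{\succ}$ acting via (1.15), so that $\langle a^*,\mathfrak{l}_{\succ}(y)z\rangle=\langle a^*,y\succ z\rangle$, but the relevant operator is actually the one making $(A^*,\mathfrak{r}_{\prec}^*,\mathfrak{l}_{\succ}^*)$ a bimodule, meaning one should track which slot $y$ occupies. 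Using the symmetry of $r$ (hence of $T_r$, i.e. $\langle T_r^{-1}u,v\rangle=\langle u, T_r^{-1}v\rangle$), one rewrites everything to land on $\mathcal B(x,y\succ z)$. The computation for $\succ'$ is analogous, using the $\mathfrak{r}_{\prec}^*$ component and the relation $\langle b^*, z\prec x\rangle$-type rearrangement.

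**The main obstacle** I anticipate is purely bookkeeping: keeping straight the three different "dual" conventions in play — the dual map of (1.11)/(1.15), the dual bimodule $(V^*,R^*,L^*)$ of Proposition 2.4, and the identification (1.10) of $r$ with $T_r$ — and verifying that the symmetry of $r$ gives exactly the transpositions of arguments needed to match (8.18). In particular one must confirm that $T_r$ being symmetric means $T_r^* = T_r$ under the identification $A^{**}=A$, so that $\langle u, T_r^{-1}v\rangle = \langle T_r^{-1}u, v\rangle$; this is what converts the "$a^*$ on the left" expressions into "$b^*$ on the right" expressions and ultimately yields the asymmetric-looking cocycle conditions $\mathcal B(x\prec'y,z)=\mathcal B(x,y\succ z)$ and $\mathcal B(x\succ'y,z)=\mathcal B(y,z\prec x)$. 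Once the conventions are fixed the algebra is routine, and the compatibility (that $\prec'+\succ'$ recovers the alternative product $\circ$) follows from the $Al$-operator identity (2.11) applied to $T_r$, exactly as in the proof of Proposition 2.16. So the structure of the write-up is: invoke Proposition 8.8 to get the $2$-cocycle, invoke Proposition 8.3 (or equivalently Proposition 2.13) to get the $Al$-operator and hence the structure (8.17), then do the short dual-pairing computation to identify (8.17) with (8.18).
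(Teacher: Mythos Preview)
Your approach is correct and is essentially the route the paper intends: the paper states Corollary~8.11 without proof, as an immediate consequence of Proposition~8.3 (symmetric solution of $PA$-equations $\Leftrightarrow$ $T_r$ is an $Al$-operator of $As(A)$ associated to $(A^*,\mathfrak{r}_{\prec}^*,\mathfrak{l}_{\succ}^*)$), Proposition~8.6 (nondegenerate symmetric solution $\Leftrightarrow$ the induced $\mathcal B$ is a $2$-cocycle), and Proposition~2.13 (an invertible $Al$-operator gives a compatible pre-alternative structure via (2.15), which specializes to (8.17)).

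Two small corrections. First, your numbering is off: the ``$2$-cocycle $\Leftrightarrow$ solution'' result is Proposition~8.6, not~8.8, and the reference to Proposition~2.16 is misplaced (that concerns symplectic, i.e.\ skew-symmetric, forms); Proposition~2.13 is the correct input, and you do invoke it. Second, the dual-pairing computation you worry about is cleaner than your sketch suggests: with $\mathcal B(u,v)=\langle T_r^{-1}u,v\rangle$ one has directly
\[
\mathcal B(x\prec' y,z)=\langle \mathfrak{l}_{\succ}^*(y)T_r^{-1}(x),z\rangle
=\langle T_r^{-1}(x),y\succ z\rangle=\mathcal B(x,y\succ z),
\]
and similarly $\mathcal B(x\succ' y,z)=\langle \mathfrak{r}_{\prec}^*(x)T_r^{-1}(y),z\rangle=\langle T_r^{-1}(y),z\prec x\rangle=\mathcal B(y,z\prec x)$. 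No symmetry of $T_r$ is needed for this identification; symmetry enters only through Propositions~8.3 and~8.6.
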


\begin{prop}
 Let $(A,\prec,\succ,\alpha,\beta)$ be a
 pre-alternative bialgebra arising from a symmetric solution $r$ of
 $PA$-equations and  $A^*,-\mathfrak{r}_{\succ}^*,\mathfrak{l}_{\circ}^*,
 \mathfrak{r}_{\circ}^*,-\mathfrak{l}_{\prec}^*,-\mathfrak{r}_{\succ_*}^*,
 \mathfrak{l}_{\ast}^*,\mathfrak{r}_{\ast}^*,-\mathfrak{l}_{\prec_*}^*)$
 be the corresponding matched pair of pre-alternative algebras.

 ${\rm (1)}$ $A\bowtie_{-\mathfrak{r}_{\succ}^*,\mathfrak{l}_{\circ}^*,
 \mathfrak{r}_{\circ}^*,-\mathfrak{l}_{\prec}^*}^{-\mathfrak{r}_{\succ_*}^*,
 \mathfrak{l}_{\ast}^*,\mathfrak{r}_{\ast}^*,-\mathfrak{l}_{\prec_*}^*}A^*$
 is isomorphic to
 $A\ltimes_{-\mathfrak{r}_{\succ}^*,\mathfrak{l}_{\circ}^*,
 \mathfrak{r}_{\circ}^*,-\mathfrak{l}_{\prec}^*}A^*$
 as pre-alternative algebras.

 ${\rm (2)}$ The symmetric solutions of $PA$-equations are in one-to-one correspondence
 with linear maps $T_r:A^*\rightarrow A$ whose graphs are Lagrangian
  pre-alternative subalgebras (with respect to the bilinear form $(3.11)$) of  $A\ltimes_{-\mathfrak{r}_{\succ}^*,\mathfrak{l}_{\circ}^*,
 \mathfrak{r}_{\circ}^*,-\mathfrak{l}_{\prec}^*}A^*$.
\end{prop}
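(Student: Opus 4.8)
The plan is to imitate the proof of Proposition 3.14, replacing the alternative Yang--Baxter equation by the $PA$-equations, the alternative semidirect sum by the pre-alternative one, and the single product $\circ$ by the pair $(\prec,\succ)$.

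For part (1), I would define the linear map
\[
\lambda:\mathcal{PAD}(A)=A\bowtie A^*\longrightarrow
A\ltimes_{-\mathfrak{r}_{\succ}^*,\mathfrak{l}_{\circ}^*,\mathfrak{r}_{\circ}^*,-\mathfrak{l}_{\prec}^*}A^*,
\qquad \lambda(x+a^*)=(T_r(a^*)+x,\,a^*),
\]
which is plainly bijective, and verify that it carries $\prec_{\bullet},\succ_{\bullet}$ to the products $\prec',\succ'$ of the semidirect sum. This is checked in the four cases $(x,y),(x,a^*),(a^*,x),(a^*,b^*)$ with $x,y\in A$, $a^*,b^*\in A^*$, using the explicit formulas of Proposition 8.2 for $\prec_{\bullet},\succ_{\bullet}$. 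The case $x,y\in A$ is trivial since $\lambda|_A$ is the identity; the case $a^*,b^*\in A^*$ reduces on the $A^*$-component to equations (8.1)--(8.2) and on the $A$-component to $T_r(a^*\prec_*b^*)=T_r(a^*)\prec T_r(b^*)$ together with its $\succ$ analogue, which is the content of Proposition 7.10 (that $T_r$ is a homomorphism of pre-alternative bialgebras from the dual pre-alternative bialgebra of Example 6.6); in the two mixed cases the $A^*$-component is just the bimodule action $-\mathfrak{r}_{\succ}^*,\mathfrak{l}_{\circ}^*,\mathfrak{r}_{\circ}^*,-\mathfrak{l}_{\prec}^*$ defining the semidirect sum, while on the $A$-component the ``correction terms'' $T_r(\mathfrak{r}_{\succ}^*(x)a^*)$, $T_r(\mathfrak{l}_{\circ}^*(x)a^*)$, etc.\ appearing in Proposition 8.2 are exactly what is needed for the two sides to coincide. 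Hence $\lambda$ is an isomorphism of pre-alternative algebras.

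For part (2), one direction follows at once from (1): under $\lambda$ the pre-alternative subalgebra $A^*\subset\mathcal{PAD}(A)$ is sent to $graph(T_r)=\{(T_r(a^*),a^*):a^*\in A^*\}$, which is therefore a pre-alternative subalgebra of the semidirect sum; it is isotropic for $\omega_p$ (equation (3.11)) exactly because $\langle T_r(a^*),b^*\rangle=\langle a^*,T_r(b^*)\rangle$, i.e.\ because $r$ is symmetric, and it is Lagrangian because $\lambda(A)=A$ is a complementary isotropic subalgebra of the same dimension. Conversely, suppose $T:A^*\to A$ is a linear map whose graph is a Lagrangian pre-alternative subalgebra of $A\ltimes_{-\mathfrak{r}_{\succ}^*,\mathfrak{l}_{\circ}^*,\mathfrak{r}_{\circ}^*,-\mathfrak{l}_{\prec}^*}A^*$. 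The Lagrangian condition forces $T$ to be symmetric; writing out that $(T(a^*),a^*)\prec'(T(b^*),b^*)$ and $(T(a^*),a^*)\succ'(T(b^*),b^*)$ again lie in $graph(T)$ gives $T(a^*)\prec T(b^*)=T(-\mathfrak{r}_{\succ}^*(T(a^*))b^*+\mathfrak{l}_{\circ}^*(T(b^*))a^*)$ and $T(a^*)\succ T(b^*)=T(\mathfrak{r}_{\circ}^*(T(a^*))b^*-\mathfrak{l}_{\prec}^*(T(b^*))a^*)$, and adding these (using $\mathfrak{r}_{\circ}^*-\mathfrak{r}_{\succ}^*=\mathfrak{r}_{\prec}^*$ and $\mathfrak{l}_{\circ}^*-\mathfrak{l}_{\prec}^*=\mathfrak{l}_{\succ}^*$) yields precisely equation (8.7). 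By Proposition 8.3 this means $T=T_r$ for a symmetric $r$ solving one, hence all, of the $PA_j^i$-equations, i.e.\ a symmetric solution of the $PA$-equations; and $r\mapsto T_r$, $T\mapsto r$ are mutually inverse.

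The only laborious point is the case analysis in (1): one has to keep careful track of which bimodule ($-\mathfrak{r}_{\succ}^*,\mathfrak{l}_{\circ}^*,\mathfrak{r}_{\circ}^*,-\mathfrak{l}_{\prec}^*$) and which side each operator acts on, and check that the correction terms in Proposition 8.2 cancel as claimed. No input beyond Propositions 8.2, 8.3, 7.10 and the model Proposition 3.14 is needed; the statement is simply the pre-alternative shadow of Proposition 3.14.
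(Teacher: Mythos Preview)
Your proposal is correct and follows essentially the same approach as the paper, whose proof consists only of the sentence ``It follows from a similar proof as of Proposition 3.13'' (the proposition you refer to as 3.14). Your detailed elaboration---defining the same shifted map $\lambda(x+a^*)=(T_r(a^*)+x,a^*)$, checking it is a pre-alternative homomorphism case by case via Propositions 8.2 and 7.10, and then running the Lagrangian-graph argument with Proposition 8.3 supplying the equivalence with the $PA$-equations---is precisely what that one-line proof is pointing to.
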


\begin{proof} It follows from a similar proof as of Proposition
3.13. \end{proof}

\section{Comparison between alternative D-bialgebras and pre-alternative bialgebras}

The results in the previous sections allow us to compare the
alternative D-bialgebras (see Appendix A) and pre-alternative
bialgebras in terms of the following properties:
 matched pairs of alternative algebras, alternative algebra structures on the direct sum of the
alternative algebras in the matched pairs, bilinear forms on the
direct sum of the alternative algebras in the matched pairs, double
structures on the direct sum of the alternative algebras in the
matched pairs, algebraic equations associated to coboundary cases,
nondegenerate solutions, $Al$-operators of alternative algebras and
constructions from pre-alternative algebras. We list them in Table
1.

\begin{table}[t]\caption{Comparison between alternative D-bialgebras and pre-alternative bialgebras}
\begin{tabular}{|c|c|c|}
\hline Algebras & Alternative D-bialgebras & Pre-alternative
bialgebras\\\hline Matched pairs of alternative  &
$(A,A^*,\mathfrak{r}_{\circ}^*,
\mathfrak{l}_{\circ}^*,\mathfrak{r}_*^*,\mathfrak{l}_*^*)$ &
$(As(A),As(A^*),\mathfrak{r}_{\prec}^*,\mathfrak{l}_{\succ}^*$,
\\ algebras&& $\mathfrak{r}_{\prec_*}^*,
\mathfrak{l}_{\succ_*}^*)$\\\hline
Alternative algebra  structures on   & alternative & phase spaces\\
the direct sum of the alternative  & analogues of Manin&\\
algebras in the matched pairs&triples&\\\hline Bilinear forms on &
symmetric & skew-symmetric
\\\cline{2-3} the direct sum of the alternative & $\langle x+a^*,y+b^*\rangle  $ & $\langle x+a^*,y+b^*\rangle  $\\algebras in the
matched pairs &$=\langle x,b^*\rangle  +\langle a^*,y\rangle  $ &
$=-\langle x,b^*\rangle  +\langle a^*,y\rangle  $\\\cline{2-3}
&invariant & symplectic forms\\\hline Double structures on  &
Drinfeld's doubles & Drinfeld's symplectic \\ the direct sum of the
alternative&&doubles\\algebras in the matched pairs&&\\\hline
Algebraic equations associated & skew-symmetric solutions &
symmetric solutions
\\\cline{2-3}
to coboundary cases& alternative YBEs in & $PA$-equations in
\\&alternative algebras&pre-alternative algebras\\\hline Nondegenerate
solutions & symplectic forms of   & 2-cocycles of pre-alternative\\
&alternative algebras&algebras\\\hline $Al$-operators & associated
to $(\mathfrak{r}_{\circ}^*,\mathfrak{l}_{\circ}^*)$ & associated to
$(\mathfrak{r}_{\prec}^*,\mathfrak{l}_{\succ}^*)$\\\cline{2-3} &
\enspace skew-symmetric parts &\enspace symmetric parts
\\\hline
Constructions from & $r=\sum\limits_{i=1}^n (e_i\otimes
e_i^*-e_i^*\otimes e_i)$ & $r=\sum\limits_{i=1}^n (e_i\otimes
e_i^*+e_i^*\otimes e_i)$\\\cline{2-3} pre-alternative algebras &
induced bilinear
forms & induced bilinear forms\\
& $\langle x+a^*,y+b^*\rangle  $ & $\langle x+a^*,y+b^*\rangle  $\\
&$=-\langle x,b^*\rangle  +\langle a^*,y\rangle  $ & $=\langle
x,b^*\rangle  +\langle a^*,y\rangle  $\\\hline
\end{tabular}
\end{table}

From Table 1 we observe that there is a clear analogy between the
alternative D-bialgebras and pre-alternative bialgebras. Moreover,
due to the correspondences between certain symmetries and
skew-symmetries appearing in the table, we regard it as a kind of
duality.

\section*{Appendix: Another approach to alternative D-bialgebras}

\setcounter{equation}{0}
\renewcommand{\theequation}
{A.\arabic{equation}}

In this section we prove the main results of \cite{G} by using a
slightly different method (in fact, we will prove some results which
are a little stronger than \cite{G}). Moreover, there will be some
more results like  the ``Drinfeld's double" theorem for an
alternative bialgebra (Theorem A11) and a homomorphism property of
an alternative bialgebra (Theorem A12), which were not presented in
\cite{G}. We omit the proofs since they are quite similar to the
study of pre-alternative bialgebras.

\noindent{\bf Theorem A1} \quad {\it Let $(A,\circ)$ be an
alternative algebra and $(A^*,\star)$ be an alternative algebra
induced by a linear map $\Delta:A\rightarrow A\otimes A$. Then
$(A,\circ,\Delta)$ is an alternative $D$-bialgebras if and only if
$(A,A^*,\mathfrak{r}_{\circ}^*,\mathfrak{l}_{\circ}^*,\mathfrak{r}_{\star}^*,
\mathfrak{l}_{\star}^*)$ is a matched pair of alternative algebras.}

\noindent {\bf Theorem A2}\quad {\it Let $(A,\circ,\Delta)$ be an
alternative algebra equipped with a linear map $\Delta:A\rightarrow
A\otimes A$ such that $\Delta^*:A^*\otimes A^*\rightarrow A^*$
induces an alternative algebra structure on $A^*$. Then
$(A,A^*,\mathfrak{r}_{\circ},\mathfrak{l}_{\circ},\mathfrak{r}_{\star},
\mathfrak{l}_{\star})$ is a matched pair of alternative algebras if
and only if the following equations hold:
\begin{equation}\Delta(x\circ y)=(-\mathfrak{l}_{\circ}(x)\otimes 1+1\otimes
\mathfrak{Ass}_{\circ}(x))\Delta(y)+(\mathfrak{r}_{\circ}(y)\otimes
1)\Delta(x)+(\mathfrak{r}_{\circ}(y)\otimes
1-1\otimes\mathfrak{l}_{\circ}(y))\tau\Delta(x),\end{equation}
\begin{equation}\Delta(x\circ y)=(\mathfrak{Ass}_{\circ}(y)\otimes
1-1\otimes\mathfrak{r}_{\circ}(y))\Delta(x)+(1\otimes\mathfrak{l}_{\circ}(x))\Delta(y)
+(1\otimes\mathfrak{l}_{\circ}(x)-\mathfrak{r}_{\circ}(x)\otimes
1)\tau\Delta(y),\end{equation}\begin{equation}\Delta(x\circ y+y\circ
x)=(\mathfrak{r}_{\circ}(y)\otimes 1
+1\otimes\mathfrak{l}_{\circ}(y))\Delta(x)+(1\otimes\mathfrak{l}_{\circ}(x)+
\mathfrak{r}_{\circ}(x)\otimes
1)\Delta(y),\end{equation}\begin{equation}(\Delta+\tau\Delta)(x\circ
y)=(\mathfrak{r}_{\circ}(y)\otimes
1)\Delta(x)+(1\otimes\mathfrak{r}_{\circ}(y))\tau\Delta(x)+(\mathfrak{l}_{\circ}(x)\otimes
1)\tau\Delta(y)+(1\otimes\mathfrak{l}_{\circ}(x))\Delta(y),\end{equation}
where $x,y\in A$, the multiplication $\star$ is induced by $\Delta$
and
$\mathfrak{Ass}_{\circ}=\mathfrak{l}_{\circ}+\mathfrak{r}_{\circ}$.}

\noindent {\bf Remark A3}\quad  Note that equations (A.1) and (A.4)
have already appeared in \cite{G} (Lemma 2).

\noindent {\bf Definition A4}\quad An alternative bialgebra
$(A,\circ,\Delta)$ is called {\it coboundary} if there exists an
$r\in A\otimes A$ such that $\Delta$ is given by
\begin{equation}\Delta(x)=(\mathfrak{r}_{\circ}(x)\otimes
1-1\otimes\mathfrak{l}_{\circ}(x))r,\quad \forall x\in
A.\end{equation}

\noindent {\bf Remark A5}\quad Note that the above definition is as
the same as Definition 3.3.


\noindent {\bf Lemma A6}\quad {\it Let $A$ be a vector space. Then a
linear map $\Delta:A\rightarrow A\otimes A$ induces an alternative
algebra structure on $A^*$ if and only if the following two
equations hold: (for any $x,y\in A$)
\begin{equation}
(\Delta\otimes 1)\Delta(x)+(\tau\otimes 1)(\Delta\otimes
1)\Delta(x)=(1\otimes\Delta)\Delta(x)+(\tau\otimes
1)(1\otimes\Delta)\Delta(x),\end{equation}\begin{equation}
(\Delta\otimes 1)\Delta(x)+(1\otimes\tau)(\Delta\otimes
1)\Delta(x)=(1\otimes\Delta)\Delta(x)+(1\otimes\tau)(1\otimes\Delta)\Delta(x).\end{equation}}

\noindent {\bf Definition A7}\quad {\rm Let $(A,\circ)$ be an
alternative algebra.  The following equations are called {\it
$A_i$-equations} ($i=1,2$)  in $A$ respectively:
\begin{equation}A_1=r_{23}\circ r_{12}-r_{13}\circ r_{23}-r_{12}\circ
r_{13}=0,\end{equation}\begin{equation}A_2=r_{12}\circ
r_{23}-r_{23}\circ r_{13}-r_{13}\circ r_{12}=0.\end{equation}}

Note that $A_1$ given by equation (A.9) is exactly $C_A(r)$ given by
equation (3.5).

\noindent {\bf Proposition A8}\quad {\it Let $(A,\circ)$ be an
alternative algebra and $r\in A\otimes A$. If $r$ is skew-symmetric,
then $A_1$ and $A_2$  are the same. }

\noindent {\bf Proposition A9}\quad {\it Let $(A,\circ)$ be an
alternative algebra. Let $r\in A\otimes A$ be skew-symmetric. Define
a linear map $\Delta:A\rightarrow A\otimes A$ by equation (A.5).
Then $\Delta$ induces an alternative algebra structure on $A^*$ if
and only if the following equations hold:
\begin{equation} -(\mathfrak{r}_{\circ}(x)\otimes 1\otimes 1)A_1-(1\otimes
\mathfrak{r}_{\circ}(x)\otimes 1)A_2+(1\otimes
1\otimes\mathfrak{l}_{\circ}(x))(A_1+A_2)=0,\end{equation}
\begin{equation} -(\mathfrak{r}_{\circ}(x)\otimes 1\otimes 1)(A_1+A_2)+(1\otimes
\mathfrak{l}_{\circ}(x)\otimes 1)A_2+(1\otimes
1\otimes\mathfrak{l}_{\circ}(x))A_1=0,\end{equation} for any $x\in
A$.}

\noindent {\bf Theorem A10} (\cite{G}, Theorem 2)\quad{\it Let
$(A,\circ)$ be an alternative algebra and $r\in A\otimes A$. Let
$\Delta:A\rightarrow A$ be a linear map defined by equation (A.5).
If $r$ is a skew-symmetric solution of alternative Yang-Baxter
equation in $A$, then $(A,\circ,\Delta)$ is an alternative
bialgebra.}

\noindent {\bf Theorem A11}\quad {\it Let $(A,\circ,\Delta_A)$ be an
alternative bialgebra. Then there is a canonical alternative
bialgebra structure on $A\oplus A^*$ such that the inclusion
$i_1:A\rightarrow A\oplus A^*$ is a homomorphism of alternative
bialgebras, which the alternative bialgebra structure on $A$ is
given by $(A,\circ,-\Delta_A)$  and the inclusion
$i_2:A^*\rightarrow A\oplus A^*$ is a homomorphism of alternative
bialgebras, which the alternative bialgebra structure on $A^*$ is
given by $(A^*,\ast,\Delta_B)$, where ``$\ast$" is induced by
$\Delta_A$ and the alternative algebra structure ``$\circ$" on $A$
is induced by $\Delta_B:A^*\rightarrow A^*\otimes A^*$.}

\noindent {\bf Theorem A12}\quad {\it Let $(A,\circ,\Delta)$ be an
 alternative bialgebra arising from a solution $r$ of alternative Yang-Baxter equation in $A$.
 Then $T_r$ is a
homomorphism of alternative bialgebras from $(A^*,\ast,\delta)$ to
$(A,\circ,-\Delta)$, where ``$\ast$" is induced by $\Delta$ and the
alternative algebra structure ``$\circ$" on $A$ is induced by
$\delta:A^*\rightarrow A^*\otimes A^*$.}

\section*{Acknowledgements}

The authors thank Professor J.-L. Loday for useful suggestion. The
first author also thanks Professor V.N. Zhelyabin for kindly sending
his paper to him. This work was supported in part by the National
Natural Science Foundation of China (10621101), NKBRPC
(2006CB805905) and SRFDP (200800550015).

\end{document}